\theoremstyle{plain}% Theorem-like structures provided by amsthm.sty
\newtheorem{theorem}{Theorem}[section]
\newtheorem{lemma}[theorem]{Lemma}
\newtheorem{proposition}[theorem]{Proposition}
\theoremstyle{definition}
\newtheorem{definition}[theorem]{Definition}
\newtheorem{example}[theorem]{Example}
\theoremstyle{remark}
\newtheorem{notation}{Notation}
\newtheorem{note}{Note}
\newcommand{\multistructure}[0]{{multistructure}}
\newcommand{\Multistructure}[0]{{Multistructure}}
\newcommand{\multisemilattice}[0]{{multisemilattice}}
\newcommand{\meetmultisemilattice}[0]{{meet-\multisemilattice}}
\newcommand{\joinmultisemilattice}[0]{{join-\multisemilattice}}
\newcommand{\multilattice}[0]{{multilattice}}
\newcommand{\minf}[0]{{\text{minf}}} % TODO Maybe change notations
\newcommand{\msup}[0]{{\text{msup}}}
\newcommand{\figref}[1]{Fig.~\ref{#1}}
\newcommand{\lfunction}[5]{#1 : #2 \to #3 , #4 \mapsto #5}
\newcommand{\empha}[1]{\textbf{#1}}
\newcommand{\minimalhandle}[0]{minimal-handle}
\newcommand{\maximalhandle}[0]{maximal-handle}
\newcommand{\minimumhandle}[0]{minimum-handle}
\newcommand{\maximumhandle}[0]{maximum-handle}
\setlist{noitemsep,topsep=1pt}
\author{
\name{Aimene Belfodil\textsuperscript{a, c}\thanks{Contact Aimene Belfodil. Email: aimene.belfodil@insa-lyon.fr}, Sergei O. Kuznetsov\textsuperscript{b}, Mehdi Kaytoue\textsuperscript{a, d}}
\affil{\textsuperscript{a}Univ Lyon, INSA Lyon, CNRS, LIRIS UMR 5205, F-69621, LYON, France; \textsuperscript{b}National Research University Higher School of Economics, Moscow, Russia; \textsuperscript{c}Mobile Devices Ing{\'{e}}nierie, 100 Avenue Stalingrad, 94800, Villejuif, France; \textsuperscript{d}Infologic, 99 avenue de Lyon, 26500 Bourg-L\`es-Valence, France;}
}
\title{On Pattern Setups and Pattern \Multistructure s}
\begin{document}

\maketitle

\begin{abstract}
% LONG ABSTRACT
%Modern order and lattice theory provide convenient mathematical tools for pattern mining, in particular for condensed irredundant representations of pattern spaces and their efficient generation.
%Formal Concept Analysis (FCA) offers a generic framework, called \emph{pattern structures}, to formalize many types of patterns, such as itemsets, intervals, graph and sequence sets. Additionally, it provides generic algorithms to generate \emph{all} (closed) patterns and \emph{only} them. The only condition is that the pattern space is an (upper-bounded) meet-semilattice, which, unfortunately does not always hold. 
%(e.g., for sequential and graph patterns). 
%In this paper, we discuss \emph{pattern setups}, a tool that models pattern search spaces relying only on posets. Next, we show that such a framework can have some issues since it is too permissive and we propose the new model of \emph{pattern \multistructure s}; i.e. a model lying between pattern setups and pattern structures that rely on \emph{multilattices}.  
%Finally, we revisit some techniques transforming pattern setups to pattern structures using set of patterns, namely \emph{completions}, and we state \emph{a necessary and sufficient condition} for a pattern setup completion using \emph{antichains} to be a \emph{pattern structure}. %Finally, we propose at the end a discussion 
%
Modern order and lattice theory provides convenient mathematical tools for pattern mining, in particular for condensed irredundant representations of pattern spaces and their efficient generation. \emph{Formal Concept Analysis (FCA)} offers a generic framework, called \emph{pattern structures}, to formalize many types of patterns, such as itemsets, intervals, graph and sequence sets. Moreover, FCA provides generic algorithms to generate irredundantly all closed patterns, the only condition being that the pattern space is a meet-semilattice. This does not always hold, e.g., for sequential and graph patterns. Here, we discuss \emph{pattern setups} consisting of descriptions making just a partial order. Such a framework can be too broad, causing several problems, so we propose a new model, dubbed \emph{pattern multistructure}, lying between pattern setups and pattern structures, which relies on \emph{multilattices}. Finally, we consider some techniques, namely \emph{completions}, transforming pattern setups to pattern structures using sets/antichains of patterns.
\end{abstract}
\section{Introduction}

%\todoaimene{Write Introduction}
%\begin{comment}
Modern order and lattice theory provide convenient mathematical tools for pattern mining, in particular for condensed irredundant representations of pattern spaces and their efficient generation.
%Pattern mining is the task of finding interesting patterns in some predefined search space. 
Different formal tools has been proposed in the literature to model pattern spaces. Formal Concept Analysis (FCA - \cite{GanterW99}) has been proposed by \cite{wille1982restructuring} as a well-founded mathematical tool to models hierarchies of concepts related to some formal context (i.e. binary datasets).  While basic FCA provides a natural way to analyze binary datasets, datasets with more complex attributes (e.g. numerical or nominal ones) need to be transformed to such before any manipulation. This kind of transformation has been proposed by \cite{ganter1989conceptual} under the term of \emph{conceptual scaling} (i.e. \emph{binarizing}). Yet, even if \emph{conceptual scaling} is a quite general tool, \emph{binarizing} a dataset with regard to some pattern search spaces is not always obvious (i.e. \cite{DBLP:conf/cla/BaixeriesKN12}, \cite{DBLP:conf/ijcai/BelfodilKRK17}). In response to that, some other more natural tools to formalize complex pattern spaces has been proposed. One could cite Logical Concept Analysis (LCA) proposed by \cite{DBLP:conf/iccs/FerreR00} and Pattern Structures proposed by \cite{DBLP:conf/iccs/GanterK01}. 
Pattern Structures allow for instance to model in a quite natural way many pattern search spaces.  Indeed, itemsets, intervals \citep{DBLP:conf/ijcai/KaytoueKN11}, convex polygon \citep{DBLP:conf/ijcai/BelfodilKRK17}, partition \citep{DBLP:journals/amai/BaixeriesKN14} pattern spaces among others \citep{DBLP:conf/iccs/GanterK01, DBLP:conf/rsfdgrc/Kuznetsov09} can be modeled within the pattern structure framework.  
However, since pattern structures rely on meet-semilattices (i.e. conjunction of two patterns belongs to the pattern search spaces), some pattern spaces that are only partially ordered sets (posets) cannot be ``directly'' defined using such a framework. 
%The depicted dataset contains four objects described by a single numerical attribute \texttt{"value"} and labeled by a two-valued categorical attribute \texttt{"target"} $\in \{-,+\}$. 
%In order to find rules that induce the positive \texttt{"target"}; we consider, for the sake of simplicity, only descriptions of the form: 
%\begin{eqnarray*}
%\texttt{value} \lessgtr v & \textbf{with} & v \in \mathbb{R} \textbf{ and } \lessgtr \:\in\: \{=, \leq, \geq\}
%\end{eqnarray*}

Consider for instance the example dataset depicted in \figref{fig:introexample} containing $4$ objects described by attribute \texttt{"value"} and labeled positive or negative. 
We are interested by the task of finding ``good" rules $d \to +$ in this dataset with $d$ a description given by attribute \texttt{value}. Rather than considering the usual meet-semilattice of intervals as the one proposed by \cite{DBLP:conf/ijcai/KaytoueKN11}; descriptions $d$ are restrained to open intervals of the form $(v]$ and $[v)$ or singleton $\{v\} \subseteq \mathbb{R}$ (see \figref{fig:introexample} - right).
Patterns (descriptions) form together a poset $(\mathcal{D}, \supseteq)$ where $\supseteq$ is the \emph{subsumption order} (i.e. if $d_1$ subsumes $d_2$ then if pattern $d_2$ holds for an object $g$ then pattern $d_1$ hold to). However, $(\mathcal{D}, \supseteq)$ does not form a meet-semilattice. For instance, the set $\{\{3\},\{5\}\}$ does not have a meet, since lower bounds of $\{\{3\},\{5\}\}$ have two maximal elements w.r.t. $\supseteq$ (\emph{i.e.} $[3)$ and $(5]$). Hence, the description space does not induce a pattern structure~\citep{DBLP:conf/iccs/GanterK01}. It does form actually a \emph{pattern setup} (\cite{DBLP:conf/cla/LumpeS15}) which rely on a simple poset with no additional properties.

Description spaces like the one depicted in \figref{fig:introexample} are numerous. For instance, sequence of itemset patterns (\cite{agrawal1995mining}) ordered by \emph{``is subsequence of"} do not form a meet-semilattice \citep{DBLP:journals/ml/Zaki01, DBLP:conf/icfca/CodocedoBKBN17}. Sequential meet-semilattice in FCA \citep{DBLP:journals/ijgs/BuzmakovEJKNR16, DBLP:conf/icfca/CodocedoBKBN17} refers usually to \emph{set (i.e. conjunction) of sequences} rather than to the poset of \emph{sequences} itself. Same holds for the graph meet-semilattice from~\cite{DBLP:conf/pkdd/Kuznetsov99}. In general, the base pattern setup is transformed to a pattern structure using sets (i.e. conjunctions) of descriptions thus providing a richer pattern language. Such transformations are naturally called \emph{completions}, which refers to well-known completions in ordered sets and lattice theory, like Dedekind-McNeille, Alexandroff or Antichain completions~\citep{davey2002introduction, BoldiVigna2016}. For instance, in \figref{fig:introexample}, set of patterns $\{\texttt{value} \geq 3, \texttt{value} \leq 5\}$ which is equivalent to $3 \leq \texttt{value} \leq 5$ belongs to the antichain completion of $(D, \sqsubseteq)$ but and not to the base description language $(D, \sqsubseteq)$. 
%Such a description does not belong to the initial description poset, it belongs to its extension (completion).  

% <Remark>: can we maybe recall that the language with ALL intervals taking borders in the attribute domain FORM a meet-semi-lattice (known, e.g. REF), and you language with values and open intervals only DO NOT FORM a semi lattice.

% <Remark>: For the conclusion: after all, your language is the language of intervals with constraints (as in constraint-based pattern mining): here the constraints are, for an interval [c;d]: (i) c=d or (ii) c=min(attribute_dom) or (ii) d = max attribute_domain.n

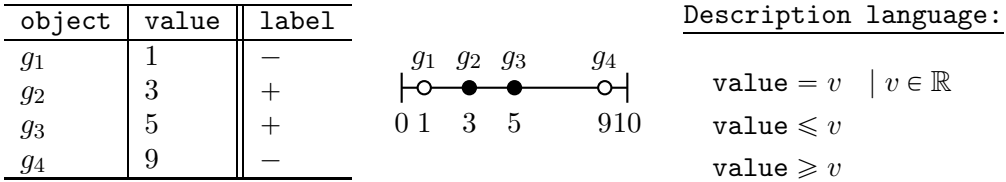
\begin{figure}[t]
{ 
%\begin{minipage}{1\linewidth} 
  \begin{minipage}{0.36\linewidth} 
  \centering 
  \begin{tabular}{l|l||l}
   \hline   
   \texttt{object} & \texttt{value} & \texttt{label}\\
   \hline
   $g_1$ & $1$ & $-$\\ 
   $g_2$ & $3$ & $+$\\
   $g_3$ & $5$ & $+$\\
   $g_4$ & $9$ & $-$\\
   \hline 
  \end{tabular}
\end{minipage}% 
 \begin{minipage}{0.28\linewidth} 
  \centering 
  \scalebox{1}{\begin{tikzpicture}
[bdot/.style = {circle, fill=black, draw=black,inner sep=0pt, minimum size=5pt},
wdot/.style = {circle, fill=white, draw=black,inner sep=0pt, minimum size=5pt},
thick]
                        	
	\draw ( 0,0.2) -- + (0,-0.4) node[below] {$0$};
	\draw (3,0.2) -- + (0,-0.4) node[below] {$10$};
	\draw[thick] (0,0) -- node[below=2mm] {} + (3,0);

	\tkzDefPoint(0.3*1,0){g1}
    \tkzDefPoint(0.3*3,0){g2}
    \tkzDefPoint(0.3*5,0){g3}
    \tkzDefPoint(0.3*9,0){g4}
	
	\foreach \i in {1, 4}
	{
    	\node[wdot,label={$g_\i$}] at (g\i) {};
	}
	\foreach \i in {2, 3}
	{
    	\node[bdot,label={$g_\i$}] at (g\i) {};
	}
	\tkzText[below=2mm](g1){1}
	\tkzText[below=2mm](g2){3}
	\tkzText[below=2mm](g3){5}
	\tkzText[below=2mm](g4){9};
\end{tikzpicture}}
 \end{minipage}
 \begin{minipage}{0.30\linewidth}
 \underline{\texttt{Description language:}}
 \begin{eqnarray*}
 \texttt{value} = v & \mid v \in \mathbb{R}\\
 \texttt{value} \leq v &\\
 \texttt{value} \geq v &
 \end{eqnarray*}
 \end{minipage} 
%\end{minipage}
}
\caption{Dataset with numerical attribute (left), its representation in $\mathbb{R}$ with black dots representing positive objects (center) and the description language (right)\label{fig:introexample}.}
%\vspace*{-4mm}
\end{figure} 

%\smallskip
Understanding properties of pattern setups independently from their completions is fundamental for answering many practical questions. For instance, consider the question \textit{``What are the best descriptions covering all positive instances?"}. If \emph{better} stands for \emph{more relevant than} as in relevance theory \citep{DBLP:journals/jmlr/GarrigaKL08}, the answer will be the two best incomparable rules $\texttt{value} \geq 3 \to +$ and $\texttt{value} \leq 5 \to +$ rather than only one in the completion $3 \leq \texttt{value} \leq 5 \to +$. More generally, we look for \emph{maximal common patterns} (i.e. \emph{support-closed patterns} as called by \cite{DBLP:journals/tcs/BoleyHPW10}) of the positive instances which could be multiple.

\newpage

\noindent
\textbf{Outlines and Contributions.} In this paper we present the following results:
\begin{enumerate}[topsep=1pt]
    \item In Section \ref{sec:patternSetups}, we start by studying pattern setups that was proposed by \cite{DBLP:conf/cla/LumpeS15} as a tool to models pattern search spaces relying only on posets providing hence a better-understanding of this wider-class of framework. Afterwards, we revisit briefly the notion of pattern structures in Section \ref{sec:patternStructure}.
    %\item We recall in Section \ref{sec:patternStructure}, the definition of pattern structures starting from pattern setups and we recall the tight link existing between meet-semilattices and pattern structure formally (Theorem \ref{thm:upperboundedmeetsemiolattice}).  
    \item In Section \ref{sec:supportclosed}, we point-out the major problem related to pattern setups. In fact, since pattern setups rely only on posets without no additional properties, they are very permissive. Simply put, given some set of objects, they could share some common descriptions in the pattern space but none of them is maximal w.r.t. the subsumption order. We show that this problem is directly linked to the fact that the considered poset is not a multilattice. 
    \item In Section \ref{sec:multilattices}, we present multilattices and we show that all (doubly) chain-complete posets are complete multilattices but that the converse does not hold.
    \item In Section \ref{sec:pattermultistructure}, we propose the framework of pattern \multistructure s. In a nutshell, analogously to pattern structures, pattern \multistructure s are based on multilattices. Such a structure provides the fact that covering descriptions of a set of objects are deducible from the maximal common descriptions using the subsumption order. This does not necessarily hold in an arbitrary pattern setup.
    \item Next, in Section \ref{sec:completions} we revisit completion (i.e. transformation) of pattern setups to pattern structures and we show that the usual completion using antichain of patterns, namely \emph{antichain completion}, induce a pattern structure if and only if the pattern setup is a pattern \multistructure{}. 
    
    \item Finally, we wrap-up in section \ref{sec:conclusiondiscussion} and we discuss some open problems related particularly to the enumeration of extents (i.e. subset of objects that are separable in the description language) in an arbitrary pattern setup. 
    
\end{enumerate}

%All proofs of stated Lemmas, Propositions and Theorems are provided in Appendix~\ref{appendix:proofs}. 
Please note that this paper is a thorough extension of a first paper published in CLA'18 \citep{DBLP:conf/cla/BelfodilKK18}. The main difference with CLA'18 paper is more mathematical results and much more detailed explanations of the framework of pattern setups, pattern \multistructure s and their completions. Note that in this first paper, pattern \multistructure s were called pattern hyper-structures and were linked with the notion of hyper-lattices briefly investigated by \cite{DBLP:journals/ml/Zaki01}. Stefan Schmidt attracted our attention to the work of \cite{benado1955ensembles} on multilattices that interestingly was directly linked to pattern \multistructure{} making this newly proposed framework more mathematically founded. 
\section{On Partially-Ordered Sets\label{sec:preliminaries}}
To make this work self-contained, in this section we recall basic definitions and results from order theory and introduce our notations that are largely inspired by \cite{GanterW99,davey2002introduction,roman2008lattices}.  
In this paper, we will be using the following base notations:
\begin{itemize}[noitemsep,topsep=1pt]
    \item For any set $P$, the set $\wp(P)$ denotes the powerset of $P$. 
    \item For any mapping $f: E \to F$ and any subset $A \subseteq E$, the set $f[A]$ denotes the image of $A$ w.r.t. $f$, that is:
    \begin{eqnarray*}
    f[A] = \left\{f(a) \mid a \in A \right\}
    \end{eqnarray*}
    \item The set of all mappings $f : E \to F$ is denoted by $F^E$. 
    
\end{itemize}

\subsection{Basic Definitions}
%The very base notion of this paper is partially ordered set defined below. 

\begin{definition}
A \empha{partial order} on a set $P$ is a binary relation $\leq$ on $P$ that is  
\empha{reflexive} ($(\forall x \in P)$ $x \leq x$), \empha{transitive} ($(\forall x,y,z \in P)$ if $x \leq y$ and $y \leq z$ then $x \leq z$.) and \empha{antisymmetric} ($(\forall x,y,z \in P)$ if $x \leq y$ and $y \leq x$ then $x=y$).
%\begin{itemize}
%\item \empha{Reflexive:} $(\forall x \in P)$ $x \leq x$.
%\item \empha{Transitive:} $(\forall x,y,z \in P)$ if $x \leq y$ and $y \leq z$ then $x \leq z$.
%\item \empha{Antisymmetric:} $(\forall x,y,z \in P)$ if $x \leq y$ and $y \leq x$ then $x=y$.
%\end{itemize}
%
The pair $(P,\leq)$ is called a \empha{partially ordered set} or a \empha{poset} for short. Two elements $x$ and $y$ from $P$ are said to be \empha{comparable} if $x \leq y$ or $y \leq x$; otherwise, they are said to be \empha{incomparable}. A subset $S \subseteq P$ is said to be a \empha{chain} (resp. an \empha{antichain}) if all elements of $S$ are pairwise comparable (resp. incomparable). The set of all chains (resp. antichains) of $P$ is denoted by $\mathscr{C}(P)$ (resp. $\mathscr{A}(P)$).
\end{definition}

\begin{note}
(Finite) posets are generally presented using the so called \empha{Hasse Diagram}. Where elements of the poset are represented on the plane from the smallest ones to the largest ones. Only direct neighbors (i.e. two elements are said to be direct neighbors if there is no elements strictly lying between them in the poset) are linked by a segment of line. The poset can indeed be deduced from this diagram by adding reflexivity and transitivity.
\figref{fig:hassediagramSimple} \textbf{(1)} depicts the Hasse Diagram of the powerset of a set $E = \{a,b,c\}$ ordered by set inclusion (i.e. poset $(\wp(E), \subseteq)$). 

Through some misuse of the notion, we will present in this paper some infinite posets using the Hasse Diagram. \figref{fig:hassediagramSimple} \textbf{(2)} presents a poset $(P, \leq)$ where:
\begin{itemize}
\item $P = \{\bot,\top,b_0,b_1\} \cup \{a_i \mid i \in \mathbb{N}\}$,
\item  $\bot \leq b_0 \leq b_1 \leq \top$, and 
\item $(\forall i \in \mathbb{N})$ $\bot \leq a_i \leq a_{i+1} \leq \top$.
\end{itemize}

\end{note}

%\begin{mdframed}
In what follows, $(P, \leq)$ denotes a poset and $S \subseteq P$ denotes an arbitrary subset. %We will present now different advanced notions.
%\end{mdframed}

\begin{definition}
%[\textbf{Down and up closure}]
The \empha{principal ideal} (resp. \empha{principal filter}) of $p \in P$, denoted by $\downarrow p$ (resp. $\uparrow$) is the set of all elements below (resp. above) it:
\begin{eqnarray*}
\downarrow p = \{x \in P \mid x \leq p\} & & \uparrow p = \{x \in P \mid p \leq x\}
\end{eqnarray*}
\end{definition}

\begin{definition}
%[\textbf{Lower and Upper Ideals}]
A set $S \subseteq P$ is said to be a \empha{lower ideal} or a \empha{downset} if: 
\begin{align*}
(\forall x \in P) \:\: \left( (\exists s \in S) \: \: x \leq s \Rightarrow x \in S\right)
\end{align*}
The notion of \empha{upper ideal} (\empha{upset}) is defined dually.   
The set of all lower (resp. upper) ideals of $(P, \leq)$ is denoted $\mathscr{O}(P)$ (resp. $\mathscr{U}(P)$). 
\end{definition}

\begin{definition}
The \empha{down closure} (resp. \empha{up closure}), denoted by $\downarrow S$ (resp. $\uparrow S$) associates to a subset $S$ the smallest lower-ideal (resp. upper-ideal) enclosing it:
\begin{eqnarray*}
\downarrow S = \{x \in P \mid (\exists s \in S)\:x \leq s\}  = \bigcup_{s \in S} \downarrow s &  & \uparrow S = \{x \in P \mid (\exists s \in S)\:s \leq x\} = \bigcup_{s \in S} \uparrow s
\end{eqnarray*}
\end{definition}

\begin{definition}
%[\textbf{Lower and upper bounds}]
An element $p \in P$ is said to be a \empha{lower bound} (resp. \empha{upper bound}) of $S$ if it is below (resp. above) all elements of $S$. 
The set of lower (resp. upper) bounds of $S$ in $P$, denoted by $S^\ell$ (resp. $S^u$), is the \emph{lower ideal} (resp. \emph{upper ideal}) given by:
\begin{eqnarray*}
S^\ell = \{x \in P \mid (\forall s \in S)\:x \leq s\}  = \bigcap_{s \in S} \downarrow s  &  & S^u = \{x \in P \mid (\forall s \in S)\:s \leq x\} = \bigcap_{s \in S} \uparrow s
\end{eqnarray*}
\end{definition}

\begin{note}
Note that $\uparrow \emptyset = \downarrow \emptyset = \emptyset$ and $\emptyset^\ell = \emptyset^u = P$.  
%Note also that for all $S \subseteq P$, $S^\ell$ are lower-ideals and $S^u$ are upper-ideals.
%Note also that $\uparrow$ and $\downarrow$ can be seen as closure operators on $(\wp(P), \subseteq)$ 
\end{note}

\begin{figure}[t]
\centering
\hfill
\begin{minipage}{0.30\linewidth}
\scalebox{1}{\begin{tikzpicture}[scale=.7]
\node (0) at (0,-1) {$\emptyset$};
\node (1) at (-2,0.25) {$\{a\}$};
\node (2) at (0,0.25) {$\{b\}$};
\node (3) at (2,0.25) {$\{c\}$};
\node (12) at (-2,1.5) {$\{a,b\}$};
\node (13) at (0,1.5) {$\{a,c\}$};
\node (23) at (2,1.5) {$\{b,c\}$};
\node (123) at (0,3) {$\{a,b,c\}$};
\draw (0) -- (1) -- (12) -- (123) -- (23) -- (3) -- (0);
\draw (0) -- (2) -- (23);
\draw (2) -- (12);
\draw (3) -- (13) -- (1);
\draw (13) -- (123);
\node (t) at (-2,3) {$\mathbf{(1)}$};
\end{tikzpicture}}
\end{minipage}
\hfill
\begin{minipage}{0.20\linewidth}
\scalebox{1}{\begin{tikzpicture}[scale=.7]
\node (c) at (0,0) {$\bot$};
\node (a0) at (1,1) {$a_0$};
\node (a1) at (1,2) {$a_1$};
\node (an) at (1,3) 
{$\vdots$};
\node (d) at (0,4) {$\top$};
\node (b0) at (-1,1) {$b_0$};
\node (b1) at (-1,2) {$b_1$};

\draw (c) -- (a0) -- (a1) -- (an) -- (d) -- (b1) -- (b0) -- (c);

\node (t) at (-1,4) {$\mathbf{(2)}$};
\end{tikzpicture}}
\end{minipage}
\hfill
\begin{minipage}{0.20\linewidth}
\scalebox{1}{\begin{tikzpicture}[scale=.7]
\node (empty) at (0,0) {$``"$};
\node (a) at (-1,1.2) {$``a"$};
\node (b) at (1,1.2) {$``b"$};
\node (ab) at (-1,3) {$``ab"$};
\node (ba) at (1,3) {$``ba"$};
\draw (empty) -- (a) -- (ab) -- (b) -- (empty);
\draw (a) -- (ba) -- (b);
\node (t) at (-1,4) {$\mathbf{(3)}$};
\end{tikzpicture}}
\end{minipage}
\hfill
\begin{minipage}{0.20\linewidth}
\scalebox{1}{\begin{tikzpicture}[scale=.7]
\node (g12) at (-1.2,0.3) {$\{g_1 ,g_2\}$};
\node (g13) at (1.2,0.3) {$\{g_1 ,g_3\}$};
\node (g123) at (0,1.5) {$\{g_1,g_2,g_3\}$};
\node (g1234) at (0,3) {$\{g_1,g_2,g_3,g_4\}$};
\draw (g12) -- (g123) -- (g13);
\draw (g123) -- (g1234);
\node (t) at (-1,4) {$\mathbf{(4)}$};
\end{tikzpicture}}
\end{minipage}
\hfill
%\begin{minipage}{0.68\linewidth}
\caption{
%From left to right: \textbf{(1)} Poset $(\wp(\{a,b,c\}), \subseteq)$ which is a (complete) lattice. \textbf{(2)} A subposet of $(\wp(\{a,b,c,d\}), \subseteq)$ that is neither a meet-semilattice nor a join-semilattice. \textbf{(3)} A subposet of $(\wp(\{a,b,c,d\}), \subseteq)$ that is a (complete) lattice but neither a closure system not a kernel system. \textbf{(4)} A complete sublattice of $(\wp(\{a,b,c,d\}), \subseteq)$.
 \textbf{From left to right:} \textbf{(1)} Poset  $(\wp(\{a,b,c\}), \subseteq)$. \textbf{(2)} A poset $(P, \leq)$ with $P = \{\bot,\top,b_0,b_1\} \cup \{a_i \mid i \in \mathbb{N}\}$, $\bot \leq b_0 \leq b_1 \leq \top$ and $(\forall i \in \mathbb{N})$ $\bot \leq a_i \leq a_{i+1} \leq \top$. \textbf{(3)} Poset of common substrings of $``ab"$ and $``ba"$.
 \textbf{(4)} A subposet of $(\wp(\{g_1,g_2,g_3,g_4\}), \subseteq)$.
}
\label{fig:hassediagramSimple}
%\end{minipage}
\end{figure}
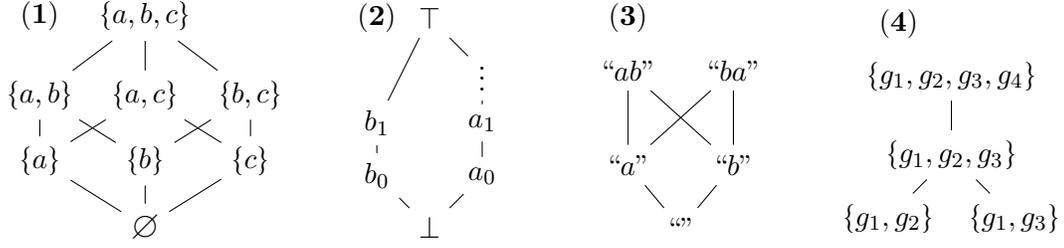

\subsection{Minimal, Minimum, Infimum, ...}

Informally speaking, not all elements in a subset $S$ have the same ``position". Some elements $x$ are in the ``interior" of $S$, that is $\exists y,z \in S$ such that $y < x < z$; while some others are on its ``border", that is there is no element in $S$ strictly below/above them. These border elements are formally defined in Definition \ref{def:maximalminimal}.

\begin{definition}
%[\textbf{Maximal and minimal elements}]
\label{def:maximalminimal}
An element $s \in S$ is said to be a \empha{minimal} (resp. \empha{maximal}) element in $S$ if all its strict lower (resp. upper) bounds are outside $S$.  
The set of \emph{minimal} (resp. \emph{maximal}) elements of $S$ is denoted by $min(S)$ (resp. $max(S)$) and given by:
\begin{eqnarray*}
min(S) = \{x \in S \mid\:\downarrow\:x \cap S = \{x\}\} & \phantom{:::} & 
max(S) = \{x \in S \mid\:\uparrow\:x \cap S = \{x\}\}
\end{eqnarray*}
\end{definition}

Note that $min(S)$ and $max(s)$ are \emph{antichains}. Moreover, $min(\emptyset) = max(\emptyset) = \emptyset$ by definition. 
An intuitive and important relationship between the minimal elements of a subset and the minimal elements in its up-closure is presented lemma \ref{lemma:propertiesminandarrows}.

\begin{lemma}\label{lemma:propertiesminandarrows}
We have
%\begin{eqnarray*}
$min(\uparrow S) = min (S)$  and  $max(\downarrow S) = max(S)$ 
%\end{eqnarray*}
\end{lemma}

\begin{mdframed}
\begin{proof}
We prove by double inclusion the property $min (\uparrow S)  =  min(S)$:
\begin{itemize}
    \item[\textbf{($\subseteq$)}] Let $x \in min(\uparrow S)$ and suppose that $x \not\in S$. Since $x \in min(\uparrow S)$ we have $x \in \uparrow S$, that is $\exists y \in S$ s.t. $y \leq x$ but $y \neq x$ since $x \not\in S$. Thus, $\exists y \in \uparrow S$ s.t. $y \leq x$ but $y \neq x$. Thus $y \in \downarrow x \cap \uparrow S$ with $y \neq x$ which contradicts the fact that $x \in min(\uparrow S)$ (i.e. $\downarrow x \cap \uparrow S = \{x\}$). We conclude that $x \in S$. Suppose now that $x \not\in min(S)$, that $\exists y \in S$ s.t. $y \leq x$ and $y \neq x$. Hence, $y \in \uparrow S \cap \downarrow x$ which contradicts the fact that $x \in min(\uparrow S)$. Thus $x \in min(S)$. We conclude that $min (\uparrow S) \subseteq min(S)$.
    
    \item[\textbf{($\supseteq$)}] Let $x \in min(S)$, thus $x \in \uparrow S$. Suppose that $x \not\in min(\uparrow S)$ that is $\exists y \in \uparrow S$ such that $y \leq x$ and $y \neq x$. Thus $\exists z \in S$ such that $z \leq y \leq x$ with $z \neq x$. Hence, $z \in \downarrow x \cap S$ with $z \neq x$ which contradicts the fact that $x \in min(S)$. We conclude that $x \in min(\uparrow S)$ or more generally $min (S) \subseteq min(\uparrow S)$.
\end{itemize}

On can follow the same steps to show $max (\downarrow S)  =  max(S)$. 
\end{proof}
\end{mdframed}

\begin{definition} 
\label{def:minimummaximum}
An element $m \in S$ is said to be \empha{minimum} or a \empha{smallest element} of $S$ if it is a lower bound of $S$. Formally:  
%\begin{itemize}
%\item A \empha{minimum} or a \empha{smallest element} of $S$ if it is below all elements of $S$:
$
m \in S \textit{ and } (\forall s \in S) \:\: m \leq s
$.
The notion of \empha{maximum} or \empha{greatest element} of $S$ is defined dually.

%\item A \empha{maximum} or a \empha{greatest element} of $S$ if it is above all elements of $S$: 
%\begin{align*}
%m \in S \text{ and } (\forall s \in S) \:\: s \leq m
%\end{align*}
%\end{itemize}

%$S$ \empha{has a minimum} (resp. \empha{maximum}) or $S$ is a \empha{\minimumhandle{}} (resp. \empha{\maximumhandle{}}). 
\end{definition}

\begin{note}
One should notice that not all subsets have a minimum or a maximum (see Example~\ref{example:maximumdoesnotexist}). Moreover, the minimum and the maximum are unique if they exist.
\end{note}

\begin{example}
\label{example:maximumdoesnotexist}
Consider again \figref{fig:hassediagramSimple} \textbf{(1)} and the subset $S = \{\{a\}, \{a,b\}, \{a,c\}\}$. It is clear that $min(S) = \{a\}$ and $max(S) = \{\{a,b\}, \{a,c\}\}$. Obviously, subset $S$ has a minimum which is $\{a\}$. However, it does not have a maximum since no element in $S$ is above all the other elements.
\end{example}

\begin{note}
If a poset has a maximum (resp. minimum) element, we say that the poset is \empha{upper-bounded} (resp. \empha{lower-bounded}) and this element is called the \empha{top} (resp. \empha{bottom}) and is denoted $\top$ (resp. $\bot$). 
%
%Dually, if a poset has a minimum, we say that the poset is \empha{lower-bounded} and this maximum is called the \empha{bottom} element and is denoted $\bot$.
%
A poset is said to be \empha{bounded} if it is both lower-bounded and upper-bounded.
\end{note}

It is very important to distinguish between the minimal elements and the minimum.  In fact, 
``if a subset has a minimum then it has a unique minimal element". However, the converse of this statement is not true, that is even if a subset have a single minimal element, it can have no minimum. Indeed, the intuition that every element in a subset $S$ is at least above one minimal element in $min(S)$ does not always hold in infinite posets. Such a property is presented in Definition \ref{def:minimalmaximalhandle}.

\begin{definition}[\textbf{Definition 2.4 in \cite{DBLP:journals/dm/MartinezGGC05}}]
\label{def:minimalmaximalhandle}
We say that:
\begin{itemize}
\item $S$ is \empha{\minimumhandle{}} if $S$ has a minimum (i.e. $(\exists m \in S)$ $S \subseteq \uparrow m$).
\item $S$ is \empha{\maximumhandle{}} if $S$ has a maximum (i.e. $(\exists m \in S)$ $S \subseteq \downarrow m$).

\item $S$ is \empha{\minimalhandle{}} if $S \subseteq \uparrow min(S)$ 
\item $S$ is \empha{\maximalhandle{}} if $S \subseteq \downarrow max(S)$.
\end{itemize}
\end{definition}

%\newpage
\begin{example}
Consider the poset $(P, \leq)$ depicted in \figref{fig:hassediagramSimple} and let $S = P \backslash \{\top\} = \{\bot, b_0, b_1\} \cup \{a_i \mid i \in \mathbb{N}\}$. It is clear that $max(S) = \{b_1\}$. Since $\downarrow \{b_1\} = \{\bot, b_0, b_1\}$, subset $S$ is not a \maximalhandle{}. Hence, even if $S$ has a single maximal element, it has no maximum. On the other hand, $min(S) = \bot$. Since $S \subseteq \uparrow \bot = P$, we can say that $S$ is a \minimalhandle{}. Moreover, since the minimal element is unique than $S$ is \minimumhandle{} which minimum is $\bot$.
\end{example}

\begin{note}
If $S$ is minimal-handle, then $S$ has a minimum if and only if $S$ has a unique minimal element. It is clear that all subsets of a \empha{finite posets} (i.e. posets with a finite set) are maximal-handle and minimal-handle. Posets where all subsets are \minimumhandle{} are said to be \empha{well-founded} or equivalently have the \empha{minimal condition} or the \empha{descending chain condition (DCC)}. Dually, posets where all subsets are \maximalhandle{} are said to be \empha{dually well-founded} or equivalently have the \empha{maximal condition} or the \empha{ascending chain condition (ACC)}. A poset having at the same time the \textbf{ACC} and \textbf{DCC} is said to be \empha{chain-finite} since it has no infinite chain (but, still, could have infinite antichains). A poset is in fact finite if and only if it is \empha{chain-finite} and \empha{antichain-finite}.    
\end{note}

\begin{definition} 
%[\textbf{Supremum and infimum}]
\label{def:infimumsupremum}
The largest lower bound of $S$ (i.e. the maximum of $S^\ell$) if it exists is called the  \empha{infimum} or the \empha{meet} of $S$ and is denoted $inf(S)$ or $\bigwedge S$. The \empha{join} or the \empha{supremum} of $S$ is given by the minimum of $S^u$ and is denoted $sup(S)$ or $\bigvee S$.
%We say that:
%\begin{itemize}
    %\item  
    %The largest lower bound of $S$ (i.e. the maximum of $S^\ell$) if it exists is called the  \empha{infimum} or the \empha{meet} of $S$ and is denoted $inf(S)$ or $\bigwedge S$. %and we have $S^\ell = \downarrow inf(S)$.
    %, that is:
    %\begin{align*}
    %(\forall p \in P) \:\:\: p \in S^\ell \Longleftrightarrow p \leq \bigwedge S
    %\end{align*}
    %If the set of lower bounds $S^\ell$ has its \underline{maximum} then $m$ is called the \empha{infimum} or the \empha{meet} of $S$ and it is denoted by $inf(S)$ or $\bigwedge S$. M
    %\item The smallest upper bound of $S$ (i.e. the minimum of $S^u$) if it exists is called the  \empha{supremum} or the \empha{join} of $S$ and is denoted $sup(S)$ or $\bigvee S$. %Moreover, we have $S^u = \uparrow sup(S)$, that is:
    %\begin{align*}
    %(\forall p \in P) \:\:\: p \in S^u \Longleftrightarrow \bigvee S \leq p
    %\end{align*}
%\end{itemize}
%If the set of upper bounds $S^u$ has a minimum it is called the \empha{supremum} or the \empha{join} of $S$ and it is denoted by $sup(S)$ or $\bigvee S$. Dually, if the set of lower bounds $S^\ell$ has a maximum, it is called the \empha{infimum} or the \empha{meet} of $S$ and it is denoted by $inf(S)$ or $\bigwedge S$.
%
%Particularly, and in case of existence:
\end{definition}

Having a minimum is a very strong property, in fact if a subset has a minimum then the minimum is also the infimum. However, the converse is not true. Indeed a subset can have an infimum without having a minimum. For instance, consider \figref{fig:hassediagramSimple} \textbf{(2)}, it is clear that the chain $S = \{a_i \mid i \in \mathbb{N}\}$ does not have a maximum. However, $S$ still have a supremum which is $\top$. Indeed, $S^u = \{\top\}$ which minimum is $\top$.

One interesting property is stated in the following Lemma.

\begin{lemma}
\label{lemma:lowerboundsarestables}
Let $(P, \leq)$ be a poset, $S \subseteq P$, we have:
\begin{itemize}
    \item For any $A \subseteq S^\ell$. If $A$ has a join $\bigvee A \in P$ then $\bigvee A \in S^\ell$.
    
    \item For any $A \subseteq S^u$. If $A$ has a meet $\bigwedge A \in P$ then $\bigwedge A \in S^u$.
\end{itemize}

%\begin{eqnarray*}
%\end{eqnarray*}
\end{lemma}

\begin{mdframed}
\begin{proof}
Let $A \subseteq S^\ell$, we have by definition:
%\begin{align*}
$
(\forall s \in S \:\: \forall a \in A) \:\:\: a \leq s
$,
%\end{align*}
that is $S \subseteq A^u$. Since $\bigvee A$ is the least upper bound of $A$ and all elements of $S$ are upper bounds of $A$ then:
$
(\forall s \in S) \:\:\: \bigvee A \leq s
$. We conclude that $\bigvee A \in S^\ell$. 
%
%For $A = S$, we have $\bigvee S \in S^\ell$ and since $\bigvee S$ is an upper bound of $S^\ell$ by definition then  
%
Same steps can be followed to show the second part of the Lemma.
\end{proof}
\end{mdframed}

\begin{note}
\label{note:reversemeetandjoin}
One should note that, in case of existence, we have:
\begin{eqnarray*}
\bigwedge S = \bigvee S^\ell & & \bigvee S = \bigwedge S^u 
\end{eqnarray*}

Moreover, since for a poset $(P, \leq)$ we have $\emptyset^\ell = \emptyset^u =  P$, then the empty set has a meet (resp. join) if and only if the poset is upper-bounded (resp. lower bounded) and we have
%\begin{eqnarray*}
$\bigwedge \emptyset = \bigvee P = \top$ and $\bigvee \emptyset = \bigwedge P = \bot$. 
%\end{eqnarray*}
\end{note}

%\newpage
\subsection{Lattices}

\begin{definition}
\label{def:lattice}
%[\textbf{Lattice}]
A poset $(P, \leq)$ is said to be: 
\begin{itemize}
    \item A \empha{meet-semilattice} if for all nonempty finite subsets $S \subseteq P$, $S$ has its meet.  
    \item A \empha{join-semilattice} if for all nonempty finite subsets $S \subseteq P$, $S$ has its join.  
    \item A \textbf{lattice} if it is both meet-semilattice and join-semilattice.
    \item A \textbf{complete lattice} if all its subsets including $\emptyset$ has their meet and join.
\end{itemize}
\end{definition}

\begin{note}
For any set $E$, the poset $(\wp(E), \subseteq)$ is a complete lattice where the meet is set intersection $\bigcap$ and the join is set union $\bigcup$. Such a poset is called a \empha{powerset lattice}. 
\end{note}

\begin{example}
Consider posets depicted in Posets  \figref{fig:hassediagramSimple}. Poset \textbf{(1)} is a powerset lattice on $\{a,b,c\}$. Hence, it is a complete lattice. Poset \textbf{(2)} is also a complete lattice where the join of any infinite subset is $\top$. Poset \textbf{(3)} of common substrings of \textbf{``ab"} and \textbf{``ba"} ordered by \emph{is substring of} is neither a meet-semilattice nor a join-semilattice. Indeed, $\{``ab",``ba"\}$ has not an infimum since $\{``ab",``ba"\}^\ell = \{``a", ``b", ``"\}$ has two maximal elements and thus no maximum. %(i.e. it does have two maximal elements). 
Dually, subset $\{``a",``b"\}$ has not a join since $\{``a",``b"\}^u = \{``ab", ``ba"\}$ has two minimal elements and thus no minimum.
\end{example}

\begin{note}
\label{note:latticesproperty}
Meet-semilattices have a weaker, yet equivalent, definition characterizing them. In fact, to check if a poset is a lattice, one should only check if:
\begin{eqnarray*}
(\forall x,y \in P) \:\:\: \{x,y\} \text{ has a meet } & \textbf{ and } & (\forall x,y \in P) \:\:\: \{x,y\} \text{ has a join }
\end{eqnarray*}

That is if all pair of elements have their meets (resp. join) then all nonempty finite subsets of the poset have their meet (resp. join). 
Another important remark that is related to Note~\ref{note:reversemeetandjoin} is the fact that \underline{all complete semilattices are complete lattices}. 

One should notice that since in complete lattices, the empty set have also their meet and joins then all complete lattices are bounded. It is also important to note that all finite lattices are complete lattice. However, not all finite meet-semilattices are lattices since they may lack of a top element (i.e. the meet of the empty set is not guaranteed to exist). In fact, a finite meet-semilattice (resp. join-semilattice) is a lattice if and only if it is upper-bounded (resp. lower-bounded). 
\end{note}

\subsection{Morphisms on Posets}

We will often use morphismes (i.e. mappings) between posets in this paper. Definition~\ref{def:morhismproperties} below formulate some properties of morphismes between two posets.

\begin{definition}
\label{def:morhismproperties}
Let $(P, \leq)$ and $(Q, \leq)$  be the two posets. A mapping $f : P \to Q$ is:
\begin{itemize}[noitemsep,topsep=1pt]
\item \empha{order-preserving} or \empha{monotone}: 
$(\forall x,y \in P) \: \: x \leq y  \Rightarrow  f(x) \leq f(y)$.
\item \empha{order-reversing}: 
$(\forall x,y \in P) \: \: x \leq y  \Rightarrow  f(y) \leq f(x)$

\item \empha{an order-embedding}: 
$(\forall x,y \in P) \: \: x \leq y  \Leftrightarrow  f(x) \leq f(y)$
%\item \empha{an order-isomorphism}: if $f$ is a surjective order embeddig (i.e. $f[P] = Q$).
%\item \empha{an order anti-embedding}: 
%$(\forall x,y \in P) \: \: x \leq y  \Leftrightarrow  f(y) \leq f(x)$
\end{itemize}

If an order-embedding exist from $(P, \leq)$ and $(Q, \leq)$ then poset $(Q, \leq)$ is said to be a \empha{completion} of $(P, \leq)$ or \empha{embeds} $(P, \leq)$. If this order-embedding is surjective (i.e. an \empha{order-isomorphism}) we say that $(P, \leq)$ and $(Q, \leq)$ are \empha{order-isomorphic}.
\end{definition}

%\newpage
%An important type of functions that are often used in this paper are presented in Definition~\ref{def:closureOperator}.
 
%Order-preserving morphisms and order-embeddings, if exploited judiciously, reflects some properties on the image sub-poset. 
%

%Let be $(P, \leq)$ be a poset. A poset $(Q, \leqq)$ is said to be a \empha{completion} of $(P, \leq)$, \empha{contains} $(P, \leq)$ or \empha{embeds} $(P, \leq)$ if there exists an order embedding (see Definition \ref{def:morhismproperties}) from $(P, \leq)$ to $(Q, \leqq)$.  

\begin{definition}
%[\textbf{Closure operator}]
\label{def:closureOperator}
A \empha{closure operator} on $(P,\leq)$ is a mapping $\phi : P \to P$ that is: 
\begin{itemize}
\item \empha{monotone:} $(\forall x,y \in P) \: \: x \leq y  \Rightarrow  \phi(x) \leq \phi(y)$,
\item \empha{extensive:} 
\vphantom{vee}$(\forall x \in P) \: \: x \leq \phi(x)$, and 
%\item \empha{contractive} if 
%$(\forall x \in P) \: \: f(x) \leq x$
\item \empha{idempotent:} 
$(\forall x \in P) \: \: \phi(\phi(x)) = \phi(x)$
\end{itemize}
\end{definition}

\begin{note}
The \empha{fixpoints} of a given mapping $f : P \to P$ is the set of elements s.t. $\{p \in P \mid f(p) = p\}$. For idempotent operator, the set of fixpoints is 
$f[P] = \{f(p) \mid p \in P\}$.
\end{note}

%Lemma \ref{lemma:downandupclosure} gives an example about two usual closure operators.

\begin{lemma}\label{lemma:downandupclosure}
Let $(P, \leq)$ be a poset, we have $\uparrow$ and $\downarrow$ are closures on $(\wp(P), \subseteq)$. %Moreover, their associated closure systems is respectively the poset of upper ideals $(\mathscr{U}(P), \subseteq)$ and the poset of lower ideals $(\mathscr{O}(P), \subseteq)$.  
\end{lemma}

\begin{mdframed}
\begin{proof}
Let us show that $\uparrow : \wp(P) \to \wp(P)$ is a closure operator on $(\wp(P), \subseteq)$. It is clear that $S \subseteq \uparrow S$ (i.e. $\uparrow$ is \textbf{extensive}) by definition. 
For $S \subseteq T$ in $\wp(P)$, we have if $x \in \uparrow S$, then $\exists y \in S \subseteq T$ such that $y \leq x$. That is, $x \in \uparrow T$. Thus, $(\forall S, T \in \wp(P))\uparrow S \subseteq T \implies \uparrow S \subseteq \uparrow T$ (i.e. $\uparrow$ is \textbf{order-preserving}). 
Let us show that $\uparrow$ is \textbf{idempotent}. It is clear that $\uparrow S \subseteq \uparrow \uparrow S$ since $\uparrow$ is extensive. 
It remains to show that $\uparrow \uparrow S \subseteq \uparrow S$. Let $x \in \uparrow \uparrow S$, that is $\exists y \in \uparrow S$ such that $y \leq x$. That is $\exists z \in S$ such that $z \leq y \leq x$. We conclude that $x \in \uparrow S$.
One can follow the same steps to show that $\downarrow : \wp(P) \to \wp(P)$ is a closure operator on $(\wp(P), \subseteq)$.
\end{proof}
\end{mdframed}

\begin{note}
Fr any poset $(P, \leq)$ and since $\uparrow$ and $\downarrow$ are closure operators, we have:
\begin{itemize}[topsep=1pt,noitemsep]
    \item For all $S \in \mathscr{U}(P)$, if $S$ is minimal-handle then $S = \uparrow min(S)$.
    \item For all $S \in \mathscr{O}(P)$, if $S$ is maximal-handle then $S = \downarrow max(S)$.
\end{itemize}
\end{note}

Before finishing the section, we draw the reader attention to the tight link existing between closure operators and closure systems on a complete lattice defined below. 

\begin{definition}
\label{def:closureSystem}
Let $(P, \leq)$ be a complete lattice whose meet is $\bigwedge$. Given a subset $S \subseteq P$, $(S, \leq)$ is said to be a \empha{closure system} or a \empha{meet-structure} on $(P, \leq)$ \emph{iff}:  
\begin{eqnarray*}
(\forall A \subseteq S) \:  \bigwedge A \in S
\end{eqnarray*}

Clearly, $(S, \leq)$ forms a {complete lattice} which infimum is $\bigwedge$.
\end{definition}

In fact (see Theorem~1 in \cite{GanterW99}), if $(P, \leq)$ is a complete lattice and $\phi$ is a closure operator on $(P,\leq)$ then the poset $(\phi[P], \leq)$ of fixpoints of $\phi$ is a \emph{closure system}. Conversely, if $(S, \leq)$ is a closure system on a complete lattice $(P, \leq)$ then the mapping $\phi_S: P \to S, p \mapsto \bigwedge \{s \in S \mid p \leq s\}$ that takes each element $p \in P$ to the smallest element (fixpoint) in $S$ enclosing it is a closure operator with  $\phi_S[P] = S$.

\newpage
\section{Pattern Setups \label{sec:patternSetups}}
\emph{Formal Concept Analysis (FCA)} were introduced in \cite{wille1982restructuring} as a mathematical framework to analyze and manipulate concepts in databases. 
\emph{FCA} starts by a \emph{formal context} $\mathbb{K} = (\mathcal{G}, \mathcal{M}, \mathcal{I})$ where $\mathcal{G}$ is a set of objects (i.e. \emph{Gegenst{\"a}nde}), $\mathcal{M}$ is a set of attributes (i.e. \emph{Merkmale}) and $\mathcal{I}$ is a binary relation on $\mathcal{G} \times \mathcal{M}$ (i.e. \emph{Incidence relation}). For $(g, m) \in \mathcal{G} \times \mathcal{M}$, $g\,\mathcal{I}\,m$ holds iff $g$ has attribute $m$. \figref{fig:formalcontext} presents an example of a formal context. The basic theorem behinds \emph{FCA} rely on the observation that any formal context can be transformed to a complete lattice called \emph{concept lattice}. We invite the reader to read \cite{GanterW99} book for more details.  

%thanks to a Galois Connection between the Boolean lattices $\wp(\mathcal{G}, \subseteq)$ and $(\wp(\mathcal{M}), \subseteq)$. 
While (basic) FCA gives a tool to analyze datasets in a form of formal context, datasets with more complex attributes (eg. numerical or nominal attributes) needs to be transformed to such a form before any manipulation. Such a transformation is called \emph{conceptual scaling} (i.e. \emph{binarizing}) (\cite{ganter1989conceptual}). 
Yet, even if \emph{conceptual scaling} is a quite general tool, binarizing a dataset with regard to the patterns we want to look for is not always straightforward (\cite{DBLP:conf/cla/BaixeriesKN12}).  

In response to that, a more natural way to handle complex datasets was introduced in~\cite{DBLP:conf/iccs/GanterK01} under the name of \emph{pattern structures}.
Objects in a pattern structure have descriptions (e.g. the equivalent notion to set of attributes in $\mathcal{M}$ in a formal context) with a meet-semilattice operation on them (e.g. equivalent to set intersection in $(\wp(\mathcal{M}), \subseteq)$ in a formal context). This framework proved its usefulness in many data analysis tasks (see \cite{DBLP:conf/rsfdgrc/Kuznetsov09}). 
However, pattern structures demands that the \emph{description space} to be a (upper-bounded) meet-semilattice which is not the case for all description spaces such as sequence of itemsets patterns (\cite{DBLP:conf/icfca/CodocedoBKBN17}). \emph{Pattern setups} were introduced in~\cite{DBLP:conf/cla/LumpeS15} to generalize pattern structures by demanding only a partial order on descriptions. We details in this sections the different notions related to pattern setups.
%Yet, such a framework was not studied
%
% 
%

%begin{comment}
\begin{figure}[b]
\centering
\begin{minipage}{0.30\linewidth} 
  \centering 
  \begin{tabular}{llll}
  \hline   
  $\:\mathcal{G}\:$  & $a$ & $b$ & $c$\\ 
   \hline
   $g_1$ & $\times$ & $\times$ & $\times$\\ 
   $g_2$ & $\times$ & & \\
   $g_3$ & $\times$ \\ 
   $g_4$ & & $\times$ & $\times$\\
   \hline
 \end{tabular}
 \end{minipage}
\caption{Formal Context $(\mathcal{G}, \mathcal{M}, \mathcal{I})$ with $\mathcal{G} = \{g_i\}_{1\leq i \leq 4}$ and $\mathcal{M} = \{a, b ,c\}$. \label{fig:formalcontext}}
\end{figure}
%\end{comment}

\begin{definition}
%[\textbf{Description Space}] 
A \empha{description space}; called also \empha{description language}, \empha{pattern space} or \empha{pattern language}; is any \emph{poset} $\underline{\mathcal{D}} := (\mathcal{D}, \sqsubseteq)$. Elements of $\mathcal{D}$ are called \empha{descriptions} or \empha{patterns}. For any $c, d \in \mathcal{D}$, $c \sqsubseteq d$ should be read as ``$c$ is \emph{less restrictive than} $d$" or ``$c$ \empha{subsumes} $d$".
\end{definition}

\begin{definition}
%[\textbf{Pattern Setup}] 
A \empha{pattern setup} is a triple $\mathbb{P}=(\mathcal{G}, \underline{\mathcal{D}}, \delta)$ where $\mathcal{G}$ is a \emph{set} (of objects), $\underline{\mathcal{D}}$ is a description space and $\delta : \mathcal{G} \to \mathcal{D}$ defines a mapping that takes each object $g \in \mathcal{G}$ to its description $\delta(g) \in \mathcal{D}$. 
%For any subset $A \subseteq \mathcal{G}$, $\delta[A]$ denotes the image of $A$ by $\delta$: $\delta[A] = \{\delta(g) \mid g \in A\}$.
%
Let $g \in \mathcal{G}$ and $d \in \mathcal{D}$ be  an object and a description, respectively. We say that object $g$ \empha{realizes} description $d$ or description $d$ \empha{hold for} or \empha{cover} object $g$ iff $d \sqsubseteq \delta(g)$.
\end{definition}

%\begin{note}
%Unless otherwise stated, in this paper, $\mathbb{P} = (\mathcal{G}, \underline{\mathcal{D}}, \delta)$ denotes a \emph{pattern setup}.
%in which $\mathcal{G}$ is a \emph{non-empty finite set}. 
%Hence, stated theorems and propositions in this paper are only applicable in the case $\mathcal{G}$ is \emph{finite} with potentially infinite $ \underline{\mathcal{D}}$.
%\end{note}

\begin{example}\label{example:patternsetup:presentation}
Consider the pattern setup $\mathbb{P} = (\mathcal{G}, \underline{\mathcal{D}}, \delta)$ in \figref{fig:exampleSequences}. We have $\mathcal{G} = \{g_i\}_{1 \leq i \leq 4}$. 
The description space is the set of \emph{\underline{nonempty} words} on the alphabet $\{a,b,c\}$ (i.e. $\{a,b,c\}^+$) ordered by the relationship \emph{``is substring of"} $\sqsubseteq$. The mapping $\delta$ associates to each objects in $\mathcal{G}$ its word in the description space. For instance $\delta(g_1) = ``cab"$. The diagram in the center of \figref{fig:exampleSequences} depicts the Hasse Diagram of the poset $(\downarrow \delta[\mathcal{G}], \sqsubseteq)$ with $\delta[\mathcal{G}] = \{``\mathbf{cab}",``\mathbf{cbba}",``\mathbf{a}",``\mathbf{bbc}"\}$. In other words, it depicts the set of descriptions $d \in \mathcal{D}$ holding for at least one object in $\mathcal{G}$. 
It is clear that the description $``ca"$ holds for $g_1$ since $``ca" \sqsubseteq ``cab"$. However, description $``cb"$ does not hold for $g_1$ since $``cb"$ is not a substring of $``cab"$. 
More generally, descriptions holding for $g_i$ is the principal filter of $\delta(g_i)$ (i.e. $\downarrow \delta(g_i)$). For instance, the set of descriptions holding for $g_1$ is given by $\downarrow \delta(g_1) = \{``a", ``ca", ``ab", \mathbf{``cab"}\}$. 
%However, this description does not hold for $g_2$ since $``ca"$ is not a substring of $``cbba"$.

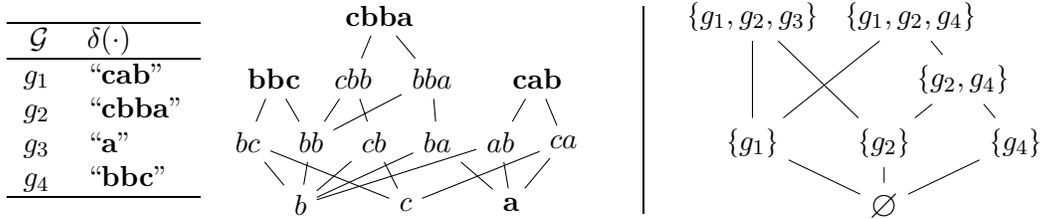
\begin{figure}[t]
%\vspace{-15pt}
\begin{minipage}{1\linewidth}
  \centering 
  \begin{minipage}{0.2\linewidth} 
  \centering 
  \begin{tabular}{ll}
  \hline   
  $\:\mathcal{G}\:$  & $\delta(\cdot)$\\ 
   \hline
   $g_1$ & $``\mathbf{cab}"$\\ 
   $g_2$ & $``\mathbf{cbba}"$\\
   $g_3$ & $``\mathbf{a}"$\\ 
   $g_4$ & $``\mathbf{bbc}"$\\
   \hline
 \end{tabular}
 \end{minipage}%
 \hfill
 \begin{minipage}{0.35\linewidth} 
  \centering 
  \scalebox{1}{\begin{tikzpicture}[scale=.7]
\node (b) at (-2,0) {$b$};
\node (c) at (0,0) {$c$};
\node (a) at (2,0) {$\mathbf{a}$};

\node (bc) at (-3,1.2) {$bc$};
\node (bb) at (-1.8,1.2) {$bb$};
\node (cb) at (-0.6,1.2) {$cb$};
\node (ba) at (0.6,1.2) {$ba$};
\node (ab) at (1.8,1.2) {$ab$};
\node (ca) at (3,1.2) {$ca$};

\node (bbc) at (-2.5,2.4) {$\mathbf{bbc}$};
\node (cbb) at (-1,2.4) {$cbb$};
\node (bba) at (0.5,2.4) {$bba$};
\node (cab) at (2.5,2.4) {$\mathbf{cab}$};

\node (cbba) at (-0.5,3.6) {$\mathbf{cbba}$};

\draw (b) -- (bb) -- (bbc);
\draw (b) -- (bc) -- (bbc);
\draw (b) -- (cb) -- (cbb) -- (cbba) -- (bba) -- (bb) -- (cbb);
\draw (b) -- (ba) -- (bba);
\draw (b) -- (ab) -- (cab) -- (ca) -- (a) -- (ba);
\draw (a) -- (ab);
\draw (c) -- (ca);
\draw (c) -- (cb);
\draw (c) -- (bc);
\end{tikzpicture}}
 \end{minipage}%
 \hfill
 \begin{minipage}{0.03\linewidth} 
  \centering 
  \begin{tabular}{c|}
  \\\\\\\\\\\\
 \end{tabular}
 \end{minipage}
 \hfill
 \begin{minipage}{0.38\linewidth} 
  \centering 
  \scalebox{1}{\begin{tikzpicture}[scale=.7]
\node (0) at (-0.5,-1.2) {$\emptyset$};
\node (1) at (-3,0) {$\{g_1\}$};
\node (2) at (-0.5,0) {$\{g_2\}$};
\node (4) at (2,0) {$\{g_4\}$};
\node (123) at (-3,2.4) {$\{g_1,g_2,g_3\}$};
\node (24) at (1,1.2) {$\{g_2,g_4\}$};
\node (124) at (0,2.4) {$\{g_1,g_2,g_4\}$};
\draw (0) -- (1) -- (123);
\draw (124) -- (1);
\draw (0) -- (2) -- (24) -- (124);
\draw (2) -- (123);
\draw (0) -- (4) -- (24);
\end{tikzpicture}}
 \end{minipage}%
\end{minipage}
\caption{
%The table (left) depicts a formal context $(\mathcal{G}, \mathcal{M}, I)$. 
The table (left) represents the mapping function $\delta$ of the pattern setup considered in running example \ref{example:patternsetup:presentation}. 
The diagram (center)
represents the set of non empty substrings in $\{a,b,c\}^+$ holding for at least one object in $\mathcal{G}$.
The diagram (right) represents the poset of definable sets $(\mathbb{P}_{ext}, \subseteq)$\label{fig:exampleSequences}.}
%\vspace*{-15pt}
\end{figure}
\end{example}

\begin{example}
\label{example:patternstructure:presentation}
%Consider the formal context $\mathbb{K} = (\mathcal{G}, \mathcal{M}, \mathcal{I}$) presented in Fig.~\ref{fig:exampleItemsets} (left). The set of objects is $\mathcal{G} = \{g_i\}_{1 \leq i \leq 4}$ and the set of attributes is  $\mathcal{M} = \{a, b, c\}$. We have $g_2\,\mathcal{I}\,a$ but $g_2\not\mathcal{I}\,b$.
Consider the \emph{pattern setup} $\mathbb{P} = (\mathcal{G}, \underline{\mathcal{D}}, \delta)$  presented in Fig.~\ref{fig:exampleItemsets} (left). The set of objects is $\mathcal{G} = \{g_i\}_{1 \leq i \leq 4}$ and the description space $\underline{\mathcal{D}}$ is the powerset ordered by set inclusion $(\wp(\mathcal{M}), \subseteq)$ (i.e. itemsets) with $\mathcal{M} = \{a, b, c\}$. Again descriptions holding for $g_4$ are all itemsets included in $\delta(g_4) = \{b,c\}$ (i.e. $\downarrow \delta(g_4) = \{\emptyset, \{b\}, \{c\}, \{b,c\}\}$).
\end{example}

\subsection{On Extent and Cover Operators}
We have seen that the relation \empha{``realizes"} build a binary relation between objects and descriptions. Based on this binary relation, two key operators, namely \empha{extent} and \empha{cover} are derived (see Definition \ref{def:extentoperator} and \ref{def:coverOperator}).
%\newpage
\begin{definition}
\label{def:extentoperator}
The \empha{extent operator}, denoted by $ext$, is the operator that takes each description $d \in \mathcal{D}$ to the subset of objects in $\mathcal{G}$ \emph{realizing} it:
\begin{eqnarray*}
\lfunction{ext}{\mathcal{D}}{\wp(\mathcal{G})}{d}{\{g \in \mathcal{G} \mid d \sqsubseteq \delta(g)\}} 
\end{eqnarray*}
The size of $ext(d)$ is called the \empha{support} of $d$, i.e. $support : d \mapsto |ext(d)|$.
 %Moreover, a set $A \in \wp(\mathcal{G})$ is said to be \emph{definable} iff $\exists d \in \mathcal{D}$ such that $A = ext(d)$. The set of all possible \emph{definable sets} is given thus by $ext[\mathcal{D}]$, is ordered by set inclusion $\subseteq$ and is denoted by $\mathbb{P}_{ext}.$   
\end{definition}

\begin{note}
Please note that for any $S \subseteq \mathcal{D}$, we denote $ext[S] = \{ext(d) \mid d \in S\}$.
\end{note}

%Analogically to extent operator, we define a second operator called $cov$ as follow. 
%\todoaimene{Change the name of cover operator to something including common descriptions?}

\begin{definition}
%[\textbf{Cover Operator}]
\label{def:coverOperator}
The \empha{cover operator}, denoted by $cov$, takes each subset of objects $A \subseteq \mathcal{G}$ to the set of common descriptions in $\mathcal{D}$ \emph{covering} all of them:
\begin{eqnarray*}
\lfunction{cov}{\wp(\mathcal{G})}{\wp(\mathcal{\mathcal{D}})}{A}{\delta[A]^\ell = \{d \in \mathcal{D} \mid (\forall g \in A) \: d \sqsubseteq \delta(g) \}}
\end{eqnarray*}
\end{definition}

\begin{example}
Consider the pattern setup presented in Example \ref{example:patternsetup:presentation}. We have:
%\begin{eqnarray*}
%ext(\{b, c\}) = \{g_1, g_4\} & cov(\{g_1, g_4\}) = \{\emptyset, \{b\}, \{c\}, \{b,c\}\}
%\end{eqnarray*}
$ext(``bb") = \{g_2, g_4\}$ and $cov(\{g_2, g_4\}) = \downarrow \delta(g_2) \cap  \downarrow \delta(g_4) = \{``b", ``bb", ``c"\}$. 
\end{example}

\begin{definition}
%[\textbf{Definable and Coverable sets}] 
A subset $A \subseteq \mathcal{G}$ is said to be:
\begin{itemize}
\item \empha{Definable}, \empha{Separable} or an \empha{Extent} if $(\exists d \in A)$ $A = ext(d)$.
\item \empha{Coverable} if $cov(A) \neq \emptyset$. 
\end{itemize}

The set of \empha{definable sets} is then given by: 
\begin{eqnarray*}
\mathbb{P}_{ext} = ext[\mathcal{D}] = \{ext(d) \mid d \in \mathcal{D}\}
\end{eqnarray*}
\end{definition}

\begin{note}
One should note that poset $(\mathbb{P}_{ext}, \subseteq)$ does form a subposet of $(\wp(\mathcal{G}), \subseteq)$, that is definable sets are naturally ordered by $\subseteq$.
The set of \emph{coverable sets} is naturally given by $\downarrow \mathbb{P}_{ext}$. 
In other words, any subset of a \emph{coverable set} is \emph{coverable}. Conversely, any superset of a \emph{non coverable set} is a \emph{non coverable set}.
\end{note}

\begin{example}
\label{example:definableandcoverable}
The poset of definable sets $(\mathbb{P}_{ext}, \subseteq)$ associated to the pattern setup presented in Example \ref{example:patternsetup:presentation} is depicted in \figref{fig:exampleSequences} (right). For example, it is clear that $\{g_2,g_4\}$ is definable since $ext(``bb") = \{g_2,g_4\}$. However, there is no description which extent is exactly $\{g_1,g_2\}$, hence $\{g_1,g_2\}$ will be said \empha{non-definable}. Still, $\{g_1,g_2\}$ is coverable since $g_1$ and $g_2$ share at least one common descritpion (i.e. $cov(\{g_1,g_2\} = \{``a",``b",``c"\} \neq \emptyset$). One should note also that $\{g_3,g_4\}$ is \empha{non-coverable} since they share no common symbol and the empty string is excluded from the pattern space. 
\end{example}

An important property arising directly from the definition of both $ext$ and $cov$ is given in Proposition~\ref{prop:orderreversingmappints}. It tells that: on the one hand, the more restrictive is a description, the less it covers objects in the database. On the other hand, the bigger is a set of objects the less they share descriptions in common. 

\begin{proposition}\label{prop:orderreversingmappints}
Operators $ext$ and $cov$ are \emph{order-reversing}:
\begin{eqnarray*}
(\forall c,d \in \mathcal{D})\:\:c \sqsubseteq d \Rightarrow ext(d) \subseteq ext(c)  & \phantom{:::} & (\forall A,B \subseteq \mathcal{G})\:\:A \subseteq B \Rightarrow cov(B) \subseteq cov(A)
\end{eqnarray*}
%In other words, the more restrictive is a description, the less is its support   
\end{proposition}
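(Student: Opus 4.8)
The plan is to prove both claims directly from the definitions of $ext$ and $cov$, together with the characterizations of principal filters/ideals and lower bounds already established in the excerpt. Both parts are essentially immediate unfoldings, so there is no serious obstacle here; the only thing to be careful about is keeping track of which direction of the order translates into which direction of set inclusion, since $ext$ and $cov$ both reverse order.

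First I would handle $ext$. Fix $c, d \in \mathcal{D}$ with $c \sqsubseteq d$, and let $g \in ext(d)$. By Definition~\ref{def:extentoperator} this means $d \sqsubseteq \delta(g)$. Since $\sqsubseteq$ is transitive and $c \sqsubseteq d$, we get $c \sqsubseteq \delta(g)$, i.e. $g \in ext(c)$. Hence $ext(d) \subseteq ext(c)$. Equivalently, one can phrase this via principal filters: $ext(d) = \{g \in \mathcal{G} \mid \delta(g) \in \uparrow d\}$, and $c \sqsubseteq d$ gives $\uparrow d \subseteq \uparrow c$ (monotonicity of $\uparrow$ on singletons, a special case of Lemma~\ref{lemma:downandupclosure}), so the preimages are nested the same way.

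Next I would handle $cov$. Fix $A, B \subseteq \mathcal{G}$ with $A \subseteq B$, and let $d \in cov(B) = \delta[B]^\ell$. Then $d \sqsubseteq \delta(g)$ for every $g \in B$; in particular this holds for every $g \in A$ since $A \subseteq B$, so $d \in \delta[A]^\ell = cov(A)$. Hence $cov(B) \subseteq cov(A)$. Alternatively, note $\delta[A] \subseteq \delta[B]$ because the image of a subset is a subset, and $S \mapsto S^\ell$ is order-reversing on $(\wp(\mathcal{D}), \subseteq)$ since $S^\ell = \bigcap_{s \in S}\downarrow s$ and a larger intersection index yields a smaller set; applying this to $\delta[A] \subseteq \delta[B]$ gives the claim. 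Either route is a one-line argument, and I would present the first (element-chasing) version as the main proof since it is the most self-contained, perhaps remarking on the second as the conceptual reason.
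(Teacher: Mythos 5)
Your element-chasing argument for both operators is exactly the proof given in the paper: for $ext$ it uses transitivity of $\sqsubseteq$ to pass from $d \sqsubseteq \delta(g)$ to $c \sqsubseteq \delta(g)$, and for $cov$ it restricts the universally quantified condition from $B$ to the subset $A$. The proposal is correct and takes essentially the same approach; the alternative phrasings via principal filters and the antitone map $S \mapsto S^\ell$ are fine but add nothing needed.
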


\begin{mdframed}
\begin{proof}
We have:
\begin{enumerate}
\item Let $A \subseteq B \subseteq \mathcal{G}$, let $d \in cov(B)$, thus $(\forall g \in B) \: d \sqsubseteq \delta(g)$. Since $A \subseteq B$, we conclude that $(\forall g \in A) \: d \sqsubseteq \delta(g)$ that is $d \in cov(A)$. Thus $cov(B) \subseteq cov(A)$.
\item Let $c, d \in \mathcal{D}$ such that $c \sqsubseteq d$. Let $g \in ext(d)$, that is $d \sqsubseteq \delta(g)$ thus $c \sqsubseteq \delta(g)$; that is $ g \in ext(c)$. We conclude that $ext(d) \subseteq ext(c)$.
\end{enumerate}
This concludes the proof.
\end{proof}  
\end{mdframed}

\begin{note}
We have seen in Proposition \ref{prop:orderreversingmappints} that mappings $ext$ and $cov$ are order reversing. Hence, one could think that $(ext, cov)$ may form some Galois connection\footnote{A Galois connection between two posets $(P, \leq)$ and $(Q, \sqsubseteq)$ is a pair $(f,g)$ with $f : P \to Q$ and $g : Q \to P$ are order-reversing and both operators $f\circ g$ and $g \circ f$ are extensive.}. However, it is not the case since $ext$ associate an extent to one description while $cov$ outputs a set of descriptions rather than one. In other words, these two mappings are not compatibles. Yet, we will see in next section that $ext$ will be involved into a Galois connection when the considered pattern setup verifies additional properties (i.e. the pattern setup is a pattern structure \cite{DBLP:conf/iccs/GanterK01}). Mapping $cov$ will also be involved in another Galois connection in Section~\ref{sec:completions}.
\end{note}

\begin{definition}
%[\textbf{Implications and Equivalence}]
For $c,d \in \mathcal{D}$, the \empha{pattern implication} $c \to d$ holds if $ext(c) \subseteq ext(d)$. That is, every object realizing $c$ realizes $d$. Dually, for $A,B \subseteq \mathcal{G}$, the \empha{object implication} $A \to B$ holds if $cov(A) \subseteq cov(B)$. That is, every description covering all object in $A$ covers also all objects in $B$.

We say that descriptions $c, d \in \mathcal{D}$ are \empha{equivalent} if $c \to d$ and $d \to c$ and we have $ext(c) = ext(d)$. Dual definition can be given for the \emph{equivalence between object sets}.
\end{definition}

\begin{note}
Please note that if $d \sqsubseteq c$ and since $ext$ is an \emph{order-reversing mapping}, we have $c \to d$. Regarding this observation, there is two types of implications between descriptions: (i) implications deduced directly from $\sqsubseteq$ and (ii) implications that are dependent on the pattern setup. While the former implications are intrinsic to the description space, the latter are more important since they are those who enclose the knowledge hidden in the pattern setup.    
\end{note}

\begin{example}
In the pattern setup presented in Example \ref{example:patternsetup:presentation} and \figref{fig:exampleSequences}, we have $ext(``bb") = \{g_2,g_4\}$ and $ext(``c") = \{g_1,g_2,g_4\}$. Hence, we have $``bb" \rightarrow ``c"$ or in other words in every string containing $``bb"$ in the pattern setup contains also $``c"$.
\end{example}

Proposition \ref{prop:compositionbetweenextandcov} gives characterizations of the set $ext[cov(A)]$ and $cov(ext(d))$ for $A \subseteq \mathcal{G}$ and $d \in \mathcal{D}$. This proposition will be useful later in this paper. %Please note that, $cov(ext(d))$ gives all descriptions that are conclusion of implications which premise is $d$.
\begin{proposition}\label{prop:compositionbetweenextandcov}
For $A \subseteq \mathcal{G}$ and $d \in \mathcal{D}$:
\begin{eqnarray*}
ext[cov(A)] & = & \{E \in \mathbb{P}_{ext} \mid A \subseteq E\} = \uparrow A \cap \mathbb{P}_{ext} \\
cov(ext(d)) & = & \{c \in \mathcal{D} \mid ext(d) \subseteq ext(c)\} = \{c \in \mathcal{D} \mid d \to c\}
\end{eqnarray*}

%where $\uparrow A$ denotes the set of superset of $A$ in $\wp(\mathcal{G})$.
\end{proposition}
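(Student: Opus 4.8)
The statement has two parts; I would prove each by unwinding definitions. For the first identity, start from $ext[cov(A)]$. By definition $cov(A) = \delta[A]^\ell = \{d \in \mathcal{D} \mid (\forall g \in A)\; d \sqsubseteq \delta(g)\}$, so $ext[cov(A)] = \{ext(d) \mid d \in \mathcal{D},\; (\forall g \in A)\; d \sqsubseteq \delta(g)\}$. I would show this set equals $\{E \in \mathbb{P}_{ext} \mid A \subseteq E\}$ by double inclusion. For $(\subseteq)$: if $d$ covers every $g \in A$, then $A \subseteq ext(d)$ directly from the definition of $ext$, and $ext(d) \in \mathbb{P}_{ext}$ by definition of $\mathbb{P}_{ext}$. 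For $(\supseteq)$: if $E \in \mathbb{P}_{ext}$ with $A \subseteq E$, pick $d$ with $E = ext(d)$; then $A \subseteq ext(d)$ means every $g \in A$ satisfies $d \sqsubseteq \delta(g)$, i.e. $d \in cov(A)$, hence $E = ext(d) \in ext[cov(A)]$. Finally $\{E \in \mathbb{P}_{ext} \mid A \subseteq E\} = \uparrow A \cap \mathbb{P}_{ext}$ is immediate since $\uparrow A$ in $(\wp(\mathcal{G}), \subseteq)$ is exactly $\{E \subseteq \mathcal{G} \mid A \subseteq E\}$.

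For the second identity, start from $cov(ext(d)) = \{c \in \mathcal{D} \mid (\forall g \in ext(d))\; c \sqsubseteq \delta(g)\}$. I would show the condition "$(\forall g \in ext(d))\; c \sqsubseteq \delta(g)$" is equivalent to "$ext(d) \subseteq ext(c)$". Indeed, $g \in ext(c)$ means precisely $c \sqsubseteq \delta(g)$, so "$(\forall g \in ext(d))\; c \sqsubseteq \delta(g)$" says exactly "$(\forall g)\; g \in ext(d) \Rightarrow g \in ext(c)$", i.e. $ext(d) \subseteq ext(c)$. The last equality $\{c \in \mathcal{D} \mid ext(d) \subseteq ext(c)\} = \{c \in \mathcal{D} \mid d \to c\}$ is just the definition of the pattern implication $d \to c$.

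There is no real obstacle here: the whole proposition is a routine unfolding of the definitions of $ext$, $cov$, $\mathbb{P}_{ext}$, the principal filter in the powerset, and pattern implication, together with elementary double-inclusion arguments. The only point requiring a moment's care is that in the first part one must exhibit a witnessing description $d$ for each $E \in \mathbb{P}_{ext}$, which is guaranteed precisely because membership in $\mathbb{P}_{ext} = ext[\mathcal{D}]$ means $E$ is of the form $ext(d)$.
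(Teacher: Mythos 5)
Your proposal is correct and follows essentially the same route as the paper's proof: both parts are a direct unwinding of the definitions of $ext$, $cov$, $\mathbb{P}_{ext}$ and pattern implication (the paper writes the first part as a chain of equivalences where you use double inclusion, but the content is identical). As a minor point in your favour, you state the implication in the correct direction $d \to c$, whereas the paper's own proof ends with a typo ($c \to d$).
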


\begin{mdframed}
\begin{proof}
We show the two equations separately:
%\begin{enumerate}
%\item  

\textbf{(1)} Let $E \subseteq \mathcal{G}$, we have:
\begin{eqnarray*}
E \in ext[cov(A)] 
& \Leftrightarrow (\exists d \in cov(A)) \: E = ext(d) 
& \Leftrightarrow (\exists d \in \mathcal{D} \forall g \in A) \: d \sqsubseteq \delta(g) \\
& \Leftrightarrow  (\exists d \in \mathcal{D}) \: A \subseteq ext(d) = E %\\
& \Leftrightarrow E \in \mathbb{P}_{ext}\,\cap \uparrow A
\end{eqnarray*}

We conclude that $ext[cov(A)] = \{E \in \mathbb{P}_{ext} \mid A \subseteq E\} = \uparrow A \cap \mathbb{P}_{ext}$

\smallskip

\textbf{(2)} Let $c \in \mathcal{D}$, we have:
\begin{eqnarray*}
c \in cov(ext(d)) 
 \Leftrightarrow(\forall g \in ext(d)) \: c \sqsubseteq  \delta(g)
\Leftrightarrow (\forall g \in ext(d)) \: g \in ext(c) \\
\Leftrightarrow ext(d) \subseteq ext(c)
\end{eqnarray*}

Thus $cov(ext(d)) = \{c \in \mathcal{D} \mid ext(d) \subseteq ext(c)\} = \{c \in \mathcal{D} \mid c \to d\}$. 
%The second part of the proof is straightforward since $d \to c$ iff $ext(d) \subseteq ext(c)$.
%\end{enumerate}
%This concludes the proof.
\end{proof}
\end{mdframed}

\begin{example}
Consider the pattern setup presented in Example \ref{example:patternsetup:presentation} and its associated $(\mathbb{P}_{ext}, \subseteq)$ depicted in \figref{fig:exampleSequences} (right). We have:
%\begin{eqnarray*}
%ext(\{b, c\}) = \{g_1, g_4\} & cov(\{g_1, g_4\}) = \{\emptyset, \{b\}, \{c\}, \{b,c\}\}
%\end{eqnarray*}
$ext(``bb") = \{g_2, g_4\}$ and $cov(\{g_2, g_4\}) = \{``b", ``bb", ``c"\}$. 
Hence:
\begin{itemize}[topsep=1pt,noitemsep]
\item $ext[cov(\{g_2, g_4\})] = \{ext(``b"),ext(``bb"),ext(``c")\} = \{\{g_1,g_2,g_4\},\{g_2,g_4\}\}$.

\item $cov(ext(``bb")] = cov(\{g_2,g_4\}) = \{``b", ``bb", ``c"\}$.
\end{itemize}
\end{example}

\subsection{A Minimal Representation of a Pattern Setup}

An important notion analogous to what is called \emph{representation context} in pattern structures (see~\cite{DBLP:conf/iccs/GanterK01,DBLP:conf/icfca/BuzmakovKN15}) is introduced in Theorem \ref{thm:representational}. Technically, such a representation does not provide a practical way to explore definable sets of an arbitrary pattern setups, but helps to simulate definable sets search space of a pattern setup independently from the description space. 
Before introducing the Theorem, we present Proposition \ref{prop:setsystemisaextentset}. 

\begin{proposition}\label{prop:setsystemisaextentset}
Let $\mathcal{G}$ be a non empty finite set and let $S \subseteq \wp(\mathcal{G})$, we have 
\begin{eqnarray*}
\exists \mathbb{P} \emph{ a pattern setup such that } S = \mathbb{P}_{ext} & \Longleftrightarrow & (\forall g \in \mathcal{G}) \text{ } \bigcap \: (\uparrow g \cap S) \in S
\end{eqnarray*}
\end{proposition}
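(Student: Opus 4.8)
The plan is to prove the two directions separately, building on Proposition~\ref{prop:compositionbetweenextandcov}, which already tells us that $ext[cov(A)] = \uparrow A \cap \mathbb{P}_{ext}$ for any $A \subseteq \mathcal{G}$.

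\medskip

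\noindent\textbf{($\Rightarrow$)} Suppose $S = \mathbb{P}_{ext}$ for some pattern setup $\mathbb{P} = (\mathcal{G}, \underline{\mathcal{D}}, \delta)$. Fix $g \in \mathcal{G}$. The key observation is that $\bigcap(\uparrow g \cap S)$ should equal $ext(\delta(g))$: indeed, the sets in $S$ containing $g$ are exactly the sets $ext(d)$ with $g \in ext(d)$, i.e. with $d \sqsubseteq \delta(g)$, which (since $ext$ is order-reversing by Proposition~\ref{prop:orderreversingmappints}) all contain $ext(\delta(g))$, and $ext(\delta(g))$ itself is one of them. Hence $ext(\delta(g))$ is the smallest element of $\uparrow g \cap S$, so $\bigcap(\uparrow g \cap S) = ext(\delta(g)) \in \mathbb{P}_{ext} = S$. (This also uses $\uparrow g \cap S \neq \emptyset$, which holds since $g \in ext(\delta(g))$.) Equivalently, one can phrase this via Proposition~\ref{prop:compositionbetweenextandcov}: $\uparrow g \cap S = ext[cov(\{g\})]$, and $cov(\{g\}) = \downarrow \delta(g)$ has maximum $\delta(g)$, whose extent is the smallest member of the image.

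\medskip

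\noindent\textbf{($\Leftarrow$)} Suppose $S \subseteq \wp(\mathcal{G})$ satisfies $\bigcap(\uparrow g \cap S) \in S$ for every $g \in \mathcal{G}$. I need to construct a pattern setup $\mathbb{P}$ with $\mathbb{P}_{ext} = S$. The natural candidate is to take the description space to be $(S, \supseteq)$ itself (sets ordered by reverse inclusion, so that ``more restrictive'' descriptions are smaller sets), with $\mathcal{D} = S$, and to define $\delta : \mathcal{G} \to S$ by $\delta(g) = \bigcap(\uparrow g \cap S)$, which lands in $S$ precisely by hypothesis. One then checks that with this choice, for $d \in S$, $ext(d) = \{g \in \mathcal{G} \mid d \supseteq \delta(g)\} = \{g \mid d \supseteq \bigcap(\uparrow g \cap S)\}$. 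The verification splits into showing $g \in d \Rightarrow d \supseteq \delta(g)$ (immediate, since then $d \in \uparrow g \cap S$ so $d$ contains the intersection) and $d \supseteq \delta(g) \Rightarrow g \in d$ (this is where finiteness of $\mathcal{G}$, or at least some care, matters: one must argue $g \in \bigcap(\uparrow g \cap S)$, which holds because $g$ belongs to every set in $\uparrow g \cap S$ by definition of $\uparrow g$). Combining, $ext(d) = d$ for each $d \in S$, hence $\mathbb{P}_{ext} = ext[S] = S$.

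\medskip

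\noindent\textbf{Main obstacle.} The routine calculations are straightforward; the one point requiring genuine attention is the claim $g \in \bigcap(\uparrow g \cap S)$ used in the ($\Leftarrow$) direction (equivalently, that $\delta$ as defined is a legitimate ``description of $g$'' in the sense that $g$ realizes $\delta(g)$). This is true for the naive reason that $g$ is an element of every $E \in S$ with $g \in E$. A subtler point is making sure the definition $\delta(g) = \bigcap(\uparrow g \cap S)$ makes sense even when no obvious ``generating'' description is available --- this is exactly what the hypothesis $\bigcap(\uparrow g \cap S) \in S$ guarantees, and it is also why the statement restricts to finite $\mathcal{G}$ so that $S$ and all these intersections live comfortably inside $\wp(\mathcal{G})$ without boundedness worries. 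I would present the ($\Leftarrow$) construction carefully and let the ($\Rightarrow$) direction follow quickly from Proposition~\ref{prop:compositionbetweenextandcov}.
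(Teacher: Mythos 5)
Your proposal is correct and follows essentially the same route as the paper: the ($\Rightarrow$) direction via the identity $\bigcap(\uparrow g \cap \mathbb{P}_{ext}) = ext(\delta(g))$ obtained from Proposition~\ref{prop:compositionbetweenextandcov} and the order-reversal of $ext$, and the ($\Leftarrow$) direction via the construction $\mathbb{P} = (\mathcal{G}, (S,\supseteq), g \mapsto \bigcap(\uparrow g \cap S))$ with the same double-inclusion check that $ext(A)=A$ for each $A \in S$. The one point you flag as the ``main obstacle'' ($g \in \bigcap(\uparrow g \cap S)$) is handled identically in the paper's proof, so nothing is missing.
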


\begin{mdframed}
\begin{proof}
Before showing the equivalence let us prove the following property:
\begin{eqnarray}\label{eq:proof316eq1}
(\forall g \in \mathcal{G})  & \bigcap \left(\uparrow g \cap \mathbb{P}_{ext}\right) = ext(\delta(g)) \in \mathbb{P}_{ext}
\end{eqnarray}

Recall that $ext[cov(\{g\})] = \uparrow g \cap \mathbb{P}_{ext}$ (See proposition \ref{prop:compositionbetweenextandcov}). 
Let $g \in \mathcal{G}$, we have  $\delta(g) \in cov(\{g\})$, thus $ext(\delta(g)) \in ext[cov(\{g\})]$. Let us show that $ext(\delta(g))$ is a lower bound of $ext[cov(\{g\})]$. We have:
$ cov(\{g\}) = \{d \in \mathcal{D} \mid d \sqsubseteq \delta(g)\} $. Thus, $\forall d \in cov(\{g\}) : d \sqsubseteq \delta(g)$. Since $ext$ is an order reversing operator, we obtain: $\forall A \in ext[cov(\{g\})] : ext(\delta(g)) \subseteq A$. Thus, $ext(\delta(g))$ is the smallest element of $ext[cov(\{g\})]$. That is $\bigcap \left(\uparrow g \cap \mathbb{P}_{ext}\right) = ext(\delta(g))$. 

We show now the two implications independently: 
\begin{itemize}
    \item[($\Rightarrow$)] Let $S = \mathbb{P}_{ext}$ for some pattern setup $\mathbb{P}$. Using eq.~(\ref{eq:proof316eq1}), $\bigcap \left(\uparrow g \cap S\right) \in S$.
    
    \item[($\Leftarrow$)] Let $S \subseteq \wp(\mathcal{G})$ for which $\forall g \in \mathcal{G}$ we have $\bigcap(\uparrow g \cap S) \in S$. Let us now define the following pattern setup:
    \begin{eqnarray*}
    \mathbb{P} = \left(\mathcal{G}, (S, \supseteq), \delta : g \mapsto \bigcap(\uparrow g \cap S)\right)
    \end{eqnarray*}
    Let $A \in S$ be a description, we have: $ext(A) = \{g \in \mathcal{G} \mid  \bigcap(\uparrow g \cap S) \subseteq A\}$. Let us show that $ext(A) = A$ by showing double inclusion: \textbf{(1)} Let $g \in A$, thus $A \in (\uparrow g \cap S)$. It follows that $\bigcap (\uparrow g \cap S) \subseteq A$. We conclude that $g \in ext(A)$. Therefor $A \subseteq ext(A)$. \textbf{(2)} Let $g \in ext(A)$, thus $\bigcap (\uparrow g \cap S) \subseteq A$. Since $\forall B \in \uparrow g \cap S$ we have $g \in B$, we have $g \in \bigcap (\uparrow g \cap S)$, that is $g \in A$. We conclude that $ext(A) \subseteq A$. Both inclusion leads us to have $(\forall A \in S) \: ext(A) = A$. We conclude that $ext[S] = S$. In other words, $\mathbb{P}_{ext} = S$.
\end{itemize}

This concludes the proof.
\end{proof}

\end{mdframed}

\begin{example}
Proposition \ref{prop:setsystemisaextentset} tells that not all families of subsets of $\mathcal{G}$ could be seen as a set of extents of some pattern setup. Consider the poset depicted in \figref{fig:hassediagramSimple} \textbf{(4)} where $\mathcal{G} = \{g_1,g_2,g_3,g_4\}$ and $S = \{\{g_1,g_2\},\{g_1,g_3\},\{g_1,g_2,g_3\},\{g_1,g_2,g_3,g_4\}\}$ can never be seen as a $\mathbb{P}_{ext}$ for some pattern setup $\mathbb{P}$. Indeed, $\uparrow g_1 \cap S = \{\{g_1,g_2\},\{g_1,g_3\}\}$ whose intersection is not in $S$.
\end{example}

%we start to notice the following  not any family of subsets $S \subseteq \wp(\mathcal{G})$ can be seen as a set of extents $\mathbb{P}_{ext}$ of some pattern setup~$\mathbb{P}$. Proposition \ref{prop:setsystemisaextentset} formalizes this observation.

%Every pattern structure $\mathbb{P} = (\mathcal{G}, (\mathcal{D}, \sqsubseteq), \delta)$ can be associated to some formal context $\mathbb{R}(\mathbb{P}) = (\mathcal{G}, \mathcal{M}, \mathcal{I})$, which has an isomorphic concept lattice . Below, we define formally what a representation is. 
%Another important property concerning the upper approximations set is stated in proposition \ref{prop:singletonObjectHaveProprty}. It states the fact that, particularly, singleton objects have always a \emph{unique upper approximation} for any pattern setup. 

%\begin{definition}
%Two pattern setups $\mathbb{P}$ and $\mathbb{R}$ are said to be equivalent iff $\mathbb{R}_{ext} = \mathbb{P}_{ext}$.
%\end{definition}

\begin{theorem}
%[\textbf{Minimal Representation}]
\label{thm:representational}
For any pattern setup $\mathbb{P}$, 
%with the set of definable sets  $\mathbb{P}_{ext}$. 
the pattern setup $\mathbb{R}(\mathbb{P})$ given by
\begin{eqnarray*}
\mathbb{R}(\mathbb{P}) = \left( \mathcal{G}, (\mathbb{P}_{ext}, \supseteq), g \mapsto \bigcap (\uparrow g \cap \mathbb{P}_{ext}) \right)
\end{eqnarray*}
is called the \empha{minimal representation} of $\mathbb{P}$ and we have $\mathbb{R}(\mathbb{P})_{ext} = \mathbb{P}_{ext}$.
\end{theorem}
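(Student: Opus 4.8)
The plan is to reduce everything to Proposition~\ref{prop:setsystemisaextentset} applied with $S = \mathbb{P}_{ext}$, and in particular to equation~(\ref{eq:proof316eq1}) established in its proof. First I would check that $\mathbb{R}(\mathbb{P})$ is a well-defined pattern setup. The poset part is immediate: $(\mathbb{P}_{ext}, \supseteq)$ is a subposet of $(\wp(\mathcal{G}), \supseteq)$, hence a description space. The only thing to verify is that the proposed description map actually takes values in $\mathbb{P}_{ext}$, i.e.\ that $\bigcap(\uparrow g \cap \mathbb{P}_{ext}) \in \mathbb{P}_{ext}$ for every $g \in \mathcal{G}$. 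This is exactly equation~(\ref{eq:proof316eq1}): using $ext[cov(\{g\})] = \uparrow g \cap \mathbb{P}_{ext}$ from Proposition~\ref{prop:compositionbetweenextandcov}, the fact that $\delta(g) \in cov(\{g\})$, and that $d \sqsubseteq \delta(g)$ for every $d \in cov(\{g\})$ so that, by the order-reversing property of $ext$ (Proposition~\ref{prop:orderreversingmappints}), $ext(\delta(g))$ is the smallest member of $ext[cov(\{g\})]$, one obtains $\bigcap(\uparrow g \cap \mathbb{P}_{ext}) = ext(\delta(g)) \in \mathbb{P}_{ext}$.

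With well-definedness in hand, I would prove $\mathbb{R}(\mathbb{P})_{ext} = \mathbb{P}_{ext}$. Since $S := \mathbb{P}_{ext}$ satisfies the right-hand condition of Proposition~\ref{prop:setsystemisaextentset} (that is precisely what the previous paragraph shows), one can invoke the $(\Leftarrow)$ direction of that proposition: the pattern setup it constructs from $S$ is literally $\mathbb{R}(\mathbb{P})$, and its proof shows that the extent (computed in that setup) of every $A \in S$ equals $A$, hence $\mathbb{R}(\mathbb{P})_{ext} = S = \mathbb{P}_{ext}$. If one prefers a self-contained argument, the same double inclusion can be re-run here: writing $ext'$ for the extent operator of $\mathbb{R}(\mathbb{P})$ and recalling that the order on $\mathbb{P}_{ext}$ is $\supseteq$, one has $ext'(A) = \{ g \in \mathcal{G} \mid \bigcap(\uparrow g \cap \mathbb{P}_{ext}) \subseteq A \}$ for $A \in \mathbb{P}_{ext}$; if $g \in A$ then $A \in \uparrow g \cap \mathbb{P}_{ext}$, so $\bigcap(\uparrow g \cap \mathbb{P}_{ext}) \subseteq A$ and $g \in ext'(A)$, while if $g \in ext'(A)$ then, since $g$ belongs to every member of $\uparrow g \cap \mathbb{P}_{ext}$, we get $g \in \bigcap(\uparrow g \cap \mathbb{P}_{ext}) \subseteq A$; thus $ext'(A) = A$ and therefore $\mathbb{R}(\mathbb{P})_{ext} = \mathbb{P}_{ext}$.

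The argument is essentially bookkeeping; the only genuinely substantive point is the identity $\bigcap(\uparrow g \cap \mathbb{P}_{ext}) = ext(\delta(g))$, i.e.\ recognizing that this intersection is not merely an infimum taken in $(\wp(\mathcal{G}), \subseteq)$ but is actually attained inside $\mathbb{P}_{ext}$ --- which is where the order-reversing property of $ext$ and the membership $\delta(g) \in cov(\{g\})$ enter. One should also keep careful track of the direction of the order on $(\mathbb{P}_{ext}, \supseteq)$, since ``$\sqsubseteq$'' there means ``$\supseteq$'', so that ``$A$ covers $g$'' unwinds to ``$\delta_{\mathbb{R}(\mathbb{P})}(g) \subseteq A$''. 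As in Proposition~\ref{prop:setsystemisaextentset}, no finiteness of $\mathcal{G}$ is really needed: the intersection defining $\delta_{\mathbb{R}(\mathbb{P})}(g)$ is taken over the nonempty family $\uparrow g \cap \mathbb{P}_{ext}$, which always contains $ext(\delta(g))$.
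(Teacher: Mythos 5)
Your proof is correct and follows essentially the same route as the paper: the paper likewise derives the theorem as a corollary of Proposition~\ref{prop:setsystemisaextentset}, observing that $\mathbb{R}(\mathbb{P})$ is exactly the setup built in the $(\Leftarrow)$ direction and that $\mathbb{P}_{ext}$ satisfies the required closure condition via equation~(\ref{eq:proof316eq1}). You merely spell out the well-definedness of the description map and the double inclusion $ext'(A)=A$ in more detail than the paper does.
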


\begin{mdframed}
\begin{proof}
Theorem \ref{thm:representational} is a corollary of Proposition~\ref{prop:setsystemisaextentset}. Indeed, the pattern setup $\mathbb{R}(\mathbb{P})$ is the same as the one built in the proof of Proposition \ref{prop:setsystemisaextentset} $(\Leftarrow)$ since $\mathbb{P}_{ext}$ is a set system verifying the property $(\forall g \in \mathcal{G}) \: \bigcap (\uparrow g \cap \mathbb{P}_{ext}) \in \mathbb{P}_{ext}$ (i.e. implication ($\Rightarrow$)). 
We have $\mathbb{R}(\mathbb{P})_{ext} = \mathbb{P}_{ext}$.
Moreover, this representation is said to be minimal since any proper subposet of  $(\mathbb{P}_{ext}, \supseteq)$ will drop at least one definable set.
\end{proof}
\end{mdframed}

\begin{comment}
\begin{lemma}\label{lemma:singletonObjectHaveProprty}
We have
\begin{eqnarray*}
(\forall g \in \mathcal{G}) & \overline{\{g\}} = \{ext(\delta(g))\}
\end{eqnarray*}
That is, each singleton set $\{g\} \subseteq \mathcal{G}$ is coverable and has a \emph{unique upper-approximation}. 
\end{lemma}
\end{comment}
%\newpage
\section{Pattern Structures\label{sec:patternStructure}}
Pattern structures were introduced in \cite{DBLP:conf/iccs/GanterK01}. They require that every set of objects has a greatest common description (least general generalization). A formal definition is given in Definition \ref{def:patternstructure}. Pattern structures provide a very strong tool to formalize a large class of pattern languages \citep{DBLP:conf/rsfdgrc/Kuznetsov09}. For instance, pattern setups over the language of itemsets \citep{GanterW99}, intervals \citep{DBLP:conf/ijcai/KaytoueKN11}, convex polygons \citep{DBLP:conf/ijcai/BelfodilKRK17}, sequence sets \citep{DBLP:journals/ijgs/BuzmakovEJKNR16}\footnote{Sequence and graphs patterns will be discussed in the next section, since they do not induce pattern structures directly, but the sets of incomparable patterns do.}, and graph sets\textsuperscript{2}\citep{DBLP:conf/pkdd/Kuznetsov99} are all pattern structures. 

\begin{definition}
%[\textbf{Pattern Structure}]
\label{def:patternstructure}
A pattern setup $\mathbb{P}=(\mathcal{G}, \underline{\mathcal{D}}, \delta)$ is said to be a \empha{pattern structure} iff every subset of objects has a greatest common description. Formally:
\begin{eqnarray*}
(\forall S \subseteq \delta[\mathcal{G}]) \:\: S \text{ does have  a meet } \bigsqcap S
\end{eqnarray*}
\end{definition}

\begin{example}
\label{example:patternstructure:presentation2}
The pattern setup presented in Example~\ref{example:patternstructure:presentation} and \figref{fig:exampleItemsets} is a pattern structure. Indeed, since the description space is the powerset lattice $(\wp(\{a,b,c\}), \subseteq)$ (i.e a complete lattice) then every subsets $S \subseteq \delta[G] \subseteq \mathcal{D}$ does have a meet which is the set intersection $\bigcap S$. 

However, the pattern setup $\mathbb{P}$ presented in Example \ref{example:patternsetup:presentation} and \figref{fig:exampleSequences} is not a pattern structure. Indeed, the set of common descriptions $cov(\{g_2,g_4\}) = \{``b", ``bb", ``c"\}$ does not have a maximum (i.e. $\{\delta(g_2),\delta(g_4) \}$ does not have a meet) since $\{``b", ``bb", ``c"\}$ has two maximal elements.

\end{example}

One can define a new operator, namely the \empha{intent}, in a pattern structure thanks to the existence of the meet.

\begin{definition}
%[\textbf{Intent Operators}]
\label{def:extentIntentOperators}
The \empha{intent operator}, denoted by $int$, is the operator that takes each subset of objects $A \subseteq \mathcal{G}$ to the greatest common description in $\mathcal{D}$ \emph{covering} them (i.e. the maximum of $cov(A)$). Formally: 
\begin{eqnarray*}
\lfunction{int}{\wp(\mathcal{G})}{\mathcal{D}}{A}{inf\,\delta[A] =\bigsqcap \delta[A]}
\end{eqnarray*}
\end{definition}

\begin{note}
In a pattern structure, the pair of operators $(ext, int)$ forms a \emph{Galois connection} between posets $(\wp(\mathcal{G}), \subseteq)$ and $(\mathcal{D}, \sqsubseteq)$. Thus, $ext \circ int$ and $int \circ ext$ form \emph{closure operators} (cf. Proposition 8 in \cite{GanterW99} book) on the two posets respectively. Thanks to this Galois Connection, one can define a complete lattice based on the the closed elements. 
\end{note}

\begin{definition}
%[\textbf{Concept Lattice}]
Let $\mathbb{P}=(\mathcal{G}, \underline{\mathcal{D}}, \delta)$ be a \emph{pattern structure}. The \empha{(pattern) concept lattice} associated to $\mathbb{P}$ is the \emph{complete lattice} denoted by $\underline{\mathfrak{B}}(\mathbb{P}) = (\mathfrak{B}(\mathbb{P}), \leq)$. Elements of $\mathfrak{B}(\mathbb{P})$ are called \empha{(pattern) concepts} and are given by:
\begin{eqnarray*}
(A, d) \in \wp(\mathcal{G}) \times \mathcal{D} & \emph{ s.t. } & A = ext(d) \emph{ and } d = int(A)
\end{eqnarray*}
The concepts are ordered by $\leq$ as follows:
%\begin{eqnarray*}
$(A_1,d_1) \leq (A_2, d_2) \Leftrightarrow A_1 \subseteq A_2 \Leftrightarrow d_2 \sqsubseteq d_1$.
%\end{eqnarray*}
%The meet $\bigwedge$ and the join $\bigvee$ are given by (with $T$ is an index set):
%\begin{eqnarray*}
%\bigwedge_{t \in T} (A_t,d_t) & = & \left(\bigcap A_t, int\left(ext\left(\bigsqcup d_t\right)\right)\right) \\
%\bigvee_{t \in T} (A_t,d_t) & = & \left(ext\left(int\left(\bigcup A_t\right)\right), \bigsqcap d_t\right)
%\end{eqnarray*}
\end{definition}

\begin{note}
Two complete lattices isomorphic to the concept lattice can be derived: 
\begin{enumerate}[topsep=1pt]
\item The poset of \emph{definable sets} $(\mathbb{P}_{ext}, \subseteq)$ which on a \emph{$\bigcap$-structures} (i.e. \emph{Moore family}, \emph{closure system}) in the powerset lattice $(\wp(\mathcal{G}), \subseteq)$. Note that \emph{definable sets} are the fixpoints of the closure operator $ext \circ int$.
\item The poset of \emph{closed patterns} $(\mathcal{D}_\delta, \sqsupseteq)$ with $\mathcal{D}_\delta = int[\wp(\mathcal{G})] = \{\bigsqcap \delta[A] \mid A \subseteq \mathcal{G}\}$ is a complete lattice. Elements of $\mathcal{D}_\delta$ are called \emph{closed patterns} since they are fixpoints of the closure operator $int \circ ext$.
\end{enumerate}
\end{note}

Another important remark about the closure operator $ext \circ int$ is that it takes to a subset of object $A \subseteq \mathcal{G}$ the smallest definable set $E \in \mathbb{P}_{ext}$ enclosing it. Formally:
\begin{proposition}\label{prop:extintclosure}
Let $\mathbb{P} = (\mathcal{G}, (\mathcal{D}, \sqsubseteq), \delta)$ be a pattern structure, we have:
\begin{eqnarray*}
\lfunction{ext \circ int}{\wp(\mathcal{G})}{\wp(\mathcal{G})}{A}{\bigcap \big\{E \in \mathbb{P}_{ext} \mid A \subseteq E\big\} = \bigcap \left(\uparrow A \cap \mathbb{P}_{ext}\right)}
\end{eqnarray*}
\end{proposition}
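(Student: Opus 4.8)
The plan is to split the statement into its two displayed equalities and dispatch the trivial one first. The equality $\{E \in \mathbb{P}_{ext} \mid A \subseteq E\} = \uparrow A \cap \mathbb{P}_{ext}$ requires nothing: in the powerset lattice $(\wp(\mathcal{G}), \subseteq)$ the principal filter of $A$ is $\uparrow A = \{E \subseteq \mathcal{G} \mid A \subseteq E\}$, so intersecting it with $\mathbb{P}_{ext}$ returns exactly the left-hand family — this is precisely the identity already recorded in Proposition~\ref{prop:compositionbetweenextandcov}, where it is moreover shown to equal $ext[cov(A)]$. Thus everything reduces to proving $(ext \circ int)(A) = \bigcap \{E \in \mathbb{P}_{ext} \mid A \subseteq E\}$.

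I would prove this by double inclusion, using only that $ext \circ int$ is a closure operator on $(\wp(\mathcal{G}), \subseteq)$ — extensive, monotone, idempotent, by the Note following Definition~\ref{def:extentIntentOperators} — and that its set of fixpoints is exactly $\mathbb{P}_{ext}$. For $\supseteq$: extensivity gives $A \subseteq (ext \circ int)(A)$, and idempotency makes $(ext \circ int)(A)$ a fixpoint, hence a member of $\mathbb{P}_{ext}$; so $(ext \circ int)(A)$ is one of the sets being intersected, whence $\bigcap \{E \in \mathbb{P}_{ext} \mid A \subseteq E\} \subseteq (ext \circ int)(A)$. For $\subseteq$: if $E \in \mathbb{P}_{ext}$ with $A \subseteq E$, then $E$ is a fixpoint, so $(ext \circ int)(E) = E$, and monotonicity applied to $A \subseteq E$ yields $(ext \circ int)(A) \subseteq (ext \circ int)(E) = E$; since this holds for every such $E$, $(ext \circ int)(A) \subseteq \bigcap \{E \in \mathbb{P}_{ext} \mid A \subseteq E\}$.

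The only point that needs a line of justification — and where I expect the sole subtlety to sit — is the identification of the fixpoints of $ext \circ int$ with $\mathbb{P}_{ext} = ext[\mathcal{D}]$. One direction is obvious ($A = ext(int(A)) \in ext[\mathcal{D}]$). For the converse one uses that in a pattern structure $int$ is total on $\wp(\mathcal{G})$ and $(ext, int)$ is an antitone Galois connection between $(\wp(\mathcal{G}), \subseteq)$ and $(\mathcal{D}, \sqsubseteq)$, so the standard identity $ext = ext \circ int \circ ext$ holds and gives $(ext \circ int)(ext(d)) = ext(d)$ for every $d \in \mathcal{D}$. (This is also stated in the Note on the concept lattice, so it may simply be cited.) One should also observe that the family $\{E \in \mathbb{P}_{ext} \mid A \subseteq E\}$ is nonempty, since it contains $(ext \circ int)(A)$ — and indeed contains $\mathcal{G} = ext(\bigsqcap \delta[\mathcal{G}])$ — so the intersection is a genuine subset of $\mathcal{G}$ rather than $\bigcap \emptyset$.

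Two alternative routes are worth noting. First, more structurally: since $(\wp(\mathcal{G}), \subseteq)$ is a complete lattice and $ext \circ int$ a closure operator on it, the closure-operator/closure-system correspondence recalled before Definition~\ref{def:closureSystem} (Theorem~1 in~\cite{GanterW99}) says that $\mathbb{P}_{ext}$ is the closure system of fixpoints of $ext \circ int$ and that the closure operator reconstructed from it, namely $A \mapsto \bigwedge \{E \in \mathbb{P}_{ext} \mid A \subseteq E\} = \bigcap \{E \in \mathbb{P}_{ext} \mid A \subseteq E\}$, is forced to coincide with $ext \circ int$. Second, and most directly: $int(A) = \bigsqcap \delta[A]$ is, by Definition~\ref{def:infimumsupremum}, the greatest element of $\delta[A]^{\ell} = cov(A)$; since $ext$ is order-reversing it carries this greatest element to the least element of $ext[cov(A)]$, and the least element of a family of sets that belongs to the family is its intersection; combining this with $ext[cov(A)] = \{E \in \mathbb{P}_{ext} \mid A \subseteq E\}$ from Proposition~\ref{prop:compositionbetweenextandcov} delivers the claim in one line.
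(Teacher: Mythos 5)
Your proposal is correct and rests on the same fact as the paper's proof: the paper simply cites Theorem~1 of \cite{GanterW99} (the closure-operator/closure-system correspondence applied to $ext \circ int$ with fixpoint set $\mathbb{P}_{ext}$), which is exactly your first ``alternative route,'' while your main double-inclusion argument is just an unpacking of that cited theorem. No gaps; the one point you flag as needing care --- that the fixpoints of $ext \circ int$ are exactly $\mathbb{P}_{ext}$, via $ext = ext \circ int \circ ext$ --- is indeed the only nontrivial ingredient, and you justify it properly.
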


\begin{mdframed}
\begin{proof}
This result is straightforward from the fact that $ext \circ int$ is a closure operator. 
Indeed, according to Theorem 1 in \cite{GanterW99} (page 8), we have $\mathbb{P}_{ext} = \{ext \circ int ( A ) \mid A \subseteq \mathcal{G}\}$ is closure system. By application of the theorem we have:
$
ext \circ int : A \mapsto \bigcap \{E \in \mathbb{P}_{ext} \mid A \subseteq E\}
$.
\end{proof}
\end{mdframed}

\begin{example}\label{example:patternstructure:presentation3}
%Consider the formal context $\mathbb{K} = (\mathcal{G}, \mathcal{M}, \mathcal{I}$) presented in Fig.~\ref{fig:exampleItemsets} (left). The set of objects is $\mathcal{G} = \{g_i\}_{1 \leq i \leq 4}$ and the set of attributes is  $\mathcal{M} = \{a, b, c\}$. We have $g_2\,\mathcal{I}\,a$ but $g_2\not\mathcal{I}\,b$.
Consider again the \emph{pattern structure} $\mathbb{P} = (\mathcal{G}, (\wp(\{a,b,c\}),\subseteq), \delta)$ presented in Fig.~\ref{fig:exampleItemsets}. 
Since the meet is the set intersection we have: 
\begin{eqnarray*}
int(\{g_1, g_4\}) = \delta(g_1) \cap \delta(g_4) = \{a, b , c\} \cap \{b ,c\} = \{b ,c\}
\end{eqnarray*}

The concept lattice $\underline{\mathcal{B}}(\mathbb{P})$ is  depicted in \figref{fig:exampleItemsets} (right). 
One should note the set of definable sets $(\mathbb{P}_{ext}, \subseteq)$ can be deduced directly from the concept lattice by taking extents of the pattern concepts.
It is important to highlight the fact that that pattern structure $\mathbb{P}$ is derived from the formal context $\mathbb{K} = (\mathcal{G}, \mathcal{M}, \mathcal{I})$ presented in \figref{fig:formalcontext} where $\delta$ is given by $\delta : g \mapsto \{m \in \mathcal{M} \mid g\, \mathcal{I} \, m\}$.
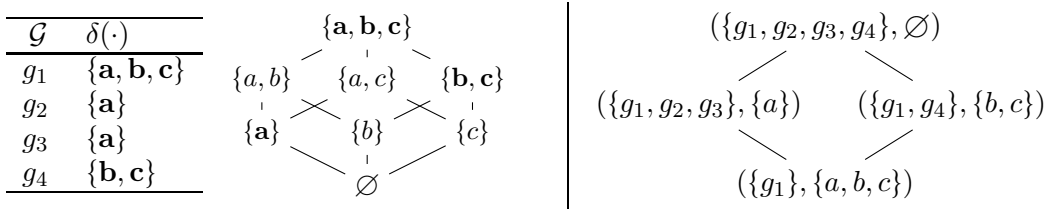
\begin{figure}[t]
\centering
  \hfill
  \begin{minipage}{0.20\linewidth} 
  \centering 
  \begin{tabular}{ll}
  \hline   
  $\:\mathcal{G}\:$  & $\delta(\cdot)$\\ 
   \hline
   $g_1$ & $\mathbf{\{a,b,c\}}$\\ 
   $g_2$ & $\mathbf{\{a\}}$\\
   $g_3$ & $\mathbf{\{a\}}$\\ 
   $g_4$ & $\mathbf{\{b, c\}}$\\
   \hline
 \end{tabular}
 \end{minipage}
 \hfill
 \begin{minipage}{0.28\linewidth}
\scalebox{1}{\begin{tikzpicture}[scale=.7]
\small
\node (0) at (0,0) {$\emptyset$};
\node (1) at (-2,1) {$\mathbf{\{a\}}$};
\node (2) at (0,1) {$\{b\}$};
\node (3) at (2,1) {$\{c\}$};
\node (12) at (-2,2) {$\{a,b\}$};
\node (13) at (0,2) {$\{a,c\}$};
\node (23) at (2,2) {$\mathbf{\{b,c\}}$};
\node (123) at (0,3) {$\mathbf{\{a,b,c\}}$};
\draw (0) -- (1) -- (12) -- (123) -- (23) -- (3) -- (0);
\draw (0) -- (2) -- (23);
\draw (2) -- (12);
\draw (3) -- (13) -- (1);
\draw (13) -- (123);
\end{tikzpicture}}
\end{minipage}
\hfill
 \begin{minipage}{0.03\linewidth} 
  \centering 
  \begin{tabular}{c|}
  \\\\\\\\\\\\
 \end{tabular}
 \end{minipage}
 \hfill
 \begin{minipage}{0.46\linewidth} 
  \centering 
  \scalebox{1}{\begin{tikzpicture}[scale=.7]
\node (1) at (0,-1) {$(\{g_1\}, \{a,b,c\})$};
\node (123) at (-2.4,0.5) {$(\{g_1, g_2, g_3\}, \{a\})$};
\node (14) at (2.4,0.5) {$(\{g_1, g_4\}, \{b, c\})$};
\node (1234) at (0,2) {$(\{g_1,g_2,g_3,g_4\}, \emptyset)$};
\draw (1) -- (123) -- (1234) -- (14) -- (1);
\end{tikzpicture}}
 \end{minipage}% 
 %\begin{minipage}{0.3\linewidth} 
 % \centering 
 % \scalebox{1}{\input{figures/hassediagramdefinablesetsitemset.t%ex}}
 %\end{minipage}%
 \hfill
\caption{The table (left) represents the pattern setup $\mathbb{P} = \big(\mathcal{G}, \underline{\mathcal{D}}, \delta\big)$ with $\mathcal{G} = \{g_i\}_{1 \leq i \leq 4}$, $\underline{\mathcal{D}} = (\wp(\{a,b,c\}), \subseteq)$ depicted by the Hasse Diagram (center) and $\delta$ maps an object to its itemset. The diagram (right) depicts the concept lattice $\underline{\mathcal{B}}(\mathbb{P})$. 
%The  diagram (right) depicts the poset $(\mathbb{P}_{ext}, \subseteq)$ of definable sets.
\label{fig:exampleItemsets}}
\end{figure}
\end{example}

Definition \ref{def:patternstructure} follows \cite{DBLP:conf/cla/LumpeS15}. The original equivalent definition of the pattern structure  \citep{DBLP:conf/iccs/GanterK01} requires that the description space must be a complete lattice. Theorem \ref{thm:upperboundedmeetsemiolattice} builds a bridge between meet-semilattices and pattern structures over finite set of objects or more generally between pattern structures and complete lattices.

\newpage
\begin{theorem}\label{thm:upperboundedmeetsemiolattice}
Let $\underline{\mathcal{D}} = (\mathcal{D}, \sqsubseteq)$ be a poset, the following properties are equivalent:
\begin{itemize}[topsep=1pt]
\item For any finite set $\mathcal{G} \neq \emptyset$ and any $\delta \in \mathcal{D}^\mathcal{G}$, $(\mathcal{G},\underline{\mathcal{D}}, \delta)$ is a pattern structure (where $\mathcal{D}^\mathcal{G}$ denotes the set of all mappings $\delta : \mathcal{G} \to \mathcal{D}$).  
\item $\underline{\mathcal{D}}$ is a an upper-bounded \emph{meet-semilattice} (i.e. $\emptyset$ has also a \emph{meet}).
\end{itemize}

%\smallskip
For an arbitrary set $\mathcal{G}$, the following properties are equivalent:
\begin{itemize}[topsep=1pt]
\item For any set $\mathcal{G} \neq \emptyset$ and any $\delta \in \mathcal{D}^\mathcal{G}$, $(\mathcal{G},\underline{\mathcal{D}}, \delta)$ is a pattern structure.  
\item $\underline{\mathcal{D}}$ is a \emph{complete lattice}.
\end{itemize}

\end{theorem}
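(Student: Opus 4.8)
The plan is to prove both equivalences by going through the known characterizations of meet-semilattices and complete lattices, using Note~\ref{note:latticesproperty} (pairwise meets suffice for finite subsets) and Note~\ref{note:reversemeetandjoin}/Definition~\ref{def:infimumsupremum} (a subset of an upper-bounded poset whose lower bounds have a maximum has an infimum). I will handle the four implications, two in each equivalence.

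\emph{First equivalence, $(\Leftarrow)$.} Assume $\underline{\mathcal{D}}$ is an upper-bounded meet-semilattice. Let $\mathcal{G}$ be finite nonempty and $\delta \in \mathcal{D}^{\mathcal{G}}$. For any $S \subseteq \delta[\mathcal{G}]$, the set $S$ is finite. If $S \neq \emptyset$ it has a meet since $\underline{\mathcal{D}}$ is a meet-semilattice; if $S = \emptyset$ it has a meet, namely $\top$, since $\underline{\mathcal{D}}$ is upper-bounded (Note~\ref{note:reversemeetandjoin}). Hence $(\mathcal{G},\underline{\mathcal{D}},\delta)$ is a pattern structure by Definition~\ref{def:patternstructure}.

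\emph{First equivalence, $(\Rightarrow)$.} Assume the pattern-structure property holds for every finite nonempty $\mathcal{G}$ and every $\delta$. To show $\underline{\mathcal{D}}$ is an upper-bounded meet-semilattice, by Note~\ref{note:latticesproperty} it suffices to produce meets of $\emptyset$ and of each pair $\{x,y\}$. For $\emptyset$: take $\mathcal{G} = \{g\}$ a singleton and any $\delta$; then $\emptyset \subseteq \delta[\mathcal{G}]$ must have a meet, which is $\top$; thus $\underline{\mathcal{D}}$ is upper-bounded. For a pair $\{x,y\}$: take $\mathcal{G} = \{g_1,g_2\}$ with $\delta(g_1) = x$, $\delta(g_2) = y$; then $\delta[\mathcal{G}] = \{x,y\}$ (or $\{x\}$ if $x=y$, which trivially has a meet) must have a meet. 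So every pair has a meet, and $\underline{\mathcal{D}}$ is an upper-bounded meet-semilattice.

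\emph{Second equivalence.} For $(\Leftarrow)$, if $\underline{\mathcal{D}}$ is a complete lattice then for \emph{any} $\mathcal{G}$ and $\delta$, every $S \subseteq \delta[\mathcal{G}]$ has a meet by definition of completeness, so $(\mathcal{G},\underline{\mathcal{D}},\delta)$ is a pattern structure. For $(\Rightarrow)$, assume the pattern-structure property for all nonempty $\mathcal{G}$. I must show $\underline{\mathcal{D}}$ is a complete lattice, i.e. that every subset $X \subseteq \mathcal{D}$ has a meet (completeness of a meet-semilattice gives a complete lattice, Note~\ref{note:latticesproperty}, and the empty meet handles the top). Given $X \subseteq \mathcal{D}$: if $X = \emptyset$ use a singleton $\mathcal{G}$ as above to get $\top = \bigwedge\emptyset$. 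If $X \neq \emptyset$, take $\mathcal{G} = X$ itself (a nonempty set, possibly infinite — this is where we crucially need the ``arbitrary $\mathcal{G}$'' hypothesis) and $\delta = \mathrm{id}_X : X \to \mathcal{D}$, the inclusion. Then $\delta[\mathcal{G}] = X$, and the pattern-structure hypothesis forces $X$ to have a meet $\bigsqcap X$. Hence every subset of $\mathcal{D}$ has a meet, so $\underline{\mathcal{D}}$ is a complete $\wedge$-semilattice and therefore (by Note~\ref{note:latticesproperty}) a complete lattice.

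\emph{Main obstacle.} The only subtle point is the $(\Rightarrow)$ direction of the second equivalence: one must realize that the defining family $\delta[\mathcal{G}]$ can be made to equal an \emph{arbitrary} subset $X$ of $\mathcal{D}$ by choosing $\mathcal{G} = X$ with $\delta$ the identity, which is legitimate precisely because here $\mathcal{G}$ ranges over all sets, not just finite ones. This identification of $\mathcal{G}$ with a subset of $\mathcal{D}$ is the crux; the rest is bookkeeping with the standard facts that an upper-bounded poset with all pairwise meets has all finite meets, and a poset with all meets is a complete lattice.
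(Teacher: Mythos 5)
Your proof is correct and follows essentially the same route as the paper's: realize an arbitrary (finite or arbitrary) subset of $\mathcal{D}$ as $\delta[\mathcal{G}]$ for a suitably chosen $\mathcal{G}$ and $\delta$, and handle the empty set via upper-boundedness. The only cosmetic difference is that in the finite case you reduce to pairs and invoke the pairwise-meet characterization, where the paper directly takes $\delta[\mathcal{G}]=S$ for any finite $S$; both are fine.
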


\begin{mdframed}
\begin{proof}
%Recall that $\mathbb{P} = (\mathcal{G}, \underline{\mathcal{D}}, \delta)$ is a pattern structure \emph{iff} any subset $S \subseteq \delta[\mathcal{G}]$ has a \emph{meet}. 
Let us show both implications for a finite $\mathcal{G}$:
\begin{itemize}
\item[$\mathbf{\Rightarrow}$] The empty set has a meet in $\underline{\mathcal{D}}$ since $\delta[\emptyset] = \emptyset$ has a meet. Thus $\underline{\mathcal{D}}$ has a top element $\top = \bigsqcup \mathcal{D} = \bigsqcap \emptyset$. Moreover, let $S \subseteq \mathcal{D}$ be a finite set, one can build a finite set $\mathcal{G}$ such that $\delta[\mathcal{G}] = S$. Since $\mathbb{P}$ is a pattern structure then $S = \delta[\mathcal{G}]$ has a meet. We conclude that $\underline{\mathcal{D}}$ is an upper-bounded meet-semilattice.
\item[$\mathbf{\Leftarrow}$] Let $\mathbb{P} = (\mathcal{G}, \underline{\mathcal{D}}, \delta)$ be a pattern setup. Any subset of $\delta[\mathcal{G}]$ is finite subset of $\mathcal{D}$ and thus has a meet (including the $\emptyset$ since $\mathcal{D}$ has its top element).
\end{itemize}

Let us now consider the case of arbitrary set $\mathcal{G}$:
\begin{itemize}
\item[$\mathbf{\Rightarrow}$] Let $S \subseteq \mathcal{D}$, one can build $\mathcal{G}$ such that $\delta[\mathcal{G}] = S$. Since $\mathbb{P}$ is a pattern structure then $\delta[\mathcal{G}] = S$ has a meet. We conclude that $\underline{\mathcal{D}}$ is a complete lattice.
\item[$\mathbf{\Leftarrow}$] Let $\mathbb{P} = (\mathcal{G}, \underline{\mathcal{D}}, \delta)$ be a pattern setup. Any subset of $\delta[\mathcal{G}]$ is a subset of $\mathcal{D}$ and thus has a meet (including the $\emptyset$ since $\mathcal{D}$ has its top element).
\end{itemize}
This concludes the proof.
\end{proof}

\end{mdframed}

%\todoaimene{Implications and intents? $d \to int(ext(d))$ we can not have better??}

\bigskip

The state-of-the-art abounds with examples of descriptions spaces that are complete lattices that someone can use to build pattern structures:
\begin{itemize}
    \item \emph{Itemset pattern structure \citep{GanterW99}.} The description space is the Boolean lattice $(\wp(M), \subseteq)$ where $M$ is a non empty finite set of attributes.
    %\item \emph{Pattern structures on Graph sets~\citep{DBLP:conf/pkdd/Kuznetsov99}} and \emph{Sequence sets}~\citep{DBLP:journals/ijgs/BuzmakovEJKNR16} where descriptions are sets of graphs (sequences) with labeled vertices and edges, incomparable wrt. subgraph isomorphism (sequence containment), and ${\cal F} \sqsubseteq {\cal G}$ means that for any graph  (sequence) $F\in {\cal F}$ there is a graph (sequence) $G \in {\cal G}$ such that $F$ is subsumed by $G$, e.g., is isomorphic to a subgraph (subsequence) of $G$. 
    \item \emph{Interval pattern structure \citep{DBLP:conf/ijcai/KaytoueKN11}.} The description space is the complete lattice $(\mathcal{C}(\mathbb{R})^m, \sqsupseteq)$\footnote{$\mathcal{C}(E)$ is the set of convex subsetes of $E$. The set $\mathcal{C}(\mathbb{R})$ is the set of all possible intervals of $\mathbb{R}$, $\mathcal{C}(\mathbb{R})^m$ is then the set of all axis-parallel $m$-dimensional hyperrectangles.}, where $\mathcal{C}(\mathbb{R})^m$ represents the set of all possible axis-parallel $m$-dimensional hyperrectangles in $\mathbb{R}^m$ ($m$ is the number of attributes) and $\sqsubseteq$ represents the hyperrectangle inclusion.
    \item \emph{Convex sets pattern structure \citep{DBLP:conf/ijcai/BelfodilKRK17}.} The description space is the complete lattice of all convex sets in $\mathbb{R}^m$ ordered by inclusion $(\mathcal{C}(\mathbb{R}^m), \supseteq)$.
    \item \emph{Partition pattern structure \citep{DBLP:conf/ecai/CodocedoN14}.} The description space is the complete lattice of all partitions $(\mathscr{B}(E), \sqsubseteq)$ of some finite set $E$. The order $\sqsubseteq$ is \emph{finer-than} order relation between partitions. That is for $P_1, P_2 \in \mathscr{B}(E)$ two partitions, $P_1 \sqsubseteq P_2$ if and only $(\forall E_1 \in P_1 \: \exists E_2 \in P_2) \: \: E_1 \subseteq E_2$.  
\end{itemize}

%\subsection{Main Difference with arbitrary Pattern Setups}
\begin{note}
Before finishing this section, let us highlight some key differences between arbitrary pattern setups and pattern structures. It is clear that the main difference is the fact that the greatest common description does not necessarily exist for any subsets of objects in an arbitrary pattern setup. 
This implies that the set of definable sets $(\mathbb{P}_{ext}, \subseteq)$ is not necessarily closed under intersection in an arbitrary pattern setup as shown in \figref{fig:exampleSequences}.  
One should also note that, conversely to pattern setups where some subsets can be even non-coverable (see Example~\ref{example:definableandcoverable}), In Pattern Structures, all subsets $A \subseteq \mathcal{G}$ are coverables since $cov(A) \supseteq cov(\mathcal{G}) \neq \emptyset$. In fact, $\mathcal{G}$ is always definable since $ext\left(\bigsqcap \delta[\mathcal{G}]\right) = \mathcal{G}$.
%Proposition \ref{prop:alliscoverable} states the following property: 
%\begin{proposition}\label{prop:alliscoverable}
%In a pattern structure, all subsets $A \subseteq \mathcal{G}$ are coverable.
%\end{proposition}
%\begin{mdframed}
%\input{proofs/proof11.tex}
%\end{mdframed}
\end{note}

\newpage
\section{From Closed Patterns to Support-Closed Patterns\label{sec:supportclosed}}
%\subsection{On Support-Closed Patterns}
In pattern mining, another notion of closedness is generally considered \citep{DBLP:conf/sdm/YanHA03, DBLP:conf/icde/WangH04}. Definition \ref{def:supportClosedDescriptions} defines formally such a notion dubbed \emph{support-closedness} by \cite{DBLP:journals/tcs/BoleyHPW10}. %It is worth noting that the term \emph{``support-closed patterns"} was introduced in \cite{DBLP:journals/tcs/BoleyHPW10} in order to not confuse them with \emph{``closed patterns"}. Since closed patterns in a pattern structure are really fixpoints of the closure operator $int \circ ext$ while support-closure operator may not have an operator

\begin{definition}
%[\textbf{Support-closed descriptions}]
\label{def:supportClosedDescriptions}
A description $d$ is said to be \empha{support-closed} in a pattern setup iff:
\begin{eqnarray*}
(\forall c \in \mathcal{D}) \:\: d \sqsubsetneq c \Longrightarrow ext(c) \subsetneq ext(d).
\end{eqnarray*}
\end{definition}

We will see below that this notion is linked to maximal common descriptions.

\subsection{On Maximal Common Descriptions}
In  pattern structures, support-closed patterns coincide exactly with closed descriptions (i.e. fixpoints of $int \circ ext$) since $int$ takes a subset of objects to the greatest common description. However, when we consider an arbitrary pattern setup, such a maximum common description may not exist (see Example~\ref{example:patternstructure:presentation2}). One straightforward generalization is to associate to a subset of object the set of its maxim\underline{al} common descriptions (see Definition \ref{def:maximalcommondescriptions}). Proposition~\ref{prop:support-closed-and-maximal-common} builds then a bridge between support-closed patterns and maximal common descritpions.    

%In a pattern structure, one can associates, thanks to the $int$ operator, the greatest common description. However, when it comes to arbitrary pattern setups, the greatest common description might not exist. The most common and natural way to extend this notion to an arbitrary pattern setup is to use maximal common descriptions.

%\todoaimene{Change the name of cover operator to something including common descriptions?}
\begin{definition}
%[\textbf{Maximal covering descriptions}]
\label{def:maximalcommondescriptions}
The set of \empha{maximal covering (common) descriptions} of a subset $A \subseteq \mathcal{G}$, denoted by $cov^*(A)$, is given by:
\begin{eqnarray*}
cov^*: A \mapsto  max(cov(A)) = max(\delta[A]^\ell)
\end{eqnarray*}
\end{definition}

\begin{proposition}\label{prop:support-closed-and-maximal-common}
A description $d \in \mathcal{D}$ is \emph{support-closed} \emph{iff} $(\exists A \subseteq \mathcal{G})$ $d \in cov^*(A)$. 
The set of all \emph{support-closed descriptions} is given by:
$%\begin{eqnarray*}
\mathcal{D}^* = \bigcup_{A \subseteq \mathcal{G}} cov^*(A)
$.%\end{eqnarray*}
\end{proposition}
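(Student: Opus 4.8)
The plan is to prove the two assertions by chaining together elementary facts about the extent operator, Definition~\ref{def:supportClosedDescriptions}, and Definition~\ref{def:maximalcommondescriptions}. For the ``$\Leftarrow$'' direction of the equivalence, suppose $d \in cov^*(A) = max(cov(A))$ for some $A \subseteq \mathcal{G}$. Take any $c \in \mathcal{D}$ with $d \sqsubsetneq c$. Since $ext$ is order-reversing (Proposition~\ref{prop:orderreversingmappints}), $ext(c) \subseteq ext(d)$, so it remains only to rule out equality. Suppose for contradiction $ext(c) = ext(d)$. The key observation is that $d \in cov(A)$ means $A \subseteq ext(d)$, hence $A \subseteq ext(c)$, which gives $c \in cov(A)$. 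But then $d \sqsubsetneq c$ with both in $cov(A)$ contradicts the maximality of $d$ in $cov(A)$. Therefore $ext(c) \subsetneq ext(d)$, and $d$ is support-closed.

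For the ``$\Rightarrow$'' direction, suppose $d$ is support-closed. The natural candidate is $A := ext(d)$; I would show $d \in cov^*(ext(d))$. First, $d \in cov(ext(d))$ by definition of the two operators ($d \sqsubseteq \delta(g)$ for every $g \in ext(d)$). It remains to check that $d$ is maximal in $cov(ext(d))$. Suppose $c \in cov(ext(d))$ with $d \sqsubseteq c$; I want $c = d$. By Proposition~\ref{prop:compositionbetweenextandcov}, $cov(ext(d)) = \{c \in \mathcal{D} \mid ext(d) \subseteq ext(c)\}$, so $ext(d) \subseteq ext(c)$; combined with $d \sqsubseteq c$ and the order-reversal of $ext$ giving $ext(c) \subseteq ext(d)$, we get $ext(c) = ext(d)$. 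Now if $d \sqsubsetneq c$ strictly held, support-closedness would force $ext(c) \subsetneq ext(d)$, a contradiction. Hence $d = c$, so $d$ is maximal in $cov(ext(d))$, i.e. $d \in cov^*(ext(d))$.

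The formula $\mathcal{D}^* = \bigcup_{A \subseteq \mathcal{G}} cov^*(A)$ is then immediate: the ``$\Leftarrow$'' direction shows every element of the right-hand side is support-closed, and the ``$\Rightarrow$'' direction shows every support-closed $d$ lies in $cov^*(ext(d))$, hence in the union.

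I do not anticipate a serious obstacle here; the only point requiring care is making sure the strictness is handled correctly in each direction — i.e. that we only invoke Definition~\ref{def:supportClosedDescriptions} with a \emph{strict} inequality $d \sqsubsetneq c$, and that in the ``$\Rightarrow$'' direction we correctly deduce $c = d$ rather than merely $ext(c) = ext(d)$. Everything else is a direct unfolding of definitions together with Propositions~\ref{prop:orderreversingmappints} and~\ref{prop:compositionbetweenextandcov}.
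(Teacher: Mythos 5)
Your proof is correct and follows essentially the same route as the paper's: in both directions you take $A = ext(d)$ (for $\Rightarrow$) and exploit $d \in cov(A) \Leftrightarrow A \subseteq ext(d)$ together with the order-reversal of $ext$ and maximality in $cov(A)$, exactly as the paper does via Propositions~\ref{prop:orderreversingmappints} and~\ref{prop:compositionbetweenextandcov}. The only difference is cosmetic (you argue by contradiction on $ext(c)=ext(d)$ where the paper argues directly that $c \notin cov(A)$), and your handling of strictness is careful and correct.
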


\begin{mdframed}
\begin{proof}
We prove the two implications: 
\begin{itemize}
\item[$(\pmb{\Rightarrow})$] Let $d \in \mathcal{D}$ be a support-closed description and let $A = ext(d)$. Hence, according to proposition \ref{prop:compositionbetweenextandcov} we have $d \in cov(A)$. Let us show now that $d \in cov^*(A)$. Suppose that $d \not\in cov^*(A)$ that is $\uparrow d \cap cov(A) \neq \{d\}$. Since $d \in cov(A)$, there is then at least $c \in cov(A)$ such that $d \sqsubsetneq c$. Thus, in one hand and according to proposition \ref{prop:orderreversingmappints}, $ext(c) \subseteq ext(d)$. And since $c \in cov(A)$, according to proposition \ref{prop:compositionbetweenextandcov}, $ext(c) \supseteq ext(d)$. Thus $ext(c) = ext(d)$. This is contradictory with the fact that $d$ is support-closed ($\exists c \in \mathcal{D}$ s.t. $ d \sqsubsetneq c$ and $ext(c) = ext(d)$). 
Therefore, $(\exists A \subseteq \mathcal{G})$  $d \in cov^*(A)$.
%Therefore, $d \in cov^*(ext(d))$.

\item[$(\pmb{\Leftarrow})$]
Suppose that $\exists A \subseteq \mathcal{G}$ s.t. $d \in max(cov(A))$. According to proposition \ref{prop:compositionbetweenextandcov} and since $d \in cov(A)$, we have $A \subseteq ext(d)$. 
Let now be $c \in \mathcal{D}$ such that $d \sqsubsetneq c$, we have $c \not\in cov(A)$ since $d$ is maximal in $cov(A)$. According to proposition \ref{prop:orderreversingmappints} we have $ext(c) \subseteq ext(d)$. 
Moreover, using proposition~\ref{prop:compositionbetweenextandcov} and since $c \not\in cov(A)$ we have $A \not\subseteq ext(c)$. Since $A \subseteq ext(d)$ then $ext(c) \neq ext(d)$. Thus $\forall c \in \mathcal{D}$ such that $d \sqsubsetneq c$ we have $ext(c) \subsetneq ext(d)$; that is $d$ is support-closed.
%
%Let $d \in \mathcal{D}$ such that $d \in cov^*(ext(d)) = max(cov(ext(d)))$. 
%
%We need to show that for any $c \in \mathcal{D}$ such that $d \sqsubsetneq c$, we have $ext(c) \subsetneq ext(d)$. 
%
%Using Proposition \ref{prop:compositionbetweenextandcov}, we have $cov(ext(d)) = \{x \in \mathcal{D} \mid ext(d) \subseteq ext(x)\}$. Since $d \in max(cov(ext(d)))$ then, by definition of $max$, for any $c \in \mathcal{D}$ such that $d \sqsubsetneq c$ we have $c \not\in cov(ext(d))$, i.e. $ext(d) \not\subseteq ext(c)$.
%
%According to proposition \ref{prop:orderreversingmappints} and since $d \sqsubsetneq c$ we have $ext(c) \subseteq ext(d)$. 
%
%We conclude that $\forall c \in \mathcal{D}$ such that $d \sqsubsetneq c$ we have $ext(c) \subsetneq ext(d)$; that is $d$ is support-closed.
%According to proposition \ref{prop:compositionbetweenextandcov} and since $d \in cov(A)$, we have $A \subseteq ext(d)$. 
\end{itemize}
The formula of $\mathcal{D}^*$ is deduced directly. Please notice also that if there exists $A$ s.t. $d \in cov^*(A)$ then $d \in cov^*(ext(d))$ (use $(\pmb{\Leftarrow})$ then $(\pmb{\Rightarrow})$). Hence, $d \in \mathcal{D}$ is support-closed iff $d \in cov^*(ext(d))$. 
\end{proof}
\end{mdframed}

\begin{example}
Reconsider Example~\ref{example:patternsetup:presentation}, we have $cov(\{g_2,g_4\}) = \{``b", ``bb", "c"\}$. Hence, the maximal covering ones are given by $cov^*(\{g_2,g_4\}) = \{``bb", "c"\}$. %Please note that $ext(``c") = \{g_1, g_2, g_4\}$ and $ext(``bb") = \{g_2, g_4\}$. 
\end{example}

\begin{note}
In a pattern structure we have:
$
\lfunction{cov^*}{\wp(\mathcal{G})}{\wp(\mathcal{D})}{A}{\{int(A)\}}
$.
\end{note}

\subsection{On Upper-Approximation Extents}
%\medskip
Going back to pattern structures, the closure operator $ext \circ int$ takes any subset $A \subseteq \mathcal{G}$ to the smallest definable set $ext(int(A))$ enclosing it as stated by Proposition~\ref{prop:extintclosure}. This fact is used to enumerate all definable sets via the closure operator (see \cite{Kuznetsov1993, DBLP:conf/pkdd/Kuznetsov99}). 
From Rough Set Theory \citep{DBLP:journals/ijpp/Pawlak82} perspective, the set $ext(int(A))$ can be seen as the upper approximation of an arbitrary and potentially non definable set $A$ in $\mathbb{P}_{ext}$. However, when it comes to an arbitrary pattern setup, a non-definable set $A$ may have many \emph{minimal} definable sets enclosing it or no one if it is non-coverable (see Example \ref{rexample:upperapproxandcovI}). Definition \ref{def:upperapproximations} formalizes this second generalization.

%For instance, reconsider example \ref{example:patternsetup:presentation}. The set of definable sets enclosing $\{g_1, g_2\}$ is the antichain $\{\{g_1, g_2, g_3\}, \{g_1, g_2, g_4\}\}$ (i.e. both subsets are minimal). Definition \ref{def:upperapproximations} defines formally the aforementioned notion.

%In pattern structures a description $d \in \mathcal{D}$ \emph{covering} a set $A$ \emph{certainly covers} $ext(int(A))$. This observation is important when it comes to search for positive or negative hypotheses in a labeled dataset \citep{DBLP:conf/pkdd/Kuznetsov99} or in other words classification association rules \citep{DBLP:conf/kdd/LiuHM98}. Indeed, if $\mathcal{G}^+$ represents the whole set of positive objects in the dataset $\mathcal{G}$ and if we want exactly one rule which covers all positive instances, the best one in terms of relevance as defined by \citep{DBLP:journals/jmlr/GarrigaKL08} is the rule $int(\mathcal{G}^+) \to +$ with confidence $|\mathcal{G}^+|/|ext(int(\mathcal{G}^+))|$. For pattern structures $ext(int(A))$ is the closure of $A$. In Rough Set Theory \citep{DBLP:journals/ijpp/Pawlak82}, $ext(int(A))$ can be seen as the \emph{upper approximation} of $A$ in $\mathcal{G}$. To have a similar notion for pattern setups, we define what we call \emph{upper-approximation extents}.
%\todoaimene{Change the name of upper-approximations extent to an acronyme??}
\begin{definition}
%[\textbf{Upper-approximation extents}]
\label{def:upperapproximations}
The \empha{set of upper-approximation extents} of a subset $A \subseteq \mathcal{G}$, denoted by $\overline{A}$, is given by the set of minimal definable sets in $\mathbb{P}_{ext}$ enclosing $A$:
\begin{eqnarray*}
\overline{A} = min(\{E \in \mathbb{P}_{ext} \mid A \subseteq E \}) = min(\uparrow A \cap \mathbb{P}_{ext}).
\end{eqnarray*}
\end{definition}

\begin{example}\label{rexample:upperapproxandcovI}
Consider Example~\ref{example:patternsetup:presentation}, the upper approximations of subset $A = \{g_2,g_4\}$ is given by $\overline{A} = \{A\}$ since $A$ is definable. For the set $B = \{g_1,g_2\}$, we have  $\overline{B} = \{\{g_1, g_2, g_3\}, \{g_1, g_2, g_4\}\}$ that is $B$ has two upper-approximation extents. For $C = \{g_3, g_4\}$, it is clear that there is no definable set in $\mathbb{P}_{ext}$ enclosing $C$ (see \figref{fig:exampleSequences} (right)), thus $\overline{C} = \emptyset$. 
\end{example}

\begin{note}
According to Proposition \ref{prop:compositionbetweenextandcov}, we have $(\forall A \subseteq \mathcal{G})$
$
\overline{A} = min(ext[cov(A)])
$. 
Moreover, in a pattern structure, $\overline{A} = \{ext(int(A))\}$ for all $A \subseteq \mathcal{G}$.
\end{note}

\begin{comment}
\begin{proposition}\label{prop:upperApproximationProperties1}
For any $A \subseteq \mathcal{G}$, we have:

\begin{eqnarray*}
A \in \mathbb{P}_{ext} \Leftrightarrow \overline{A} = \{A\} \normalfont{\textbf{ (1)}} \phantom{:} A \in \downarrow \mathbb{P}_{ext} \Leftrightarrow \overline{A} \neq \emptyset  \normalfont{\textbf{ (2)}} \phantom{:} \overline{A}  = min(ext[cov(A)]) \normalfont{\textbf{ (3)}}
\end{eqnarray*}

%\begin{comment}
\begin{itemize}%[topsep=1pt]
\item[1.] \textbf{$A$ is definable.}
$A \in \mathbb{P}_{ext} \Leftrightarrow \overline{A} = \{A\}$ 
%\item[2.] \textbf{$A$ is coverable.} \todoaimene{It is not true for arbitrary $\mathcal{G}$}
%$A \in \downarrow \mathbb{P}_{ext} \Leftrightarrow \overline{A} \neq \emptyset$
\item[2.] 
$\overline{A}  = min(ext[cov(A)])$
\end{itemize}
%\end{comment} 
\end{proposition}
\end{comment}

%\medskip

%\noindent
\subsection{Linking Upper-Approximation Extents to Support-Closed Patterns.}
We have seen before that on the one hand $cov^*$ operator is somehow a generalization of pattern structure \emph{$int$ operator} in an arbitrary pattern setup and on the other hand, \emph{upper-approximation extents} operator is a generalization of pattern structure closure operator $ext \circ int$. Indeed, in a pattern structure we have for $A \subseteq \mathcal{G}$, $cov^*(A) = \{int(A)\}$ and $\overline{A} = \{ext(int(A))\}$. That is:
$\overline{A} = ext[cov^*(A)]$.
One judicious question is that, does this property still hold for an arbitrary pattern setup?
Let us analyze the following example.

\begin{example}\label{example:upperapproxandmaximalcover}
Reconsider Example \ref{example:patternsetup:presentation} and the definable set $A = \{g_2,g_4\}$. We have $cov(A) = \{``b", ``bb", ``c"\}$ and $cov^*(A) = \{``bb", ``c"\}$. Thus on the one hand, we have:
\begin{eqnarray*}
ext[cov^*(A)] & = & \{ext(``bb"), ext(``c")\} = \{\{g_2,g_4\},\{g_1,g_2,g_4\}\}
\end{eqnarray*}
but on the other hand, since $A$ is definable, we have $\overline{A} = \{\{g_2,g_4\}\}$. 
\end{example}

Hence, according to Example \ref{example:upperapproxandmaximalcover}, it is clear that the property $\overline{A} = ext[cov^*(A)]$ does not hold in a pattern setup. In fact, things can go even worse when the description space is an \emph{arbitrary infinite poset}. Let us analyze for that a second Example: 

\begin{example}
\label{example:holesindescriptions}
Let be the poset $(\mathcal{D}, \sqsubseteq)$ presented in \figref{fig:notmeetsemimultilattice} where:
\begin{itemize}
\item $\mathcal{D} = \{a,b\} \cup \{c_i \mid i\in \mathbb{N}\}$, 
\item $(\forall i \in \mathbb{N})$ $c_i \sqsubseteq c_{i+1}$,  $c_i \sqsubseteq a$ and $c_i \sqsubseteq b$.
\end{itemize}

Let be the pattern setup $\mathbb{P} = (\mathcal{G}, (\mathcal{D}, \sqsubseteq), \delta)$ such that $\mathcal{G} =\{g_1, g_2\}$, $\delta(g_1) = a$ and $\delta(g_2) = b$. It is clear that  
$
cov(\{g_1, g_2\}) = \{c_i \mid i \in \mathbb{N}\}
$.
Yet, $cov^*(\{g_1, g_2\}) = \emptyset$ since $cov(\{g_1, g_2\})$ is an infinitely ascending chain. Therefore, there is no maximal common description covering both $g_1$ and $g_2$.  
Thus, given $A \subseteq \mathcal{G}$ it seems that there is no link between $cov^*(A)$ and $\overline{A}$ in an arbitrary pattern setup $(\mathcal{G}, \underline{\mathcal{D}}, \delta)$. In fact, this even mean that considering only maximal common descriptions to look for all possible definable sets is totally a wrong idea since maximal covering descriptions do not hold all the information about definable sets. Indeed, while $ext[\mathcal{D}] = \{\{g_1\},\{g_2\},\{g_1,g_2\}\}$, the extents obtained from the set of support-closed patterns $\mathcal{D}^*$ is given by:
\begin{eqnarray*}
ext[\mathcal{D}^*] = \{\{g_1\},\{g_2\}\} \subsetneq ext[\mathcal{D}]
\end{eqnarray*}
\end{example}

Going back to the case of pattern structures, there is a strong link between $int$ and $cov$. Indeed, $(\forall A \subseteq \mathcal{G})$ $cov(A) = \downarrow int(A)$. In other words, knowing the \emph{intent} of a subset of objects allows us to know every single common description to all objects in $A$. If we had want to generalize such a property for an arbitrary pattern setup we would have expected:
$(\forall A \subseteq \mathcal{G})$ $cov(A) = \downarrow cov^*(A)$.
In the sense that knowing maximal covering descriptions allows to deduce all the covering ones. Such a property does not hold for any pattern setup as shown in Example~\ref{example:holesindescriptions}. This property is directly linked to what is called \emph{multilattices} which we revisit in the next section. 
%\newpage
\section{Multilattices\label{sec:multilattices}}

The term \empha{multilattice} was introduced for the first time by \cite{benado1955ensembles}. This notion have not received much interest for a long period, but have been unearthed and revisited in the beginning of the $21^\text{st}$ century by \cite{martinez2001multilattices,cordero2004new,DBLP:journals/dm/MartinezGGC05} for other purposes. We will start here by presenting multilattices following Martinez's et al. We will then understand the main difference between Benado's multilattices and (Martinez's et al.) multilattices afterward.

%Multilattices were introduced in \cite{benado1955ensembles} under the name of \emph{multistructures} as a relaxation of lattices. For a different need, \citep{martinez2001multilattices, DBLP:journals/dm/MartinezGGC05} revisited the definitions of multilattices and introduced the notion of \emph{universal (ordered) multisemilattices} from an order-theoretic point of view. In this paper, we use \emph{universal (ordered) multisemilattices}, but for sake of brevity, we refer to them as \emph{multisemilattices}. 
%For the sake of simplicity, we rever in this paper by multisemilattices and multilattice, what is referred by universal (ordered) multisemilattices and universal  (ordered) multilattices \cite{DBLP:journals/dm/MartinezGGC05} sinc   
%
Before giving the formal definition of multilattices, we start by defining the notion of multi-infimum and multi-supremum.
In the following section $(P, \leq)$ denotes an arbitrary poset and $S \subseteq P$ denotes an arbitrary subset. 

\begin{definition}
%[\textbf{Multi-infimum and Multi-supremum}]
A \empha{multi-infimum} (resp. \empha{multi-supremum}) of $S$ is a maximal (resp. minimal) element of $S^\ell$ (resp. $S^u$). The set of multi-infima (resp. multi-suprema) of $S$ is denoted by \minf(S) (resp. \msup(S)) and:
\begin{eqnarray*}
\minf(S) = max(S^\ell) & & \msup(S)  = min(S^u)
\end{eqnarray*}
%\end{definition}

%\begin{definition} 
We say then that:  
\begin{itemize}[topsep=1pt]
    \item $S$ \empha{has all its multi-infima} iff: $S^\ell = \downarrow max(S^\ell) = \downarrow \minf(S)$.
    \item $S$ \empha{has all its multi-suprema} iff: $S^u = \uparrow min(S^u) = \uparrow \msup(S)$. 
\end{itemize}
%Note that ``$S$ has all its multi-infima" is equivalent to say that ``$S^\ell$ is a maximal-handle" as defined by \cite{DBLP:journals/dm/MartinezGGC05}. Dually, ``$S$ has all its multi-suprema" is equivalent to say that ``$S^u$ is a minimal-handle". 
\end{definition}

Multilattices, as their names imply, are related in their definition with lattices. Simply put, multilattices are a relaxation of lattices where rather than demanding that the set of lower (resp. upper) bounds  of each nonempty finite subset has its infimum (resp. supremum), multilattices demand that the set of lower (resp. upper) bounds of each nonempty finite subset has all its multi-infima (resp. multi-suprema). 

%Now that we have defined these both notions, we are ready to define the notion of multilattice. In a nutshell, mutilattices are relaxation of lattices where rather than demanding that every finite subset of $(P, \leq)$ has its infimum and its supremum, it demands that every finite subsets has all its mutli-infima and all its multi-suprema.
%\begin{definition}[\textbf{Multilattice}\label{def:multilattice}]
%A poset $(P, \leq)$ is said to be a \emph{\meetmultisemilattice{}} (resp. \emph{\joinmultisemilattice{}}) if any \underline{non empty finite} subset $S \subseteq P$ has all its multi-infima (resp. multi-suprema).
%
%A \emph{multilattice} is then a \emph{\meetmultisemilattice{}} and a \emph{\joinmultisemilattice{}}. 
%
%We add the prefix ``\underline{complete}" to the three aforementioned structures if the properties hold for arbitrary subsets $S \subseteq P$.
%\end{definition}

\begin{definition}
\label{def:multilattice}
%[\textbf{Lattice}]
A poset $(P, \leq)$ is said to be: 
\begin{itemize}
    
    \item A \empha{meet-multisemilattice} if for all nonempty finite subsets $S \subseteq P$, $S$ has all its mutli-infima. 
    
    \item A \empha{join-multisemilattice} if for all nonempty finite subsets $S \subseteq P$, $S$ has all its mutli-suprema.
    
    \item A \empha{multilattice} if it is both a meet-multisemilattice and a join-multisemilattice.   
    
    \item A \empha{complete meet-multisemilattice} if for all subsets $S \subseteq P$, $S$ has all its mutli-infima. 
    
    \item A \empha{complete join-multisemilattice} if for all subsets $S \subseteq P$, $S$ has all its mutli-suprema.
    
    \item A \empha{complete multilattice} if it is both a complete meet-multisemilattice and a complete join-multisemilattice.   
\end{itemize}
\end{definition}

It is clear that all finite posets, or more generally finite-chain posets, are complete multilattices. More precisely, posets having the ascending (resp. descending) chain condition are complete join-multisemilattices (resp. meet-multisemilattices). It is also clear that all (complete) (semi)lattices are (complete) multi(semi)lattices since requiring that a subset $S \subseteq P$ to have all its multi-infima is weaker than requiring it to have an infimum. One should note also that since $\emptyset^\ell = \emptyset^u = P$ then we have:
\begin{itemize}
    \item If $(P,\leq)$ is a {complete \meetmultisemilattice{}} then $P = \downarrow max(P)$.
    
    \item If $(P,\leq)$ is a {complete \joinmultisemilattice{}} then $P = \uparrow min(P)$.
\end{itemize}

\medskip

When we compare with lattices (cf. note \ref{note:latticesproperty}), two questions straightforwardly raise:
\begin{itemize}
    \item Are pair of elements the building blocks of a multilattice?
    \item Are all complete semimultilattices complete multilattices?
\end{itemize}
The followings sections answers negatively to these both questions.

\subsection{On Benado's Multilattices}
As said before, \emph{multilattices} ware introduced for the first time by \cite{benado1955ensembles}. However, Benado defined multilattices as follow: a poset $(P,\leq)$ is said to be multilattice if and only if all pairs of elements have all their multi-infima and all their multi-suprema. Formally:  
\begin{eqnarray}
%\label{eq:onlytwomultilattice}
\forall x,y \in P : \{x, y\}^\ell = \downarrow \minf(\{x,y\}) \text{ and } \{x, y\}^u = \uparrow \msup(\{x,y\})  
\end{eqnarray}
We will call here posets verifying such a property \empha{Benado's multilattices}. 
As clearly explained in \cite{DBLP:journals/dm/MartinezGGC05}, such a property is not sufficient to have all non empty finite subsets have their multi-infima and multi-suprema. This is in contrast to lattices where it suffices to have a meet and join for all subsets of two elements to have the meet and join for all non empty finite subsets. For instance, one could verify that the poset depicted in \figref{fig:twonotimplythreeinmultilattices} is a Benado's multilattice. Yet, it is not a multilattice according to Definition~\ref{def:multilattice}. For instance, the set $\{a,b,c\}$ does not have all its multi-infima. Indeed, $\{a,b,c\}^\ell = \{abc_i \mid i \in \mathbb{N}\}$, however, $max(\{a,b,c\}^\ell) = \emptyset$. It follows that $\{a,b,c\}^\ell \neq \downarrow max(\{a,b,c\}^\ell)$.%, i.e., $\{a,b,c\}$ has not all its multi-infima.  

%
%In this paper and for the sake of brevity, we call \emph{multilattices} what was called \emph{universal ordered multilattices} by \cite{DBLP:journals/dm/MartinezGGC05}. 

\subsection{On Complete Multisemilattices}
We have seen that all complete semilattices are complete lattices. However, this property no longer holds for complete multisemilattices.  
In fact one could have complete \meetmultisemilattice{} that is even not a \joinmultisemilattice{} and \emph{vice versa}. For instance, \figref{fig:notmeetsemimultilattice} depicts a complete \joinmultisemilattice{} that is even not a meet-semimultilattice. For instance, the set of lower bounds $\{a,b\}^\ell = \{c_i \mid i \in \mathbb{N}\}$ is an infinitely ascending chain and thus $max(\{a,b\}^\ell) = \emptyset$. Thus, $\{a,b\}^\ell \neq \downarrow max(\{a,b\}^\ell)$. In other words, $\{a,b\}$ does not have all its multi-infima.

\begin{figure}[t]
\centering
\begin{minipage}{0.16\linewidth} 
\centering
\scalebox{.72}{\begin{tikzpicture}[scale=.7]
\node (a) at (-0.5,7) {$a$};
\node (b) at (0.5,7) {$b$};
\node (inf) at (0,5.5) {$\vdots$};
\node (cn) at (0,4) {$c_n$};
\node (ci) at (0,2.5) {$\vdots$};
\node (c1) at (0,1) {$c_1$};
\node (c0) at (0,0) {$c_0$};
\draw (a) -- (inf) -- (cn) -- (ci) -- (c1) -- (c0);
\draw (b) -- (inf);
\end{tikzpicture}}
\caption{A complete \joinmultisemilattice{} but not a \meetmultisemilattice{}\label{fig:notmeetsemimultilattice}}
\end{minipage}
\hfill
\begin{minipage}{0.8\linewidth} 
\centering
\scalebox{.7}{\begin{tikzpicture}[scale=.7]
\node (abc0) at (0,0) {$abc_0$};
\node (abc1) at (0,1) {$abc_1$};
\node (abci) at (0,2.5) {$\vdots$};
\node (abcn) at (0,4) {$abc_n$};
\node (abcinf) at (0,5.5) {$\vdots$};

\node (a) at (-4,8.5) {$a$};
\node (b) at (0,8.5) {$b$};
\node (c) at (4,8.5) {$c$};

\node (ab0) at (-11,7) {$ab_0$};
\node (ab1) at (-9.5,7) {$ab_1$};
\node (abi) at (-8,7) {$\hdots$};
\node (abn) at (-6.5,7) {$ab_n$};
\node (abinf) at (-5,7) {$\hdots$};

\node (bc0) at (11,7) {$bc_0$};
\node (bc1) at (9.5,7) {$bc_1$};
\node (bci) at (8,7) {$\hdots$};
\node (bcn) at (6.5,7) {$bc_n$};
\node (bcinf) at (5,7) {$\hdots$};

\node (ac0) at (-3,7) {$ac_0$};
\node (ac1) at (-1.5,7) {$ac_1$};
\node (aci) at (0,7) {$\hdots$};
\node (acn) at (1.5,7) {$ac_n$};
\node (acinf) at (3,7) {$\hdots$};

\draw (abc0) -- (abc1) -- (abci) -- (abcn) -- (abcinf);

\draw (abc0) -- (ab0) -- (a);
\draw (abc1) -- (ab1) -- (a);
\draw (abcn) -- (abn) -- (a);
\draw (ab0) -- (b);
\draw (ab1) -- (b);
\draw (abn) -- (b);

\draw (abc0) -- (bc0) -- (c);
\draw (abc1) -- (bc1) -- (c);
\draw (abcn) -- (bcn) -- (c);
\draw (bc0) -- (b);
\draw (bc1) -- (b);
\draw (bcn) -- (b);

\draw (abc0) -- (ac0) -- (a);
\draw (abc1) -- (ac1) -- (a);
\draw (abcn) -- (acn) -- (a);
\draw (ac0) -- (c);
\draw (ac1) -- (c);
\draw (acn) -- (c);
\end{tikzpicture}}
\caption{For all $x,y$ in this poset, $\{x,y\}$ has all its multi-infima. That is, this poset is a multistructure following \cite{benado1955ensembles}, however it is not a multilattice following our definition.\label{fig:twonotimplythreeinmultilattices}}
\end{minipage}
\end{figure}

%\end{note}

\subsection{On chain-completeness and complete multilattices}
Complete multilattices are linked chain-complete posets. Before giving this relationship, let us recall the definition of chain-completeness.

\begin{definition}
%[\textbf{Chain-completeness}]
\label{def:chain-completeness}
A poset $(P, \leq)$ is said to be: 
\begin{itemize}
\item \empha{Chain-complete} if all chains in $P$, including $\emptyset$, has its join.
\item \empha{Dually chain-complete} if all chains in $P$, including $\emptyset$, has its meet.
\item \empha{Doubly chain-complete} if it is chain-complete and dually chain-complete.
\end{itemize}
\end{definition}  

Since the empty set matches the definition of a chain-complete, all chain-complete posets are bounded. An important theorem linking complete lattices to chain-completeness is given in Theorem \ref{thm:latticesandchaincompleteness}.

\begin{theorem}[\textbf{Theorem 3.24 from \cite{roman2008lattices}}\label{thm:latticesandchaincompleteness}]
A lattice $(P, \leq)$ is a complete lattice if and only if it is chain-complete.
\end{theorem}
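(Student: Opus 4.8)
The plan is to prove the two implications separately; the first is immediate and the second is the substantive one.

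\medskip

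\noindent ($\Rightarrow$) If $(P,\leq)$ is a complete lattice, then by Definition~\ref{def:lattice} every subset of $P$ has a join, and in particular every chain (including the empty chain) has a join; hence $(P,\leq)$ is chain-complete in the sense of Definition~\ref{def:chain-completeness}.

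\medskip

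\noindent ($\Leftarrow$) Assume $(P,\leq)$ is a lattice that is chain-complete. First I would apply chain-completeness to the empty chain to get that $\bot := \bigvee\emptyset$ exists (cf. Note~\ref{note:reversemeetandjoin}), so $P$ is lower-bounded. The goal is to show that every subset $S\subseteq P$ has a meet; by Note~\ref{note:reversemeetandjoin} and Note~\ref{note:latticesproperty} (a complete meet-semilattice is a complete lattice) this suffices. Fix $S\subseteq P$ and consider its set of lower bounds $S^\ell$. Since $\bot\in S^\ell$ we have $S^\ell\neq\emptyset$, so by Zorn's lemma the poset of chains contained in $S^\ell$, ordered by inclusion, has a maximal element $C$. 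By chain-completeness $c:=\bigvee C$ exists in $P$, and since each $s\in S$ is an upper bound of $C$ we get $c\leq s$ for all $s\in S$, i.e. $c\in S^\ell$.

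\medskip

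It then remains to show $c=\max S^\ell$, which gives $c=\bigwedge S$. Suppose not: there is $d\in S^\ell$ with $d\not\leq c$. Using the binary join of the lattice, set $e:=c\vee d$. For every $s\in S$ we have $c\leq s$ and $d\leq s$, hence $e\leq s$, so $e\in S^\ell$; moreover $e\geq c$ and $e\neq c$ (since $d\leq e$ but $d\not\leq c$), so $e>c$. Every element of $C$ satisfies $x\leq c< e$, so $e$ is comparable with all of $C$ and lies strictly above it; thus $C\cup\{e\}$ is a chain contained in $S^\ell$ properly containing $C$, contradicting the maximality of $C$. Hence $c$ is the greatest lower bound of $S$. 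Since $S$ was arbitrary, and the case $S=\emptyset$ produces the top element $\top=\bigwedge\emptyset$, the poset $P$ has all meets, and is therefore a complete lattice.

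\medskip

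The one delicate point is the appeal to Zorn's lemma to extract a maximal chain inside $S^\ell$, combined with the observation that the lattice's binary join keeps us inside $S^\ell$ while producing an element strictly above the chosen chain; this second fact is exactly where the hypothesis ``lattice'' (rather than merely ``poset'') is essential, and it is what makes the argument close.
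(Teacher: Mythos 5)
Your proof is correct. Note first that the paper itself gives no proof of this statement: it is imported verbatim as Theorem 3.24 of \cite{roman2008lattices}, so there is no in-paper argument to compare against line by line. Both directions of your argument check out: the forward direction is indeed immediate, and in the converse you correctly use the empty chain to get $\bot$ (so $S^\ell\neq\emptyset$), Zorn's lemma (via the union of a chain of chains being a chain) to get a maximal chain $C\subseteq S^\ell$, chain-completeness plus the fact that each $s\in S$ is an upper bound of $C$ to get $c=\bigvee C\in S^\ell$, and the binary join $e=c\vee d$ to contradict maximality of $C$ if $c$ were not the greatest lower bound; the cases $S=\emptyset$ (yielding $\top$) and the reduction ``all meets $\Rightarrow$ complete lattice'' via Note~\ref{note:reversemeetandjoin} and Note~\ref{note:latticesproperty} are also handled. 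It is worth observing that your argument is essentially the same Zorn-based mechanism the paper uses for Theorem~\ref{thm:chaincompletetoMultilattices} (chain-completeness plus Lemma~\ref{lemma:lowerboundsarestables} to bound chains inside $S^\ell$), and in fact you could shortcut by quoting that theorem: it gives $S^\ell=\downarrow\max(S^\ell)$ with $S^\ell\ni\bot$, and in a lattice $\max(S^\ell)$ must be a singleton, since for $m_1,m_2\in\max(S^\ell)$ the element $m_1\vee m_2$ lies in $S^\ell$ and dominates both, forcing $m_1=m_2$; that single maximal element is then $\bigwedge S$. Your version applies Zorn to chains rather than to $S^\ell$ itself, which costs nothing and proves slightly more (only binary joins of the lattice are used, so chain-complete join-semilattices are already complete lattices).
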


One straightforward question is what is the relationship between complete multilattices and chain-complete posets. The answer is given in Theorem \ref{thm:chaincompletetoMultilattices}.

%\begin{theorem}\label{thm:chaincompletetoMultilattices}
%Under the \emph{Axiom of Choice}\footnote{I am grateful to \textsc{Jozef P\'{o}cs} for attracting my attention to Zorn's Lemma.}, 
%all doubly chain-complete posets are complete multilattices. 
%\end{theorem}

\begin{theorem}\label{thm:chaincompletetoMultilattices}
Under \emph{Axiom of Choice \textbf{(AC)}} assumption\footnote{I am grateful to \textsc{Jozef P\'{o}cs} for attracting my attention to Zorn's Lemma.}, we have: 
\begin{itemize}
    \item All chain-complete posets are complete \meetmultisemilattice{}.
    \item All dually chain-complete posets are complete \joinmultisemilattice{}.
    \item All doubly chain-complete posets are complete multilattices.
\end{itemize}
\end{theorem}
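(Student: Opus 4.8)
The plan is to prove that chain-completeness implies being a complete meet-multisemilattice; the join case follows by order duality, and the doubly chain-complete case is then immediate by intersecting the two. So let $(P,\leq)$ be chain-complete and let $S\subseteq P$ be an arbitrary subset. We must show $S^\ell = \downarrow\!\minf(S) = \downarrow\!\max(S^\ell)$. Since $S^\ell$ is a lower ideal (it is an intersection of principal ideals, hence $\downarrow\!\max(S^\ell)\subseteq S^\ell$ holds trivially), the real content is the reverse inclusion: every element $x\in S^\ell$ lies below some maximal element of $S^\ell$. In other words, we must show $S^\ell$ is maximal-handle. Note also $S^\ell$ is nonempty because $P$ is bounded (it contains $\bot$), so there is no degenerate case.

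First I would fix $x\in S^\ell$ and consider the subposet $T = S^\ell \cap \uparrow\! x = \{y\in S^\ell \mid x\leq y\}$, with the induced order. I want to apply Zorn's Lemma to $T$ to obtain a maximal element $m$ of $T$; such an $m$ is then automatically a maximal element of $S^\ell$ (anything in $S^\ell$ strictly above $m$ is also above $x$, hence in $T$, contradicting maximality in $T$), and $x\leq m$, which is exactly what we need. So the task reduces to verifying the Zorn hypothesis: every chain $C\subseteq T$ has an upper bound in $T$. Here is where chain-completeness enters. Given a nonempty chain $C\subseteq T$, it is a chain in $P$, so by chain-completeness it has a join $\bigvee C \in P$. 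By Lemma~\ref{lemma:lowerboundsarestables} (applied with $A = C \subseteq S^\ell$), the join $\bigvee C$ lies in $S^\ell$. Moreover $x\leq \bigvee C$ since $x$ is a lower bound of $C$ (indeed $x\leq y$ for all $y\in C\subseteq \uparrow x$; if $C=\emptyset$ use $x$ itself, or note $\bigvee\emptyset=\bot$ is handled separately — actually it is cleaner to observe $x\in T$ so $T\neq\emptyset$, and for the empty chain $x$ itself is an upper bound in $T$). Hence $\bigvee C\in S^\ell\cap\uparrow x = T$ and is an upper bound of $C$ in $T$. Zorn's Lemma applies and yields the maximal element $m$.

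Having shown the meet-multisemilattice claim, the join-multisemilattice claim for dually chain-complete posets is the order dual: apply the above to $(P,\leq^{op})$, using the dual of Lemma~\ref{lemma:lowerboundsarestables} (the second bullet, about meets of subsets of $S^u$) and the dual boundedness ($P$ has a top). Finally, a doubly chain-complete poset is both chain-complete and dually chain-complete, hence both a complete meet-multisemilattice and a complete join-multisemilattice, i.e.\ a complete multilattice by Definition~\ref{def:multilattice}.

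The main obstacle, and the only real subtlety, is the correct setup of the Zorn argument: one must restrict attention to $T = S^\ell\cap\uparrow x$ rather than all of $S^\ell$, because maximal elements of $S^\ell$ need not dominate a given $x$ unless we build the chain above $x$ from the start; and one must be careful that the join of a chain taken in the ambient poset $P$ stays inside $S^\ell$ — this is exactly the stability lemma (Lemma~\ref{lemma:lowerboundsarestables}), and invoking it is the conceptual crux. A secondary point worth a sentence is handling the empty chain / nonemptiness of $T$, which is why we noted $x\in T$ from the outset. Everything else is routine. It is also worth remarking (as the paper presumably does afterward via a counterexample) that the converse fails, so no effort should be spent trying to strengthen the statement to an equivalence.
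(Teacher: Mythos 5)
Your proof is correct and follows essentially the same route as the paper's: chain-completeness plus Lemma~\ref{lemma:lowerboundsarestables} guarantees that every chain in the subposet $S^\ell$ has an upper bound inside $S^\ell$, and Zorn's Lemma then yields $S^\ell=\downarrow\max(S^\ell)$, with the other two bullets by duality. The only (cosmetic) difference is that the paper invokes the strong form of Zorn's Lemma ($P=\downarrow\max(P)$) as a black box, whereas you rederive exactly that strengthening by restricting to $T=S^\ell\cap\uparrow x$ — a correct and slightly more self-contained presentation of the same argument.
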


\begin{mdframed}
\begin{proof}
Before proving the theorem, we attract the reader to \empha{Zorn's Lemma}. This lemma states that 
%\begin{definition}
%[\textbf{Zorn's Lemma}]
if every chain in a poset $(P, \leq)$ has an upper-bound, then $(P, \leq)$ has a maximal element. Formally:
\begin{align*}
    (\forall C \in \mathscr{C}(P)) \:\:\: C^u \neq \emptyset \:\: \Longrightarrow \:\: max(P) \neq \emptyset  
\end{align*}
%\end{definition}
%
A stronger statement, yet equivalent, of Zorn's Lemma say even that: 
%
%\begin{theorem}
%[\textbf{Zorn's Lemma II}]
%\label{thm:zornlemma2}
%Let $(P, \leq)$ be a poset, we have:
\begin{align}
    \label{eq:zorn2}
    (\forall C \in \mathscr{C}(P)) \:\:\: C^u \neq \emptyset \:\: \Longrightarrow \:\: P = \downarrow max(P)
\end{align}
%\end{theorem}
%
Zorn's Lemma need to be considered as an \empha{axiom} since it is equivalent to the well-known \empha{axiom of choice (AC)}.

Let be now a chain-complete poset $(P, \leq)$, we need to show that $(P,\leq)$ is a complete \meetmultisemilattice{}. 
Let $S \subseteq P$ be an arbitrary subset of $P$. We show here that $S^\ell = \downarrow max(S^\ell)$. It is straightforward by definition and independently from any assumption that $\downarrow max(S^\ell) \subseteq S^\ell$. It remains to show that $S^\ell \subseteq \downarrow max(S^\ell)$. 
Since $(P, \leq)$ is \emph{chain-complete}, then every $C \subseteq S^\ell$ has its join $\bigvee C \in P$. Hence, according to Lemma \ref{lemma:lowerboundsarestables} and since $C \subseteq S^\ell$ then $\bigvee C \in S^\ell$.
Thus every chain $C$ in the sub-poset $(S^\ell, \leq)$ has an upper bound $\bigvee C \in S^\ell$. According to \emph{Zorn's Lemma} (cf. equation~\ref{eq:zorn2}), we have $S^\ell = \downarrow max(S^\ell)$. Hence, $(P, \leq)$ is a complete \meetmultisemilattice{}.  

In order to demonstrate the other statements of the theorem, one can follow the same steps to show that $S^u = \uparrow min(S^u)$ using Zorn's Lemma on the dual poset of dually chain-complete posets. 
%To show that chain-completeness is only a sufficient condition (under the \emph{Axiom of Choice}) to have a complete multilattice but not a necessary one. We build a \emph{counter-example}. One can show that the poset depicted in Figure \ref{fig:notchaincompleteposetbutcompletemultilattice} is a complete multilattice (Remark that $\forall i \in \mathbb{N} : c_i \leq a_i$, $c_i \leq c_{i+1}$, $a_i \leq e_0$ and $a_i \leq e_1$). However, it is clearly not chain-complete since the chain $C = \{c_i \mid i \in \mathbb{N}\}$ has not a join. Indeed, $C^u=\{e_0, e_1\}$ which is an antichain (i.e. $C^u$ has two minimal elements).
\end{proof}

\end{mdframed}

\begin{note}
It is important to note that double chain-completeness is only a sufficient condition (under the \emph{Axiom of Choice}) to have a complete multilattice but not a necessary one. Indeed, one can show that the poset depicted in Figure \ref{fig:notchaincompleteposetbutcompletemultilattice} is a complete multilattice (Remark that $\forall i \in \mathbb{N} : c_i \leq a_i$, $c_i \leq c_{i+1}$, $a_i \leq e_0$ and $a_i \leq e_1$) but not chain-complete since the chain $C = \{c_i \mid i \in \mathbb{N}\}$ has not a join. Indeed, $C^u=\{e_0, e_1\}$ which is an antichain (i.e. $C^u$ has two minimal elements).
\end{note}

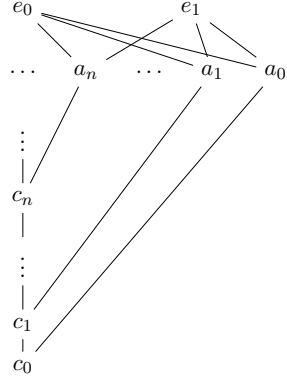
\begin{figure}[t]
\centering
\scalebox{.8}{\begin{tikzpicture}[scale=.7]
\node (c0) at (0,0) {$c_0$};
\node (c1) at (0,1) {$c_1$};
\node (ci) at (0,2.5) {$\vdots$};
\node (cn) at (0,4) {$c_n$};
\node (cinf) at (0,5.5) {$\vdots$};

\node (e0) at (0,8.5) {$e_0$};
\node (e1) at (4,8.5) {$e_1$};

\node (a0) at (6,7) {$a_0$};
\node (a1) at (4.5,7) {$a_1$};
\node (ai) at (3,7) {$\hdots$};
\node (an) at (1.5,7) {$a_n$};
\node (ainf) at (0,7) {$\hdots$};

\draw (c0) -- (c1) -- (ci) -- (cn) -- (cinf);

\draw (c0) -- (a0) -- (e0);
\draw (c1) -- (a1) -- (e0);
\draw (cn) -- (an) -- (e0);
\draw (a0) -- (e1);
\draw (a1) -- (e1);
\draw (an) -- (e1);

\end{tikzpicture}}
 \caption{A complete multilattice that is not chain-complete  \label{fig:notchaincompleteposetbutcompletemultilattice}}
\end{figure}
\newpage
\section{Pattern \Multistructure\label{sec:pattermultistructure}}

Definition \ref{def:patternmultistructure} proposes a new structure that lies between \emph{pattern setups} which rely only on arbitrary posets with no additional property and \emph{pattern structures} which demand a \emph{meet} for every subset of $\delta[\mathcal{G}]$.

\begin{definition}
%[\textbf{Pattern \Multistructure}]
\label{def:patternmultistructure}
A pattern setup $\mathbb{P}=(\mathcal{G}, (\mathcal{D}, \sqsubseteq), \delta)$ is said to be a \empha{pattern \multistructure} if every subset of $\delta[\mathcal{G}]$ has all its multi-infima, i.e. :
\begin{eqnarray*}
(\forall A \subseteq \mathcal{G}) \:\: cov(A) = \downarrow cov^*(A)
\end{eqnarray*}
\end{definition}

%Recalling that for a given subset of objects $A \subseteq \mathcal{G}$, $cov(A) = \delta[A]^\ell$ and $cov^*(A) = \minf(\delta[A])$. It is clear that in a \emph{pattern \multistructure{}}, we have $cov(A) = \downarrow cov^*(A)$ since every subset of $\delta[\mathcal{G}]$ has all its multi-infima. 
A pattern \multistructure{} adds an additional condition on a pattern setup which is the following: knowing maximal common descriptions covering all elements of a set of objects $A$ allows us to deduce using the order $\sqsubseteq$ every single covering description. Please note that using the notation of multi-infima, $cov^*(A)$ is given by $\minf(\delta[A])$.  It is clear that all pattern structures are by definition pattern multi-structures.  
%Note that the term \emph{hyper} in \emph{pattern \hyper-structure} comes from the notion of hyperlattices briefly introduced in \cite{DBLP:journals/ml/Zaki01} and which are in fact multilattices. 

Graphs ordered by subgraph isomorphism relation introduced in \cite{DBLP:conf/pkdd/Kuznetsov99} induce a pattern \multistructure{} on the set of graphs, but not a pattern structure (a pattern structure is induced on sets of graphs incomparable wrt. subgraph isomorphism). Same remark holds for sequential patterns \citep{DBLP:journals/ijgs/BuzmakovEJKNR16,DBLP:conf/icfca/CodocedoBKBN17}. This is under the assumption of the existence of a largest element $\top$ subsumed by all sequences/graphs (see Example \ref{example:patternmultistructuresequence}). 

\begin{note}
Note that in a pattern \multistructure{} the empty set $\emptyset \subseteq \mathcal{D}$ has all its \emph{multi-infima}. Since $\emptyset^\ell = \mathcal{D}$, the set $\mathcal{D}$ has all its maximal elements (i.e. $\mathcal{D} = \downarrow max(\mathcal{D})$) or in other words every chain in $(\mathcal{D}, \sqsubseteq)$ is upper-bounded.
\end{note}

\begin{example}\label{example:patternmultistructuresequence}
Reconsider the pattern setup presented in Example~\ref{example:patternsetup:presentation}. Since only finite sequences are considered in the description space, we have: $cov^*(\emptyset) = \emptyset$ even if $cov(\emptyset) = \mathcal{D}$. Thus, the considered pattern setup in Example~\ref{example:patternsetup:presentation} is \emph{not a pattern \multistructure} due to the empty set (recall that for a nonempty $A \subseteq \mathcal{G}$, $\delta[A]^\ell$ is finite and thus $\delta[A]$ has all its multi-infima). The common trick to handle the empty set is to enrich $\mathcal{D}$ with an additional largest element $\top \triangleq \bigvee \mathcal{D}$ if it does not exist. In such a case, we have $cov^*(\emptyset) = \{\top\}$.
%An analogue discussion can be made on the seminal paper \cite{DBLP:conf/iccs/GanterK01} for pattern structure and finite set of objects.
\end{example}

Let us now reconsider the question investigated at the end of section \ref{sec:supportclosed}: \emph{``What is the link between maximal covering descriptions and upper-approximations extents in a pattern \multistructure{}"}. Before stating Theorem~\ref{thm:computingUpperApproximations} answering this question, we  shall state the following Lemma.% \ref{lemma:propertiesarrowsandembeddings}.

\begin{lemma}\label{lemma:propertiesarrowsandembeddings}
Let $(P, \leq)$ and $(Q ,\leq)$ be two posets and let $f : P \to Q$ be an order-reversing mapping. We have for any $S \subseteq P$ that
%\begin{eqnarray*}
$\uparrow f [\downarrow S] = \uparrow f[S]$.
%\end{eqnarray*}
\end{lemma}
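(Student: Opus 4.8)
The plan is to prove the set equality $\uparrow f[\downarrow S] = \uparrow f[S]$ by double inclusion, exploiting only the fact that $f$ is order-reversing together with elementary monotonicity of $\uparrow$. The inclusion $\uparrow f[S] \subseteq \uparrow f[\downarrow S]$ is immediate: since $S \subseteq \downarrow S$ (as $\downarrow$ is extensive, cf.\ Lemma~\ref{lemma:downandupclosure}), we get $f[S] \subseteq f[\downarrow S]$, and applying the monotone operator $\uparrow$ yields the claimed inclusion.

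For the reverse inclusion $\uparrow f[\downarrow S] \subseteq \uparrow f[S]$, the key observation is that $f[\downarrow S] \subseteq \uparrow f[S]$; once this is established, applying $\uparrow$ to both sides and using idempotence of $\uparrow$ (Lemma~\ref{lemma:downandupclosure}) gives $\uparrow f[\downarrow S] \subseteq \uparrow\uparrow f[S] = \uparrow f[S]$. To show $f[\downarrow S] \subseteq \uparrow f[S]$, take an arbitrary $y \in f[\downarrow S]$, so $y = f(x)$ for some $x \in \downarrow S$. By definition of $\downarrow S$ there is $s \in S$ with $x \leq s$. Since $f$ is order-reversing, $x \leq s$ implies $f(s) \leq f(x) = y$. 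As $f(s) \in f[S]$, this says $y \in \uparrow f[S]$, which is what we wanted.

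Combining the two inclusions yields $\uparrow f[\downarrow S] = \uparrow f[S]$. There is no real obstacle here — the only point requiring a little care is remembering to route the argument through $f[\downarrow S] \subseteq \uparrow f[S]$ rather than attempting the (false) containment $f[\downarrow S] \subseteq f[S]$, and then invoking idempotence of $\uparrow$ to absorb the extra closure. Everything else is a direct unwinding of the definitions of $\downarrow$, $\uparrow$, image, and the order-reversing property.
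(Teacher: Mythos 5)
Your proof is correct and follows essentially the same route as the paper: the first inclusion via extensivity of $\downarrow$ and monotonicity of $\uparrow$, and the second via the order-reversing property giving $f(s) \leq f(x)$ for $x \leq s$. The only difference is cosmetic — you package the final transitivity step as ``apply $\uparrow$ and invoke idempotence'' where the paper chains the inequalities $f(y) \sqsubseteq f(x) \sqsubseteq u$ by hand.
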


\begin{mdframed}
\begin{proof}
Recall that $\uparrow$ and $\downarrow$ are closure operator (cf. Lemma \ref{lemma:downandupclosure}).
Let us start by showing that $\uparrow f[S] \subseteq \uparrow f[\downarrow S]$. Since $S \subseteq \downarrow S$, we have $f[S] \subseteq f[\downarrow S]$. Since $\uparrow$ is monotone, we have $\uparrow f[S] \subseteq \uparrow f[\downarrow S]$. 
It remains to show that $\uparrow f[\downarrow S] \subseteq \uparrow f[S]$. Let $u \in \uparrow f[\downarrow S]$, that is $\exists v \in f[\downarrow S]$ s.t. $v \sqsubseteq u$. Since $v \in f[\downarrow S]$, then $\exists x \in \downarrow S$ s.t. $v = f(x)$. Hence $\exists y \in S$ s.t. $x \leq y$. Using the fact that $f$ is an anti-embedding, we obtains that $f(y) \sqsubseteq f(x) \sqsubseteq u$. In other words, $\exists w \in f[S]$ s.t. $w \sqsubseteq u$. This is equivalent to say that $u \in \uparrow f[S]$. We conclude hence that $\uparrow f[\downarrow S] \subseteq \uparrow f[S]$. %We conclude that:
%$%\begin{eqnarray*}
%\uparrow f[\downarrow S] = \uparrow f[S]
%$.%\end{eqnarray*}
\end{proof}
\end{mdframed}

\begin{theorem}\label{thm:computingUpperApproximations} For any \emph{pattern \multistructure} $\mathbb{P}$ we have:
\begin{eqnarray*}
(\forall A \subseteq \mathcal{G}) & & \overline{A} = min(ext[cov^*(A)]) 
%= min(ext[\minf(\delta[A]]))
\end{eqnarray*}
\end{theorem}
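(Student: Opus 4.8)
The plan is to reduce the statement to two facts already available: the identity $\overline{A} = min(ext[cov(A)])$ recorded in the Note following Definition~\ref{def:upperapproximations} (which itself rests on $ext[cov(A)] = \uparrow A \cap \mathbb{P}_{ext}$ from Proposition~\ref{prop:compositionbetweenextandcov}), and the defining property of a pattern \multistructure{}, namely $cov(A) = \delta[A]^\ell = \downarrow max(\delta[A]^\ell) = \downarrow cov^*(A)$ for every $A \subseteq \mathcal{G}$. Putting these together gives $\overline{A} = min(ext[\downarrow cov^*(A)])$, so the whole claim amounts to showing $min(ext[\downarrow cov^*(A)]) = min(ext[cov^*(A)])$.

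First I would invoke Lemma~\ref{lemma:propertiesarrowsandembeddings} with the order-reversing map $f = ext : \mathcal{D} \to \wp(\mathcal{G})$ (order-reversing by Proposition~\ref{prop:orderreversingmappints}) applied to $S = cov^*(A)$, which yields $\uparrow ext[\downarrow cov^*(A)] = \uparrow ext[cov^*(A)]$. Then I would apply Lemma~\ref{lemma:propertiesminandarrows} (stating $min(\uparrow T) = min(T)$) on both sides: on the left $min(ext[\downarrow cov^*(A)]) = min(\uparrow ext[\downarrow cov^*(A)])$, and on the right $min(\uparrow ext[cov^*(A)]) = min(ext[cov^*(A)])$. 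Chaining the three equalities gives $min(ext[\downarrow cov^*(A)]) = min(ext[cov^*(A)])$, hence $\overline{A} = min(ext[cov^*(A)])$.

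There is essentially no obstacle here: the argument is a short composition of earlier results, and the only point requiring a little care is the first reduction, i.e. making sure the Note's identity for $\overline{A}$ is correctly invoked and that the pattern \multistructure{} condition is applied to the set $\delta[A]$ so that $cov(A) = \downarrow cov^*(A)$. Once that is in place, all the remaining manipulations with $\uparrow$ and $min$ are purely formal.
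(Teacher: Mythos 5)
Your proof is correct and follows essentially the same route as the paper's: both rest on the identity $\overline{A} = min(ext[cov(A)])$ from Proposition~\ref{prop:compositionbetweenextandcov}, the \multistructure{} property $cov(A) = \downarrow cov^*(A)$, and the combination of Lemma~\ref{lemma:propertiesarrowsandembeddings} (applied to the order-reversing map $ext$) with Lemma~\ref{lemma:propertiesminandarrows} to pass from $min(ext[\downarrow cov^*(A)])$ to $min(ext[cov^*(A)])$. The only difference is presentational — you substitute into the $\overline{A}$ identity first, whereas the paper builds up from the \multistructure{} condition — but the logical content is identical.
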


\begin{mdframed}
\begin{proof}
The proof of the theorem is a straightforward application of Lemma \ref{lemma:propertiesminandarrows} and Lemma \ref{lemma:propertiesarrowsandembeddings}.
Let $A \subseteq \mathcal{G}$, since $\mathbb{P}$ is a pattern \multistructure{}, then:
\begin{eqnarray*}
cov(A)= \downarrow max(cov(A)) & & \Rightarrow 
ext[cov(A)] = ext[\downarrow max(cov(A))] \\
& & \Rightarrow \uparrow ext[cov(A)] = \uparrow ext[\downarrow max(cov(A))] 
\end{eqnarray*}

Since $ext: \mathcal{D} \to \wp(\mathcal{G})$ is an order reversing, then using Lemma \ref{lemma:propertiesarrowsandembeddings} we have:
\begin{eqnarray*}
\uparrow ext[\downarrow max(cov(A))] =  \uparrow ext[max(cov(A))]  & & \Rightarrow \\
\uparrow ext[cov(A)]= \uparrow ext[max(cov(A)) & \Rightarrow \\
min(\uparrow ext[cov(A)]) = min(\uparrow ext[max(cov(A)]) & 
\end{eqnarray*}

Using Lemma \ref{lemma:propertiesminandarrows} we obtain
$%\begin{eqnarray*}
min(ext[cov(A)]) = min(ext[max(cov(A)))
$.
That is,
$%\begin{eqnarray*}
\overline{A} = min(ext[max(cov(A)))
$.%\end{eqnarray*}
\end{proof}

\end{mdframed}

\bigskip

Another important remark related to Example~\ref{example:holesindescriptions} is the fact that the support-closed patterns in a pattern setup does not hold all the information about the definable sets. Theorem~\ref{thm:allinformationisinsupportclosed} states that this is no longer the case for pattern multistructures.

\begin{theorem}
\label{thm:allinformationisinsupportclosed}
Given a pattern multistructure $\mathbb{P}$ for which the set of support-closed patterns is $\mathcal{D}^*$ (cf.  Proposition~\ref{prop:support-closed-and-maximal-common}), we have:
\begin{eqnarray*}
\mathbb{P}_{ext} & = & ext[\mathcal{D}^*]
\end{eqnarray*}
\end{theorem}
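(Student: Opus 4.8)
**Proof plan for Theorem (on $\mathbb{P}_{ext} = ext[\mathcal{D}^*]$).**

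The plan is to prove the set equality $\mathbb{P}_{ext} = ext[\mathcal{D}^*]$ by double inclusion, where the nontrivial direction is $\mathbb{P}_{ext} \subseteq ext[\mathcal{D}^*]$. Recall $\mathcal{D}^* = \bigcup_{A \subseteq \mathcal{G}} cov^*(A)$ by Proposition~\ref{prop:support-closed-and-maximal-common}, and that $d \in \mathcal{D}$ is support-closed iff $d \in cov^*(ext(d))$. For the easy inclusion $ext[\mathcal{D}^*] \subseteq \mathbb{P}_{ext}$: every $d \in \mathcal{D}^*$ is in particular an element of $\mathcal{D}$, so $ext(d) \in ext[\mathcal{D}] = \mathbb{P}_{ext}$ by definition of $\mathbb{P}_{ext}$; hence $ext[\mathcal{D}^*] \subseteq ext[\mathcal{D}] = \mathbb{P}_{ext}$. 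This direction needs nothing beyond the definitions and does not use the multistructure hypothesis.

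For the reverse inclusion, let $E \in \mathbb{P}_{ext}$, so $E = ext(d_0)$ for some $d_0 \in \mathcal{D}$. The goal is to exhibit a support-closed description $d \in \mathcal{D}^*$ with $ext(d) = E$. The natural candidate is a maximal common description of the set $E$ itself that lies above $d_0$. First, note $d_0 \in cov(E)$: indeed by Proposition~\ref{prop:compositionbetweenextandcov}, $cov(ext(d_0)) = \{c \in \mathcal{D} \mid ext(d_0) \subseteq ext(c)\}$, and $d_0$ trivially belongs to this set. Now invoke the pattern multistructure hypothesis: $cov(E) = \downarrow cov^*(E)$, so every element of $cov(E)$ — in particular $d_0$ — lies below some element of $cov^*(E) = \minf(\delta[E])$. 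Pick $d \in cov^*(E)$ with $d_0 \sqsubseteq d$. By Proposition~\ref{prop:support-closed-and-maximal-common}, $d$ is support-closed since $d \in cov^*(E)$, hence $d \in \mathcal{D}^*$. It remains to check $ext(d) = E$: since $d_0 \sqsubseteq d$, Proposition~\ref{prop:orderreversingmappints} gives $ext(d) \subseteq ext(d_0) = E$; conversely, since $d \in cov(E)$, Proposition~\ref{prop:compositionbetweenextandcov} gives $E \subseteq ext(d)$. Thus $ext(d) = E$, so $E \in ext[\mathcal{D}^*]$.

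The main obstacle — really the only place where the hypothesis is essential — is the step ``every element of $cov(E)$ lies below a maximal element of $cov(E)$'', i.e. that $cov(E)$ is covered by its maximal elements. This is exactly what fails in a general pattern setup (cf.\ Example~\ref{example:holesindescriptions}, where $cov(\{g_1,g_2\})$ is an infinite ascending chain with no maximal element, so the covering description $d_0$ need not sit below any support-closed pattern), and it is precisely what Definition~\ref{def:patternmultistructure} grants us: $cov(A) = \downarrow cov^*(A)$ for all $A \subseteq \mathcal{G}$, applied here with $A = E$. Everything else is a routine application of the order-reversing properties of $ext$ and of Proposition~\ref{prop:compositionbetweenextandcov}. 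One should also remember to observe that applying the hypothesis at $A = E$ is legitimate since $E \subseteq \mathcal{G}$, so $\delta[E] \subseteq \delta[\mathcal{G}]$ and the multi-infima condition applies.
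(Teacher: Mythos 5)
Your proof is correct and follows essentially the same route as the paper's: the easy inclusion from $\mathcal{D}^* \subseteq \mathcal{D}$, and for the converse, taking $E = ext(d_0)$, observing $d_0 \in cov(E)$, using the multistructure condition $cov(E) = \downarrow cov^*(E)$ to lift $d_0$ to a maximal (hence support-closed) $d \in cov^*(E)$, and sandwiching $E \subseteq ext(d) \subseteq ext(d_0) = E$. Your write-up is in fact slightly cleaner than the paper's, which contains a typo stating the multistructure condition backwards ($cov^*(A) = \downarrow cov(A)$ instead of $cov(A) = \downarrow cov^*(A)$).
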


\begin{mdframed}
\begin{proof}
Recall that $\mathcal{D}^* = \bigcup_{B \subseteq \mathcal{G}} ext[cov^*(B)]$. Since $\mathcal{D}^* \subseteq \mathcal{D}$ and by definition $\mathbb{P}_{ext} = ext[\mathcal{D}]$. It is clear that $ext[\mathcal{D}^*] \subseteq ext[\mathcal{D}]$. It remains to show that $ext[\mathcal{D}] \subseteq ext[\mathcal{D}^*]$. Let $A \in ext[\mathcal{D}]$, since $\mathbb{P}$ is a pattern multistructure then $cov^*(A) = \downarrow cov(A)$. Let $d \in cov(A)$ s.t. $A = ext(d)$ (we have $A \in ext[\mathcal{D}]$). Since $\mathbb{P}$ is a pattern multistructure then we have a support-closed pattern $d^* \in cov^*(A) \subseteq \mathcal{D}^*$ s.t. $d \sqsubseteq d^*$. Hence, $ext(d^*) \subseteq ext(d)$. Moreover, since $cov^*(A) \subseteq cov(A)$, we have $d^* \in cov(A)$. Therefore, $A = ext(d) \subseteq ext(d^*)$. We obtain thus $A = ext(d) = ext(d^*)$, that is $A \in ext[\mathcal{D}^*]$.
\end{proof}
\end{mdframed}

\bigskip

Similarly to Theorem \ref{thm:upperboundedmeetsemiolattice} for pattern structures, Theorem \ref{thm:upperboundedmeetsemihyperlattice} connects  \emph{multilattices} with \emph{pattern \multistructure s}. It state that (complete) \meetmultisemilattice s are to pattern \multistructure s what (complete) lattices are to pattern structures. 

%\newpage

\begin{theorem}\label{thm:upperboundedmeetsemihyperlattice}
Let $\underline{\mathcal{D}} = (\mathcal{D}, \sqsubseteq)$ be a poset, the following properties are equivalent:
\begin{itemize}
\item For any finite set $\mathcal{G} \neq \emptyset$ and any $\delta \in \mathcal{D}^\mathcal{G}$, $(\mathcal{G},\underline{\mathcal{D}}, \delta)$ is a pattern \multistructure{}.
\item $\underline{\mathcal{D}}$ is a \meetmultisemilattice{} having all its maximal elements (i.e. $\mathcal{D} = \downarrow max(\mathcal{D})$)  
\end{itemize}
The following properties are equivalent:
\begin{itemize}
\item For any set $\mathcal{G} \neq \emptyset$ and any $\delta \in \mathcal{D}^\mathcal{G}$, $(\mathcal{G},\underline{\mathcal{D}}, \delta)$ is a pattern \multistructure{}.
\item $\underline{\mathcal{D}}$ is a complete \meetmultisemilattice{}.
\end{itemize}
\end{theorem}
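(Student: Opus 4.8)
The plan is to mimic closely the proof of Theorem~\ref{thm:upperboundedmeetsemiolattice}, exploiting the elementary observation that $(\mathcal{G}, \underline{\mathcal{D}}, \delta)$ is a pattern \multistructure{} exactly when every set of the form $\delta[A]$ with $A \subseteq \mathcal{G}$ has all its multi-infima. Indeed $cov(A) = \delta[A]^\ell$ and $cov^*(A) = max(\delta[A]^\ell)$ by definition, so the condition $cov(A) = \downarrow cov^*(A)$ of Definition~\ref{def:patternmultistructure} reads $\delta[A]^\ell = \downarrow max(\delta[A]^\ell)$, i.e. ``$\delta[A]$ has all its multi-infima''. Thus everything reduces to controlling which subsets of $\mathcal{D}$ arise as $\delta[A]$, exactly as in Theorem~\ref{thm:upperboundedmeetsemiolattice} but with ``has a meet'' replaced by ``has all its multi-infima''.

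For the finite case, I would first prove $(\Leftarrow)$: assume $\underline{\mathcal{D}}$ is a \meetmultisemilattice{} with $\mathcal{D} = \downarrow max(\mathcal{D})$, take any pattern setup $(\mathcal{G}, \underline{\mathcal{D}}, \delta)$ with $\mathcal{G}$ finite and nonempty, and let $A \subseteq \mathcal{G}$. If $A \neq \emptyset$ then $\delta[A]$ is a nonempty finite subset of $\mathcal{D}$ and hence has all its multi-infima by hypothesis; if $A = \emptyset$ then $\delta[\emptyset] = \emptyset$ and $\emptyset^\ell = \mathcal{D} = \downarrow max(\mathcal{D})$, so $\emptyset$ has all its multi-infima too, and $(\mathcal{G}, \underline{\mathcal{D}}, \delta)$ is a pattern \multistructure{}. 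For $(\Rightarrow)$, the empty set already forces $\mathcal{D} = \downarrow max(\mathcal{D})$: for any pattern setup $\delta[\emptyset] = \emptyset$ and $cov(\emptyset) = \emptyset^\ell = \mathcal{D}$, so being a pattern \multistructure{} makes $\emptyset$ have all its multi-infima, i.e. $\mathcal{D} = \downarrow max(\mathcal{D})$. For the \meetmultisemilattice{} part, given a nonempty finite $S \subseteq \mathcal{D}$, pick a finite set $\mathcal{G}$ and a surjection $\delta : \mathcal{G} \to S$ so that $\delta[\mathcal{G}] = S$; applying the hypothesis to $A = \mathcal{G}$ yields that $S$ has all its multi-infima.

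The arbitrary-$\mathcal{G}$ case is handled the same way, replacing ``finite subset of $\mathcal{D}$'' by ``subset of $\mathcal{D}$'' throughout: for $(\Leftarrow)$, if $\underline{\mathcal{D}}$ is a complete \meetmultisemilattice{} then every $\delta[A] \subseteq \mathcal{D}$ has all its multi-infima, so every pattern setup over $\underline{\mathcal{D}}$ is a pattern \multistructure{}; for $(\Rightarrow)$, an arbitrary nonempty $S \subseteq \mathcal{D}$ is realized as $\delta[\mathcal{G}]$ for a suitable (possibly infinite) $\mathcal{G}$, and the case $S = \emptyset$ is again covered by $cov(\emptyset) = \mathcal{D}$, so in all cases $S$ has all its multi-infima and $\underline{\mathcal{D}}$ is a complete \meetmultisemilattice{}. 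The only real subtlety — the one place this differs from the lattice statement — is the bookkeeping around $\emptyset$: a \meetmultisemilattice{} says nothing about $\emptyset^\ell = \mathcal{D}$, which is precisely why the extra clause $\mathcal{D} = \downarrow max(\mathcal{D})$ must be adjoined in the finite part, whereas in the complete case that clause is already subsumed by requiring $\emptyset$ to have all its multi-infima. I expect no genuine obstacle beyond being careful that the realizability construction $\delta[\mathcal{G}] = S$ needs $S$ nonempty and that the empty-set case is not overlooked.
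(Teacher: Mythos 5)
Your proposal is correct and follows essentially the same route as the paper, which itself just observes that the argument of Theorem~\ref{thm:upperboundedmeetsemiolattice} carries over verbatim with ``has a meet'' replaced by ``has all its multi-infima''. Your explicit treatment of the empty set (showing that $cov(\emptyset)=\mathcal{D}$ forces $\mathcal{D}=\downarrow max(\mathcal{D})$ in the finite case, and that this clause is absorbed into completeness in the arbitrary case) is exactly the bookkeeping the paper leaves implicit.
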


\begin{mdframed}
\begin{proof}
Recall that $\mathbb{P} = (\mathcal{G}, \underline{\mathcal{D}}, \delta)$ is a pattern \multistructure{} \emph{iff} any subset $S \subseteq \delta[\mathcal{G}]$ has all its multi-infima. The proof of this theorem is almost the same as the one of Theorem \ref{thm:upperboundedmeetsemiolattice} where the existence of the meet is replaced by the existence of all multi-infima. 
\end{proof}

\begin{comment}
\subsection{Proof of Proposition \ref{prop:subposetiscompletemultilattice}} 
\begin{proof}
Any chain in the poset $(\mathcal{D}_\delta, \sqsubseteq)$ is finite since $\mathcal{D}_\delta$ is the result of union of a finite number of antichains (at most $2^{|\mathcal{G}|}$ antichains). We conclude that $(\mathcal{D}_\delta, \sqsubseteq)$ has both ACC (Ascending Chain Condition) and DCC (Descending Chain Condition). That is $(\mathcal{D}_\delta, \sqsubseteq)$ has the maximal condition and minimal condition (see~\cite{roman2008lattices}). In other words, every subset $S \subseteq \mathcal{D}_\delta$ contains a maximal element and minimal element. 

Consider now $S \subseteq \mathcal{D}_\delta$, let us show that $S^\ell = \downarrow max(S^\ell)$. $\supseteq$ inclusion is trivial. Let us show the remaining inclusion $\subseteq$. Let $x \in S^\ell$, the set $\uparrow x \cap S^\ell$ contains a maximal element that is $max(\uparrow x \cap S^\ell) \neq \emptyset$. We conclude that $\exists y \in max(S^\ell)$  such that $ x \sqsubseteq y$. Thus $S^\ell = \downarrow max(S^\ell)$. One can show analogically that $S^u = \uparrow min(S^u)$. In other words, $S$ has its $minf$ and its $msup$. We conclude that $(\mathcal{D}_\delta, \sqsubseteq)$ is a complete multilattice. 
\end{proof}
\end{comment}
\end{mdframed}

\bigskip

\begin{figure}[t]
\centering
\scalebox{.9}{\begin{tikzpicture}[scale=.7]
\node (a0) at (6,2) {$a_0$};
\node (a1) at (4.5,2) {$a_1$};
\node (ai) at (3,2) {$\hdots$};
\node (an) at (1.5,2) {$a_n$};
\node (ainf) at (0,2) {$\hdots$};
\node (aalpha) at (-3,2) {$a_\alpha$};
\node (abeta) at (-4,2) {$a_\beta$};

\node (b0) at (6,0) {$b_0$};
\node (b1) at (4.5,0) {$b_1$};
\node (bi) at (3,0) {$\hdots$};
\node (bn) at (1.5,0) {$b_n$};
\node (binf) at (0,0) {$\hdots$};

\draw (b0) -- (a0);
\draw (b0) -- (a1);
\draw (b0) -- (an);
\draw (b0) -- (aalpha);
\draw (b0) -- (abeta);

\draw (b1) -- (a1);
\draw (b1) -- (an);
\draw (b1) -- (aalpha);
\draw (b1) -- (abeta);

\draw (bn) -- (an);
\draw (bn) -- (aalpha);
\draw (bn) -- (abeta);
\end{tikzpicture}}
 \caption{A complete multilattice with an infinite antichain (Proof of proposition \ref{prop:problemWithPextInPatternMultistructure}) \label{fig:problemWithPextInPatternMultistructure}}
\end{figure}
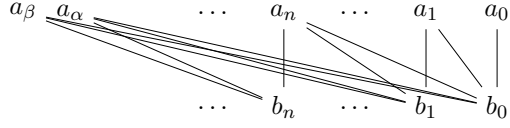

Last but not least, we have seen in section~\ref{sec:patternStructure} that in the case of a pattern structure, $(\mathbb{P}_{ext}, \subseteq)$ is a complete lattice. One can say that the property of having the infimum in the description space has been transferred to the poset of definable sets thanks to extent operator. 
When it comes to a pattern setup on finite set of objects, it is clear that $(\mathbb{P}_{ext}, \subseteq)$ is a complete multilattice since it is finite. However, does this property still hold for the case of infinite set of objects?
Unfortunately, the answer is negative as stated in Proposition \ref{prop:problemWithPextInPatternMultistructure}. This proposition tells also that not all definable sets above $A$ in a pattern multistructure are above at least one upper-approximation of $A$. 
%\end{note}

\begin{proposition}\label{prop:problemWithPextInPatternMultistructure}
There exists a pattern \multistructure{} $\mathbb{P} = (\mathcal{G}, (\mathcal{D}, \sqsubseteq), \delta)$ such that $(\mathbb{P}_{ext}, \subseteq)$ is not a \joinmultisemilattice{} in which:
$(\exists A \subseteq \mathcal{G}) \:  \uparrow \overline{A} \cap \mathbb{P}_{ext} \neq \uparrow A \cap \mathbb{P}_{ext}$.
\end{proposition}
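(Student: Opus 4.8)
The plan is to produce an explicit counterexample built on the poset $\underline{\mathcal{D}}$ drawn in Figure~\ref{fig:problemWithPextInPatternMultistructure}. First I would pin that poset down: take $\mathcal{D} = \{b_i \mid i \in \mathbb{N}\} \cup \{a_i \mid i \in \mathbb{N}\} \cup \{a_\alpha, a_\beta\}$, order it by $b_i \sqsubseteq a_j \iff i \leq j$ together with $b_i \sqsubseteq a_\alpha$ and $b_i \sqsubseteq a_\beta$ for all $i$, and nothing else — so the $b_i$ are pairwise incomparable minimal elements and the $a$'s pairwise incomparable maximal ones. The key preliminary observation is that every chain of $\underline{\mathcal{D}}$ has at most two elements, hence $\underline{\mathcal{D}}$ trivially satisfies the descending chain condition and is a complete \meetmultisemilattice{}; by Theorem~\ref{thm:upperboundedmeetsemihyperlattice} this guarantees that $(\mathcal{G}, \underline{\mathcal{D}}, \delta)$ is a pattern \multistructure{} no matter how $\mathcal{G}$ and $\delta$ are chosen.

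Next I would fix $\mathcal{G} = \{g_i \mid i \in \mathbb{N}\} \cup \{g_\alpha, g_\beta\}$ with $\delta(g_i) = a_i$, $\delta(g_\alpha) = a_\alpha$, $\delta(g_\beta) = a_\beta$, and compute the extent operator: since the $a$'s form an antichain one gets $ext(a_i) = \{g_i\}$, $ext(a_\alpha) = \{g_\alpha\}$, $ext(a_\beta) = \{g_\beta\}$, whereas $b_i \sqsubseteq \delta(g)$ holds exactly for $g$ in $\{g_j \mid j \geq i\} \cup \{g_\alpha, g_\beta\}$, so $ext(b_i) = E_i := \{g_j \mid j \geq i\} \cup \{g_\alpha, g_\beta\}$. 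Thus $\mathbb{P}_{ext} = \{\{g_i\} \mid i \in \mathbb{N}\} \cup \{\{g_\alpha\}, \{g_\beta\}\} \cup \{E_i \mid i \in \mathbb{N}\}$; the sets $E_i$ form a strictly decreasing chain $E_0 \supsetneq E_1 \supsetneq \cdots$ whose intersection $\bigcap_{i} E_i = \{g_\alpha, g_\beta\}$ is not definable (it is neither a singleton nor an infinite $E_i$).

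From here the two required facts follow quickly. For the non-\joinmultisemilattice{} part I would test the two-element subset $S = \{\{g_\alpha\}, \{g_\beta\}\}$ of $\mathbb{P}_{ext}$: its upper bounds inside $\mathbb{P}_{ext}$ are exactly the definable sets containing both $g_\alpha$ and $g_\beta$, i.e. $S^u = \{E_i \mid i \in \mathbb{N}\}$, a strictly descending chain with no minimal element, so $S^u \neq \uparrow \msup(S) = \uparrow \emptyset = \emptyset$ and $S$ fails to have all its multi-suprema. For the second part I would take $A = \{g_\alpha, g_\beta\} \subseteq \mathcal{G}$; again the definable sets enclosing $A$ are exactly the $E_i$, hence $\uparrow A \cap \mathbb{P}_{ext} = \{E_i \mid i \in \mathbb{N}\}$ has no minimal element, $\overline{A} = min(\uparrow A \cap \mathbb{P}_{ext}) = \emptyset$, and therefore $\uparrow \overline{A} \cap \mathbb{P}_{ext} = \uparrow \emptyset \cap \mathbb{P}_{ext} = \emptyset$, which differs from $\uparrow A \cap \mathbb{P}_{ext} = \{E_i \mid i \in \mathbb{N}\}$.

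The bookkeeping here is routine (the extent computations and checking $\{g_\alpha, g_\beta\} \notin \mathbb{P}_{ext}$). The one point that actually needs thought — and the reason the poset is shaped the way it is — is making $\underline{\mathcal{D}}$ a complete \meetmultisemilattice{} (so that Theorem~\ref{thm:upperboundedmeetsemihyperlattice} applies) while still forcing an infinite descending chain of extents with non-definable meet: this is exactly why the common lower bounds of $\{a_\alpha, a_\beta\}$ must be the antichain $\{b_i \mid i \in \mathbb{N}\}$, rather than an ascending chain as in Example~\ref{example:holesindescriptions}, which would have destroyed the pattern-\multistructure{} property.
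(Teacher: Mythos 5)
Your construction is exactly the paper's: the same two-level poset, the same $\mathcal{G}$ and $\delta$, the same extent computations $ext(b_i)=\{g_\alpha,g_\beta\}\cup\{g_j\mid j\geq i\}$, and the same witnesses $S=\{\{g_\alpha\},\{g_\beta\}\}$ and $A=\{g_\alpha,g_\beta\}$, so the argument is correct and essentially identical. (One tiny remark: it is the \emph{ascending} chain condition that guarantees $S^\ell=\downarrow max(S^\ell)$, not the descending one, but your poset is chain-finite so both hold and the pattern-\multistructure{} claim is unaffected.)
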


\begin{mdframed}
\begin{proof}
Consider the pattern setup $\mathbb{P} = (\mathcal{G}, (\mathcal{D}, \sqsubseteq), \delta)$ where $(\mathcal{D}, \sqsubseteq)$ is the complete multilattice depicted in Fig.~\ref{fig:problemWithPextInPatternMultistructure}. We have: 
\begin{itemize}
\item $(\forall i,j \in \mathbb{N})$ $i \leq j \:\: \Leftrightarrow \:\: b_i \sqsubseteq a_j$.
\item $(\forall i \in \mathbb{N})$ $b_i \sqsubseteq a_\alpha \text{ and } b_i \sqsubseteq a_\beta$.
\end{itemize}
%\begin{eqnarray*}
%(\forall i,j \in \mathbb{N}) & &  i \leq j \:\: \Leftrightarrow \:\: b_i \sqsubseteq a_j \\
%(\forall i \in \mathbb{N}) & & b_i \sqsubseteq a_\alpha \text{ and } b_i \sqsubseteq a_\beta
%\end{eqnarray*}

Since $(\mathcal{D}, \sqsubseteq)$ is a complete multilattice (i.e. it is chain-finite), then $\mathbb{P}$ is a pattern multistructure. Consider now an infinite set $\mathcal{G} = \{g_i \mid i\in\mathbb{N}\} \cup \{g_\alpha, g_\beta\}$. The mapping $\delta$ is given by: $\delta(g_\alpha)$ $=$ $a_\alpha$,  $\delta(g_\beta)$ $=$ $a_\beta$ and $(\forall i \in \mathbb{N})$ $\delta(g_i)$ $=$ $a_i$.
%\begin{itemize}
    %\item $\delta(g_\alpha)$ $=$ $a_\alpha$.
    %\item $\delta(g_\beta)$ $=$ $a_\beta$.
    %\item $(\forall i \in \mathbb{N})$ $\delta(g_i)$ $=$ $a_i$
%\end{itemize}
%
To show that the poset $(\mathbb{P}_{ext}, \subseteq)$ is not a  \joinmultisemilattice{} we need to consider two definable sets in $\mathbb{P}_{ext}$ and show that the set of their common upper-bounds in $\mathbb{P}_{ext}$ does not have all its minimal elements. Let us compute $ext$ for every $d \in \mathcal{D}$:
\begin{itemize}
    \item $ext(a_\alpha)$ $=$ $\{g_\alpha\}$ and $ext(a_\beta)$ $=$ $\{g_\beta\}$.
    \item $(\forall i \in \mathbb{N})$ $ext(a_i)$ $=$ $\{g_i\}$ and  $(\forall i \in \mathbb{N})$ $ext(b_i)$ $=$ $\{g_\alpha, g_\beta\} \cup \{g_j \mid j \geq i\}$.
\end{itemize}

Consider now the set of definable sets $\{\{g_\alpha\}, \{g_\beta\}\}$, it is clear that the set of their common upper-bounds (in $\mathbb{P}_{ext})$ is given by:
\begin{eqnarray*}
\{\{g_\alpha\}, \{g_\beta\}\}^u = \big\{\left\{g_\alpha, g_\beta\right\} \cup \left\{g_j \mid j \geq i\right\} \mid i \in \mathbb{N}\big\}
\end{eqnarray*}

The set of upper bounds is hence an infinitely descending chain and hence does not have a minimal element, in other words:
$%\begin{eqnarray*}
min(\{\{g_\alpha\}, \{g_\beta\}\}^u) = \emptyset
$.
Hence, $(\mathbb{P}_{ext}, \subseteq)$ is not a \joinmultisemilattice{}.
The proof of the second part of the proposition is straightforward. Indeed, consider the non-definable set $A = \{g_\alpha, g_\beta\}$. We do have:
%\begin{eqnarray*}
$\uparrow A \cap \mathbb{P}_{ext} = ext[cov(A)] =  \big\{\left\{g_\alpha, g_\beta\right\} \cup \left\{g_j \mid j \geq i\right\} \mid i \in \mathbb{N}\big\}$.
%\end{eqnarray*}
Hence, $\overline{A} = min(\uparrow A \cap \mathbb{P}_{ext}) = \emptyset$. That is $\uparrow A \cap \mathbb{P}_{ext} \neq \uparrow \overline{A} \cap \mathbb{P}_{ext}$. 
\end{proof}

\end{mdframed}

\begin{comment}
\begin{proposition}\label{prop:subposetiscompletemultilattice}
Let $\mathbb{P} = (\mathcal{G},(\mathcal{D}, \sqsubseteq), \delta)$ be a pattern setup where $\mathcal{G}$ is a non-empty finite set and $\mathcal{D}_\delta = \bigcup_{A \subseteq \mathcal{G}} \minf(\delta[A])$. if $\mathbb{P}$ is a pattern \multistructure{} then the sub-poset $(\mathcal{D}_\delta, \sqsubseteq)$ is \emph{complete a multilattice}.

\end{proposition}
If $\mathcal{G}$ is not restricted into non finite sets.??? This proposition does not hold. Indeed, think about the poset $(\mathcal{D}, \leq)$ such that $\mathcal{D} = \{a,b\} \cup \{c_i \mid i\in \mathbb{N}\}$ where: $(\forall i \in \mathbb{N})$ $c_{i} \sqsubseteq c_{i+1}$, $c_i \sqsubseteq a$ and $c_i \sqsubseteq b$. The pattern setup $(\mathcal{G}, (\mathcal{D}, \sqsubseteq), \delta)$ such that $\mathcal{G} = \{g_i \mid i\in \mathbb{N}\}$ and $\delta : g_i \mapsto c_i$ is a pattern \multistructure{}. However, $\mathcal{D}_\delta = \mathcal{D}$ is not even a meet-semimultilattice $\{a, b\}$ does not have a multi-infima.
\end{comment}
\newpage
\section{Pattern Setup and Pattern Multistructure Completions\label{sec:completions}}
Example \ref{example:patternsetup:presentation} presents a pattern setup which is not a pattern structure. However, in FCA and pattern structure literature, it is recurrent to talk about sequential pattern structures \citep{DBLP:journals/ijgs/BuzmakovEJKNR16, DBLP:conf/icfca/CodocedoBKBN17}. In fact, instead of sequences, sets of sequences are considered which induce a richer description space. Same trick has been even used in the first paper introducing pattern structures \citep{DBLP:conf/iccs/GanterK01} concerning graph description space ordered by subgraph isomorphism. Such a technique that embeds a poset into another is called a \emph{completion} (see Definition~\ref{def:morhismproperties}). 
Different natural completions exist in the literature. For instance, the Dedekind-MacNeille completion \citep{davey2002introduction} takes an arbitrary poset to the smallest complete lattice containing it.  
The usual trick used in FCA and Pattern Structure literature to augment a base pattern setup to a pattern structure is tightly linked to the antichain completion presented below.

%\subsection{On Antichain Completion}
%\begin{definition}
%[\textbf{Completion}]
%Let be $(P, \leq)$ be a poset. A poset $(Q, \leqq)$ is said to be a \empha{completion} of $(P, \leq)$, \empha{contains} $(P, \leq)$ or \empha{embeds} $(P, \leq)$ if there exists an order embedding (see Definition \ref{def:morhismproperties}) from $(P, \leq)$ to $(Q, \leqq)$. %That is a mapping $\phi : P \hookrightarrow Q$ s.t. $(\forall x, y \in P)$ $x \leq y \Leftrightarrow \phi(x) \leqq \phi(y)$. 
%\end{definition}

\begin{definition}
%[\textbf{Antichain completion}]
The \empha{antichain completion} of $(P, \leq)$ is the poset $(\mathscr{A}(P), \leqq)$ s.t.:
\begin{itemize}
\item $\mathscr{A}(P)$ is the set of all antichains of $(P, \leq)$.
\item The order $\leqq$ is given by $(\forall A, B \in \mathscr{A}(P))$ $A \leqq B \Leftrightarrow \downarrow A \subseteq \downarrow B$.\footnote{Note that $\leqq$ does not induce an order in $\wp(P)$, but just a \emph{pre-order}, since the \emph{anti-symmetry} does not hold  (see~\cite{crampton2001completion}). Indeed, consider poset $(\{a,b\}, \leq)$ where $a \leq b$. Since $\downarrow \{a,b\} = \downarrow \{b\} = \{a,b\}$, we have $\{a,b\} \leqq \{b\}$ and $\{b\} \leqq \{a,b\}$. Yet, $\{a,b\} \neq \{b\}$. Therefor, $\leqq$ does not induce an antisymmetric relation on $(\wp(\{a,b\}), \leqq)$ but it is still reflexive and transitive (i.e. $\leqq$ induce a preorder on $\wp(\{a,b\})$. }
\item The order embedding $\varphi$ from $(P, \leq)$ to $(\mathscr{A}(P), \leqq)$ is given by
\begin{eqnarray*}
\varphi : P \to \mathscr{A}(P), a \mapsto \{a\} 
\end{eqnarray*}
\end{itemize}
\end{definition}

\cite{BoldiVigna2016} and \cite{crampton2001completion} had discussed the properties of such a completion. In fact, when $P$ have the \textbf{ACC}, $(\mathscr{A}(P), \leq)$ is a \emph{distributive lattice}, where the meet and the join are given by $S_1 \wedge S_2 = max(\downarrow S_1 \cap \downarrow S_2)$ and $S_1 \vee S_2 = max(S_1 \cup S_2)$, respectively. Moreover, $(\mathscr{A}(P), \leqq)$ is always a $\vee$-semilattice whatever the nature of the poset $(P, \leq)$,  but not necessarily a lattice. \cite{BoldiVigna2016} formulated a sufficient and necessary condition in order to have $(\mathscr{A}(P), \leqq)$ be a lattice:
%\begin{eqnarray*}
$\forall A,B \in \mathscr{A}(P) \:\: \exists C \in \mathscr{A}(P) \:\: \downarrow A \cap \downarrow B = \downarrow C$.
%\end{eqnarray*}

%It looks that Birkhoff knew that much earlier. I think this completion can be called Birkhoff completion%.

We take the opportunity here to underline an important link between the antichain completion and multilattices. Before expliciting this link in Propsosition~\ref{prop:antichaintomultilattices}, let us take a close look to the following Lemma.

\begin{lemma}\label{lemma:antichainsandmaxelements}
Let $(P, \leq)$ be an arbitrary poset and let $\mathscr{A}(P)$ be the set of its antichain. We have $\forall S \subseteq P$ : 
%For any subset $S \subseteq P$, if there exists an antichain for which $S = \downarrow C$ then $C = max(S)$. 
$%\begin{eqnarray*}
(\exists C \in \mathscr{A}(P))\:\:S = \downarrow C \Rightarrow C = max(S)
$.%\end{eqnarray*}
\end{lemma}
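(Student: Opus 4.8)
The plan is to derive the statement as a near-immediate corollary of Lemma~\ref{lemma:propertiesminandarrows}, combined with the elementary observation that an antichain coincides with its own set of maximal elements. So suppose $S = \downarrow C$ for some antichain $C \in \mathscr{A}(P)$; the goal is to show $C = max(S)$.

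First I would apply Lemma~\ref{lemma:propertiesminandarrows} in its $max$/$\downarrow$ form, namely $max(\downarrow T) = max(T)$ for any subset $T$, taking $T := C$. This gives $max(S) = max(\downarrow C) = max(C)$, i.e. the down-closure does not change which elements are maximal. Next I would show $max(C) = C$: the inclusion $max(C) \subseteq C$ is immediate from the definition of $max$, and conversely, for any $c \in C$, since $C$ is an antichain no element of $C$ distinct from $c$ is comparable to (in particular above) $c$, so $\uparrow c \cap C = \{c\}$, which means $c \in max(C)$. Chaining the two equalities yields $max(S) = max(C) = C$, which is exactly the claim.

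I do not expect a genuine obstacle here; the content is really just the interplay between the down-closure operator and maximal elements already recorded in Lemma~\ref{lemma:propertiesminandarrows}. The only points requiring a word of care are (i) invoking the $max$-version of that lemma (its statement is phrased for both $min$ and $max$, so this is licit) and (ii) spelling out, from the definitions of \emph{antichain} and of $max$, why $max(C) = C$.
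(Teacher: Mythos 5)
Your proof is correct, but it takes a genuinely different route from the paper's. The paper proves $C = max(S)$ directly by double inclusion: it first disposes of the case $S = \emptyset$, then shows $C \subseteq max(S)$ and $max(S) \subseteq C$ by two separate contradiction arguments, each invoking the antichain property of $C$ and the hypothesis $S = \downarrow C$. You instead factor the statement through Lemma~\ref{lemma:propertiesminandarrows}: $max(S) = max(\downarrow C) = max(C)$, and then observe that $max(C) = C$ because $C$ is an antichain (for $c \in C$, $\uparrow c \cap C = \{c\}$). Both steps are legitimate — the $max$/$\downarrow$ half of Lemma~\ref{lemma:propertiesminandarrows} is stated (and its proof sketched by duality) in the paper, and your argument handles $S = \emptyset$ uniformly since $\downarrow C = \emptyset$ forces $C = \emptyset$. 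What your approach buys is economy and transparency: it isolates the real content as ``down-closure preserves maximal elements'' plus ``an antichain equals its own set of maximal elements,'' rather than re-deriving both facts inline. What the paper's approach buys is self-containedness — it does not depend on having proved the $max$ version of the earlier lemma, which the paper only establishes by a remark that the dual argument works.
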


\begin{mdframed}
\begin{proof}
The case of $S = \emptyset$ is trivial since $\downarrow \emptyset = \emptyset$ and $ max(\emptyset) = \emptyset$.
Let be a nonempty set $S \subseteq P$ s.t. $(\exists C \in \mathscr{A}(P))\:S = \downarrow C$. Let us show that $C = max(S)$:
\begin{itemize}
    \item $C \subseteq max(S)$: let $c \in C \subseteq S$, suppose that $c \not\in max(S)$ that is $\exists x \in S$ s.t. $c < x$. Since $S = \downarrow C$ then $\exists c_2 \in C$ s.t. $x \leq c_2$. Thus $\exists c_2 \in C$ such that $c > c_2$ which is a contradiction with the fact that $C$ is an antichain.
    
    \item $C \supseteq max(S)$: Suppose $\exists a \in max(S)$ s.t. $a \not\in C$. We have $a \in S = \downarrow C$, that is: $\exists c \in C$ s.t. $a < c$ (since $a \not\in C$). However, since $S = \downarrow C$ then $C \subseteq S$. Thus, $\exists c \in S$ s.t. $a < c$ which is in contradiction with the fact that $a \in max(S)$.
\end{itemize}
This concluds the proof.
\end{proof}
\end{mdframed}

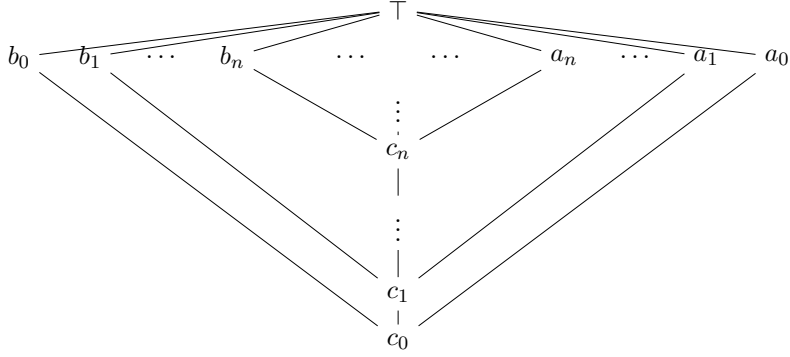
\begin{figure}[t]
\centering
\scalebox{.9}{\begin{tikzpicture}[scale=.7]
\node (c0) at (0,1) {$c_0$};
\node (c1) at (0,2) {$c_1$};
\node (ci) at (0,3.5) {$\vdots$};
\node (cn) at (0,5) {$c_n$};
\node (cinf) at (0,6) {$\vdots$};

\node (top) at (0,8) {$\top$};

\node (a0) at (8,7) {$a_0$};
\node (a1) at (6.5,7) {$a_1$};
\node (ai) at (5,7) {$\hdots$};
\node (an) at (3.5,7) {$a_n$};
\node (ainf) at (1,7) {$\hdots$};

\node (b0) at (-8,7) {$b_0$};
\node (b1) at (-6.5,7) {$b_1$};
\node (bi) at (-5,7) {$\hdots$};
\node (bn) at (-3.5,7) {$b_n$};
\node (binf) at (-1,7) {$\hdots$};

\draw (c0) -- (c1) -- (ci) -- (cn) -- (cinf);

\draw (c0) -- (a0) -- (top);
\draw (c1) -- (a1) -- (top);
\draw (cn) -- (an) -- (top);

\draw (c0) -- (b0) -- (top);
\draw (c1) -- (b1) -- (top);
\draw (cn) -- (bn) -- (top);

\end{tikzpicture}}
\caption{The Antichain completion of this complete lattice is not even a meet-semilattice  \label{fig:completelatticenotgoodantichaincompletion}}
\end{figure}

\begin{proposition}\label{prop:antichaintomultilattices}
Let $(P, \leq)$ be an arbitrary poset and let $(\mathscr{A}(P), \leqq)$ be its antichain completion: If $(\mathscr{A}(P), \leqq)$ is a lattice then $(P, \leq)$ is a \meetmultisemilattice{}. Moreover, if $(\mathscr{A}(P), \leqq)$ has a top element then $P = \downarrow max(P)$. 
\end{proposition}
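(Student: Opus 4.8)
The plan is to get the first assertion from a short induction on the size of a finite subset, and the second from a direct argument about the top antichain. Throughout I will lean on the Boldi--Vigna criterion recalled above — namely that $(\mathscr{A}(P),\leqq)$ is a lattice iff for all $A,B\in\mathscr{A}(P)$ there is $C\in\mathscr{A}(P)$ with $\downarrow A\cap\downarrow B=\downarrow C$ — together with Lemma~\ref{lemma:antichainsandmaxelements}, which then forces any such $C$ to be $max(\downarrow A\cap\downarrow B)$.

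For the first part, assume $(\mathscr{A}(P),\leqq)$ is a lattice. I would prove by induction on $|S|$ the auxiliary claim that for every nonempty finite $S\subseteq P$ there is an antichain $D$ with $S^\ell=\downarrow D$. The base case $S=\{s\}$ is immediate: $S^\ell=\downarrow s=\downarrow\{s\}$ and $\{s\}$ is an antichain. For $|S|=n\ge 2$, write $S=S'\cup\{s\}$ with $|S'|=n-1$; by the induction hypothesis $(S')^\ell=\downarrow D'$ for some antichain $D'$, hence $S^\ell=(S')^\ell\cap\downarrow s=\downarrow D'\cap\downarrow\{s\}$, and since $D'$ and $\{s\}$ are both antichains, the lattice criterion supplies an antichain $D$ with $S^\ell=\downarrow D$. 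Applying Lemma~\ref{lemma:antichainsandmaxelements} to $S^\ell=\downarrow D$ gives $D=max(S^\ell)=\minf(S)$, so $S^\ell=\downarrow\minf(S)$; that is, $S$ has all its multi-infima. Since $S$ was an arbitrary nonempty finite subset, $(P,\leq)$ is a \meetmultisemilattice{}.

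For the second part, let $T$ be the top element of $(\mathscr{A}(P),\leqq)$. For each $p\in P$ the singleton $\{p\}$ is an antichain, so $\{p\}\leqq T$, i.e.\ $\downarrow p=\downarrow\{p\}\subseteq\downarrow T$, whence $p\in\downarrow T$. This proves $P\subseteq\downarrow T$, and the reverse inclusion is trivial, so $P=\downarrow T$. Applying Lemma~\ref{lemma:antichainsandmaxelements} with $S=P$ and $C=T$ yields $T=max(P)$, and therefore $P=\downarrow T=\downarrow max(P)$.

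I do not anticipate a genuine obstacle; the only point needing a little care is checking at each induction step that the set whose down-closure is taken really is an antichain, which holds because singletons are antichains and Lemma~\ref{lemma:antichainsandmaxelements} always returns the $max$ of a down-closed set, itself an antichain. It is worth remarking that neither implication reverses: the complete lattice of Figure~\ref{fig:completelatticenotgoodantichaincompletion} is in particular a \meetmultisemilattice{} with a top element, yet its antichain completion fails to be even a meet-semilattice, so possessing a lattice (resp.\ upper-bounded) antichain completion is strictly stronger than being a \meetmultisemilattice{} (resp.\ having all maximal elements).
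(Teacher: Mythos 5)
Your proof is correct and follows essentially the same route as the paper's: both rest on the Boldi--Vigna lattice criterion for $(\mathscr{A}(P),\leqq)$ together with Lemma~\ref{lemma:antichainsandmaxelements}, the only difference being that you spell out as an explicit induction the step the paper dispatches with ``more generally'' (extending the binary intersection property to finite families), and you justify in detail that the top antichain $T$ satisfies $P=\downarrow T$, which the paper asserts directly. No gaps.
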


\begin{mdframed}
\begin{proof}
Let us start by showing the first property that is if $(\mathscr{A}(P), \leqq)$ is a lattice then $(P, \leq)$ is \meetmultisemilattice{}. 
We have $(\mathscr{A}(P), \leqq)$ is a lattice. Then, following \cite{BoldiVigna2016}, we have:
\begin{eqnarray*}
\forall A \subseteq \mathscr{A}(P) & \exists C \in \mathscr{A}(P) & \downarrow A \cap \downarrow B = \downarrow C
\end{eqnarray*}
More generally:
\begin{eqnarray*}
\forall \mathbb{S} \subseteq \mathscr{A}(P) \text{ finite and nonempty } & \exists C \in \mathscr{A}(P) & \bigcap_{A \in \mathbb{S}}  \downarrow A = \downarrow C
\end{eqnarray*}
Since $C$ is an antichain, we have $C = max\left(\bigcap_{A \in \mathbb{S}}  \downarrow A\right)$ (Lemma \ref{lemma:antichainsandmaxelements}), that is:
\begin{eqnarray}\label{eq:propertyAntichainToMeet}
\forall \mathbb{S} \subseteq \mathscr{A}(P) \text{ finite and non empty }  & & \bigcap_{A \in \mathbb{S}}  \downarrow A = \downarrow max(\bigcap_{A \in \mathbb{S}}  \downarrow A)
\end{eqnarray}

Let be $S \subseteq P$ be a non empty finite subset. We need to show that $S$ has all its multi-infima. That is $S^\ell = \downarrow max(S^\ell)$. We have $S^\ell = \bigcap_{s \in S} \downarrow \{s\}$. Since,  $(\mathscr{A}(P), \leqq)$ is a lattice we have according to equation (\ref{eq:propertyAntichainToMeet}):
\begin{eqnarray*}
S^\ell = \bigcap_{s \in S}  \downarrow \{s\} = \downarrow max(S^\ell)
\end{eqnarray*}

For the second part of the proposition, consider that $\mathscr{A}(P)$ has a top elements. That is $\exists C \in \mathscr{A}(P)$ $P = \downarrow C$. Hence, we have $P = \downarrow max(P)$ since $C$ is an antichain (Lemma \ref{lemma:antichainsandmaxelements}). In other words, $\emptyset$ has all its multi-infima. 
\end{proof}
\end{mdframed}

\begin{note}
One should note that the converse of Proposition~\ref{prop:antichaintomultilattices}. In fact, one can create complete lattices for which the antichain completion is not even a lattice. 
\figref{fig:completelatticenotgoodantichaincompletion} depicts such a complete lattice. Indeed, for antichains $A = \{a_i \mid i \in \mathbb{N}\}$ and $B = \{b_i \mid i \in \mathbb{N}\}$, we have $\downarrow A \bigcap \downarrow B = \{c_i \mid i \in \mathbb{N}\}$.
%, i.e. an infinitely ascending chain. 
Hence, there is no antichain $D \in \mathscr{A}(P)$ s.t. $\{c_i \mid i \in \mathbb{N}\} =  \downarrow D$ making the antichain completion not a meet-semilattice.
\end{note}

\subsection{On Pattern Setups Antichain Completions}
The main purpose of transforming a pattern setup to another one is to augment it to a pattern structure in order to use the different results related to this latter structure. We define below the most common trick used in the FCA literature which can be called \emph{pattern setup antichain completion}.  

\begin{definition}\label{def:antichaincompletion}
Let $\mathbb{P}=(\mathcal{G}, \underline{\mathcal{D}}, \delta)$ be a \emph{pattern setup}, the \empha{antichain completion} of $\mathbb{P}$ is the \underline{\emph{pattern setup}} denoted by $\mathbb{P}^{\triangledown}$ and given by:
\begin{eqnarray*}
\mathbb{P}^{\triangledown} = \left(\mathcal{G}, (\mathscr{A}(\mathcal{D}),\leqq), \delta^{\triangledown} :g \mapsto \,\{\delta(g)\}\right)
\end{eqnarray*}
\end{definition}

Earlier in this section, we have mentioned that $(\mathscr{A}(\mathcal{D}),\leqq)$ is a lattice when $(\mathcal{D}, \sqsubseteq)$ is a finite poset (i.e. a sufficient condition). 
However, given an arbitrary pattern setup on an infinite description space, $\mathbb{P}^{\triangledown}$ is not always guaranteed to be a pattern structure. Theorem \ref{thm:antichaincompletion} gives a necessary and sufficient condition on $\mathbb{P}$ that makes $\mathbb{P}^{\triangledown}$  a \emph{pattern structure}. Here $ext^{\triangledown}$ and $int^{\triangledown}$ denote extent and intent of $\mathbb{P}^{\triangledown}$,  respectively.

\begin{theorem}\label{thm:antichaincompletion}
Let $\mathbb{P}=(\mathcal{G}, \underline{\mathcal{D}}, \delta)$ be a pattern setup, the \emph{antichain completion} of $\mathbb{P}$ is a \emph{pattern structure} \emph{if and only if} $\mathbb{P}$ is a \emph{pattern \multistructure{}}. Moreover:
\begin{eqnarray*}
(\forall S \in \mathscr{A}(\mathcal{D})) \: ext^{\triangledown}(S) = \bigcap ext[S]   & \phantom{:::} &
(\forall A \subseteq \mathcal{G}) \: int^{\triangledown}(A) = \minf(\delta[A]) = cov^*(A)
\end{eqnarray*} 
%With $\downarrow$ and $\overline{\bigsqcap}$ denotes respectively the down-closure and \hyper-meet in $(\mathcal{D}, \sqsubseteq)$. 
%With $\downarrow$ denotes the down-closure on $(\mathcal{D}, \sqsubseteq)$;

We have $\mathbb{P}^{\triangledown}_{ext} = \left\{\bigcap S \mid S \subseteq \mathbb{P}_{ext}\right\}$ and $\bigcap \emptyset = \mathcal{G} \in \mathbb{P}^{\triangledown}_{ext}$. 
\end{theorem}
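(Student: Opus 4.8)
The plan is to compute the extent operator of $\mathbb{P}^{\triangledown}$ by hand, characterise the lower bounds of the sets $\delta^{\triangledown}[A]$ inside $(\mathscr{A}(\mathcal{D}),\leqq)$, and then recognise that the existence of a meet for each such set is precisely the pattern-\multistructure{} condition $cov(A)=\downarrow cov^*(A)$.

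First I would dispatch two facts valid for an arbitrary pattern setup. (i) On antichains the preorder $\leqq$ is antisymmetric: if $C,C'\in\mathscr{A}(\mathcal{D})$ with $\downarrow C=\downarrow C'$, then Lemma~\ref{lemma:antichainsandmaxelements} gives $C=max(\downarrow C)=max(\downarrow C')=C'$, so $(\mathscr{A}(\mathcal{D}),\leqq)$ is a bona fide description space and $\mathbb{P}^{\triangledown}$ a bona fide pattern setup. (ii) Unfolding the definitions, for $S\in\mathscr{A}(\mathcal{D})$ and $g\in\mathcal{G}$ one has $S\leqq\{\delta(g)\}\iff\downarrow S\subseteq\downarrow\delta(g)\iff(\forall s\in S)\,s\sqsubseteq\delta(g)\iff g\in\bigcap ext[S]$; hence $ext^{\triangledown}(S)=\bigcap ext[S]$ for every antichain $S$, with the convention $\bigcap ext[\emptyset]=\mathcal{G}$ matching $ext^{\triangledown}(\emptyset)=\mathcal{G}$. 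This already yields the first displayed formula.

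Then I would treat the equivalence. Any subset of $\delta^{\triangledown}[\mathcal{G}]$ equals $\delta^{\triangledown}[A]=\{\{\delta(g)\}\mid g\in A\}$ for some $A\subseteq\mathcal{G}$, and a short computation shows that $C\in\mathscr{A}(\mathcal{D})$ is a lower bound of $\delta^{\triangledown}[A]$ iff $\downarrow C\subseteq\bigcap_{g\in A}\downarrow\delta(g)=\delta[A]^\ell=cov(A)$. So $\mathbb{P}^{\triangledown}$ is a pattern structure iff for every $A\subseteq\mathcal{G}$ the family $\{C\in\mathscr{A}(\mathcal{D})\mid\downarrow C\subseteq cov(A)\}$ has a $\leqq$-maximum. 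If $\mathbb{P}$ is a pattern \multistructure{}, then $cov^*(A)=max(cov(A))$ is an antichain with $\downarrow cov^*(A)=cov(A)$, hence a lower bound of $\delta^{\triangledown}[A]$ dominating every other lower bound $C$ (from $\downarrow C\subseteq cov(A)=\downarrow cov^*(A)$ we get $C\leqq cov^*(A)$); thus the meet exists and $int^{\triangledown}(A)=cov^*(A)=\minf(\delta[A])$. Conversely, if the meet $M=int^{\triangledown}(A)$ exists, then $\downarrow M\subseteq cov(A)$ since $M$ is a lower bound, while for each $c\in cov(A)$ the singleton $\{c\}$ is a lower bound (its down-set $\downarrow c$ lies in the lower ideal $cov(A)$), so $\{c\}\leqq M$, i.e. $c\in\downarrow M$; hence $cov(A)=\downarrow M$ and Lemma~\ref{lemma:antichainsandmaxelements} forces $M=max(cov(A))=cov^*(A)$ and $cov(A)=\downarrow cov^*(A)$. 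As $A$ is arbitrary, $\mathbb{P}$ is a pattern \multistructure{}.

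Finally I would pin down $\mathbb{P}^{\triangledown}_{ext}$. By (ii), $\mathbb{P}^{\triangledown}_{ext}=\{\bigcap ext[S]\mid S\in\mathscr{A}(\mathcal{D})\}\subseteq\{\bigcap T\mid T\subseteq\mathbb{P}_{ext}\}$. For the reverse inclusion, given $T\subseteq\mathbb{P}_{ext}$ put $A=\bigcap T$; picking $d_E$ with $ext(d_E)=E$ for each $E\in T$ one checks $d_E\in cov(A)$, so $T\subseteq ext[cov(A)]$ and hence $A\subseteq\bigcap ext[cov(A)]\subseteq\bigcap T=A$, i.e. $\bigcap T=\bigcap ext[cov(A)]$; then $cov(A)=\downarrow cov^*(A)$ together with the order-reversal of $ext$ (equivalently Lemma~\ref{lemma:propertiesarrowsandembeddings}) give $\bigcap ext[cov(A)]=\bigcap ext[cov^*(A)]=ext^{\triangledown}(cov^*(A))$ with $cov^*(A)\in\mathscr{A}(\mathcal{D})$, so $\bigcap T\in\mathbb{P}^{\triangledown}_{ext}$; and $\bigcap\emptyset=\mathcal{G}=ext^{\triangledown}(\emptyset)\in\mathbb{P}^{\triangledown}_{ext}$. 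The step I expect to be the real obstacle is exactly this last one: re-expressing an arbitrary subfamily $T$ of $\mathbb{P}_{ext}$ as the image $ext[S]$ of a single antichain $S$ of descriptions requires replacing $cov(A)$ by its maximal elements, which is legitimate precisely because $\mathbb{P}$ is a pattern \multistructure{}; this passage from $cov$ to $cov^*$ is the same mechanism that powers the $\Leftarrow$ direction of the equivalence.
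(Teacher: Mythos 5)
Your proposal is correct and follows essentially the same route as the paper: you characterise the lower bounds of $\delta^{\triangledown}[A]$ in $(\mathscr{A}(\mathcal{D}),\leqq)$ as the antichains contained in $cov(A)$, exhibit $cov^*(A)=max(cov(A))$ as the meet when $\mathbb{P}$ is a \multistructure{}, and for the converse test the meet $M$ against the singleton lower bounds $\{c\}$, $c\in cov(A)$, to force $cov(A)=\downarrow M$ and then invoke Lemma~\ref{lemma:antichainsandmaxelements} — this is exactly the paper's argument, and your preliminary check that $\leqq$ is antisymmetric on antichains is a welcome detail the paper relegates to a footnote. The only place you genuinely diverge is the final identity $\mathbb{P}^{\triangledown}_{ext}=\{\bigcap T\mid T\subseteq\mathbb{P}_{ext}\}$: the paper proves the reverse inclusion by observing that $\mathbb{P}_{ext}\subseteq\mathbb{P}^{\triangledown}_{ext}$ via singletons $\{d\}$ and that the extent family of the pattern structure $\mathbb{P}^{\triangledown}$ is automatically closed under intersections, whereas you explicitly rewrite $\bigcap T$ as $ext^{\triangledown}(cov^*(\bigcap T))$, using $cov(A)=\downarrow cov^*(A)$ to trade the (generally non-antichain) family of chosen representatives $d_E$ for the antichain $cov^*(A)$. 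Both arguments are valid and rest on the \multistructure{} hypothesis; yours is more constructive, naming the witnessing antichain, while the paper's is shorter because it piggybacks on the closure-system property of pattern-structure extents.
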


\begin{mdframed}
\begin{proof}
%Recall that $\emptyset \in \mathcal{A}(\mathcal{D})$ by definition. We can write the order $(\mathcal{A}(\mathcal{D}), \leq)$ as follow:
%$(\forall S_1, S_2 \in \mathcal{A}(\mathcal{D})) \phantom{::} S_1 \leq S_2 \Leftrightarrow  \downarrow S_1 \subseteq \downarrow S_2\:\: (\:\Leftrightarrow S_1 \subseteq \downarrow S_2\:)$.
%This make $\emptyset$ below all the antichains (bottom element). Please recall also that $\downarrow \emptyset = \emptyset$.
%
Let us show that:
\begin{eqnarray*}
\mathbb{P} \text{ is a pattern \multistructure{} } & \Leftrightarrow & \mathbb{P}^{\triangledown} \text{ is a pattern structure}
\end{eqnarray*}
Recall that $\mathbb{P}^\triangledown$ is a pattern structure \textbf{iff} every subset of $\delta^\triangledown[\mathcal{G}]$ has a meet in $(\mathcal{A}(\mathcal{D}), \leq)$.  For $A \subseteq \mathcal{G}$ we have:
\begin{eqnarray*}
\delta^\triangledown[A]^\ell 
= \{S \in \mathcal{A}(\mathcal{D}) \mid (\forall g \in A)\,S \subseteq \downarrow \delta(g)\}
= \{S \in \mathcal{A}(\mathcal{D}) \mid \: S \subseteq \delta[A]^\ell\}
\end{eqnarray*}
where $\delta[A]^\ell$ and $\delta^\triangledown[A]^\ell$ denote respectively the lower bounds of $\delta[A]$ w.r.t. $\sqsubseteq$ and the lower bounds of of $\delta^\triangledown[A]$ w.r.t. $\leq$ (recall that $\delta[A]^\ell = \bigcap_{g \in A} \downarrow \delta(g)$). In this proof $\downarrow$ refers to the down-closure related to $\sqsubseteq$.

We show each implication independently:
\begin{itemize}[noitemsep,nolistsep]
\item ($\Rightarrow$) Let $A \subseteq \mathcal{G} : \delta[A]^\ell = \downarrow max(\delta[A]^\ell)$. Thus 
$
\delta^\triangledown[A]^\ell = \{S \in \mathcal{A}(\mathcal{D}) \mid \: S \subseteq \downarrow max(\delta[A]^\ell)\} 
= \{S \in \mathcal{A}(\mathcal{D}) \mid S\:\leq\:max(\delta[A]^\ell)\} 
$.
Since $max(\delta[A]^\ell) \in \mathcal{A}(\mathcal{D})$, so $max(\delta[A]^\ell)$ is the meet of $\delta^\triangledown[A]$ in $\mathcal{A}(\mathcal{D})$.

\item ($\Leftarrow$) $\mathbb{P}^{\triangledown}$ is a pattern structure is \textbf{equivalent} to say: $\forall A \subseteq \mathcal{G}$, $\delta^\triangledown[A]$ has a meet $M \in \mathcal{A}(\mathcal{D})$. That is, $\exists M \in \delta^\triangledown[A]^\ell$ for $A \subseteq \mathcal{G}$:
$\forall S \in \mathcal{A}(\mathcal{D}) :  S \subseteq \delta[A]^\ell \Leftrightarrow S \subseteq \downarrow M$.
Particularly, for $S = \{d\}$ with $d \in \mathcal{D}$, we deduce that
$
\forall d \in \delta[A]^\ell : d \in \downarrow M 
$. 
Thus, $\delta[A]^\ell \subseteq \downarrow M$. Moreover, since $M \subseteq \delta[A]^\ell$ ($M \in \delta^\triangledown[A]^\ell$) and $\downarrow$ is a closure operator on $(\wp(\mathcal{D}), \subseteq)$ we have by monotony $\downarrow M \subseteq \delta[A]^\ell \subseteq \downarrow M$ (note that $\downarrow \delta[A]^\ell = \delta[A]^\ell$). We conclude  that we have $\delta[A]^\ell = \downarrow M$. Using Lemma \ref{lemma:antichainsandmaxelements} we obtain
$\delta[A]^\ell = \downarrow max(\delta[A]^\ell)$. 
\end{itemize}
We conclude the equivalence.

\smallskip 
Let us now determine $int^{\triangledown}$ and $ext^{\triangledown}$. 
The previous proof has shown that for $A \subseteq \mathcal{G}$ the meet of $\delta^\triangledown[A]$ is $max(\delta[A])^\ell$. i.e.:
\begin{eqnarray*}
int^{\triangledown}(A) = max(\delta[A]^\ell) = cov^*(A)
\end{eqnarray*}

\smallskip 
For $ext^{\triangledown}$ operator, let $S \in \mathcal{A}(\mathcal{D})$. We have:
\begin{eqnarray*}
ext^{\triangledown}(S)
& = & \{g \in \mathcal{G} \mid S \leq \sigma(g)\}
 = \{g \in \mathcal{G} \mid S \subseteq \downarrow \delta(g)\} \\
& = & \{g \in \mathcal{G} \mid (\forall d \in S) \, d \sqsubseteq \delta(g)\} 
= \bigcap_{d \in S} ext(d) = \bigcap ext[S]
\end{eqnarray*}

Let us show that $\mathbb{P}^{\triangledown}_{ext} = \left\{\bigcap S \mid S \subseteq \mathbb{P}_{ext}\right\}$. By  definition of $ext^{\triangledown}$, the property $\mathbb{P}^\triangledown_{ext} \subseteq \left\{\bigcap S \mid S \subseteq \mathbb{P}_{ext}\right\}$ holds. For the inverse inclusion, it is sufficient to show that $\mathbb{P}_{ext} \subseteq \mathbb{P}^\triangledown_{ext}$ (since $(\mathbb{P}^\triangledown_{ext}, \subseteq)$ is closed under intersection). Let $A \in \mathbb{P}_{ext}$. $\exists d \in \mathcal{D}$ s.t. $A = ext(d)$. Since $\{d\} \in \mathcal{A}(\mathcal{D})$, and $ext^{\triangledown}(\{d\}) = ext(d) = A$. We conclude that $A \in \mathbb{P}^\triangledown_{ext}$. Hence, $\mathbb{P}^{\triangledown}_{ext} = \left\{\bigcap S \mid S \subseteq \mathbb{P}_{ext}\right\}$. 
\end{proof}
\end{mdframed}

%\begin{note}
%Please note that the condition that the antichain completion $(\mathscr{A}(\mathcal{D}),\leqq)$ is a lattice with a top element, is a sufficient condition, but not a necessary one. 
%Indeed, according to Proposition \ref{prop:antichaintomultilattices}, if $(\mathscr{A}(\mathcal{D}),\leqq)$ is a lattice with a top element then $(\mathcal{D} ,\sqsubseteq)$ is a \meetmultisemilattice{} having all its maximal elements. That is $\mathbb{P}$ is a pattern hyper-structure for all finite set of objects (see Theorem~\ref{thm:upperboundedmeetsemihyperlattice}). 
%\end{note}

\subsection{On Pattern Setups Direct Completions}

There is a completion that transforms any pattern setup to a pattern structure without demanding any additional property. 

\begin{comment}
\begin{definition}%[Dedekind-MacNeille completion \cite{roman2008lattices}]
\label{def:dmcompletion}
The family of subsets of $\mathcal{D}$ given by  
%\begin{eqnarray*}
$DM(\mathcal{D})  = \{A^\ell \mid A \subseteq \mathcal{D}\}$
%\end{eqnarray*}
is a $\bigcap$-structure (Moore Family) and is called the \emph{Dedekind-MacNeille Completion} of $\underline{\mathcal{D}}$. 
\end{definition}
\end{comment}

\begin{theorem}\label{thm:directCompletion}
The \empha{direct completion} of $\mathbb{P}=(\mathcal{G}, \underline{\mathcal{D}}, \delta)$ is the \underline{\emph{pattern structure}}:
%denoted by $\mathbb{P}^{\blacktriangledown}$ and given by:
\begin{eqnarray*}
\mathbb{P}^{\blacktriangledown}  & = &  (\mathcal{G}, (\wp(\mathcal{D}), \subseteq), \delta^{\blacktriangledown} :g \mapsto \;\downarrow \delta(g)) 
\end{eqnarray*}
Where $(\forall S \subseteq \mathcal{D}) \: ext^{\blacktriangledown}(S) = \bigcap ext[S]$ and $(\forall A \subseteq \mathcal{G}) \: int^{\blacktriangledown}(A) = cov(A) = \delta[A]^\ell$. The set of definable sets is given by $\mathbb{P}^{\blacktriangledown}_{ext} = \left\{\bigcap S \mid S \subseteq \mathbb{P}_{ext}\right\}$.
\end{theorem}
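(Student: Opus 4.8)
The plan is to verify, in order: (i) that $\mathbb{P}^{\blacktriangledown}$ is a pattern structure, (ii) the closed forms of $ext^{\blacktriangledown}$ and $int^{\blacktriangledown}$, and (iii) the description of $\mathbb{P}^{\blacktriangledown}_{ext}$. Every step is a matter of unwinding definitions, so I do not anticipate a genuine obstacle; the only thing requiring mild care is the bookkeeping around the empty set, i.e. the conventions $\bigcap\emptyset=\mathcal{D}$ in $\wp(\mathcal{D})$ and $\bigcap\emptyset=\mathcal{G}$ in $\wp(\mathcal{G})$. The whole argument is the $\wp(\mathcal{D})$-analogue of the corresponding parts of Theorem~\ref{thm:antichaincompletion}, with $\subseteq$ replacing $\leqq$ and unrestricted subsets replacing antichains.

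For (i) I would invoke Theorem~\ref{thm:upperboundedmeetsemiolattice}: the description space $(\wp(\mathcal{D}),\subseteq)$ of $\mathbb{P}^{\blacktriangledown}$ is a powerset lattice, hence a complete lattice, so for an arbitrary object set $\mathcal{G}$ and any map $\delta^{\blacktriangledown}\in(\wp(\mathcal{D}))^{\mathcal{G}}$ the triple $(\mathcal{G},(\wp(\mathcal{D}),\subseteq),\delta^{\blacktriangledown})$ is a pattern structure, the meet of any $\mathbb{S}\subseteq\delta^{\blacktriangledown}[\mathcal{G}]$ being $\bigcap\mathbb{S}$ (with $\bigcap\emptyset=\mathcal{D}$ as top). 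This in particular legitimizes writing $ext^{\blacktriangledown}$ and $int^{\blacktriangledown}$.

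For (ii), the intent: by Definition~\ref{def:extentIntentOperators}, $int^{\blacktriangledown}(A)=\bigsqcap\delta^{\blacktriangledown}[A]=\bigcap_{g\in A}\downarrow\delta(g)$ since the meet in $(\wp(\mathcal{D}),\subseteq)$ is intersection, and $\bigcap_{g\in A}\downarrow\delta(g)=\delta[A]^{\ell}=cov(A)$ by the formula for $S^{\ell}$ recalled in Section~\ref{sec:preliminaries} together with Definition~\ref{def:coverOperator} (the case $A=\emptyset$ giving $\mathcal{D}$ on both sides). For the extent, for $S\in\wp(\mathcal{D})$ one has $ext^{\blacktriangledown}(S)=\{g\in\mathcal{G}\mid S\subseteq\delta^{\blacktriangledown}(g)\}=\{g\in\mathcal{G}\mid S\subseteq\downarrow\delta(g)\}$, and $S\subseteq\downarrow\delta(g)$ holds iff $d\sqsubseteq\delta(g)$ for every $d\in S$, iff $g\in ext(d)$ for every $d\in S$, iff $g\in\bigcap_{d\in S}ext(d)=\bigcap ext[S]$ (the empty conjunction for $S=\emptyset$ giving $\bigcap\emptyset=\mathcal{G}$); hence $ext^{\blacktriangledown}(S)=\bigcap ext[S]$.

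For (iii), since $\mathbb{P}^{\blacktriangledown}$ is a pattern structure, $(\mathbb{P}^{\blacktriangledown}_{ext},\subseteq)$ is a closure system on $(\wp(\mathcal{G}),\subseteq)$ (cf. Proposition~\ref{prop:extintclosure} and Theorem~1 in~\cite{GanterW99}), hence closed under arbitrary intersections. First, $\mathbb{P}_{ext}\subseteq\mathbb{P}^{\blacktriangledown}_{ext}$: for $d\in\mathcal{D}$ the singleton $\{d\}\in\wp(\mathcal{D})$ satisfies $ext^{\blacktriangledown}(\{d\})=\bigcap ext[\{d\}]=ext(d)$; consequently $\{\bigcap T\mid T\subseteq\mathbb{P}_{ext}\}\subseteq\mathbb{P}^{\blacktriangledown}_{ext}$ by closure under intersection (including $\bigcap\emptyset=\mathcal{G}=ext^{\blacktriangledown}(\emptyset)\in\mathbb{P}^{\blacktriangledown}_{ext}$). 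Conversely, every element of $\mathbb{P}^{\blacktriangledown}_{ext}$ is $ext^{\blacktriangledown}(S)=\bigcap ext[S]$ for some $S\subseteq\mathcal{D}$, and $ext[S]\subseteq\mathbb{P}_{ext}$, so $\mathbb{P}^{\blacktriangledown}_{ext}\subseteq\{\bigcap T\mid T\subseteq\mathbb{P}_{ext}\}$. The two inclusions give the claimed equality $\mathbb{P}^{\blacktriangledown}_{ext}=\{\bigcap S\mid S\subseteq\mathbb{P}_{ext}\}$, and in particular $\mathcal{G}\in\mathbb{P}^{\blacktriangledown}_{ext}$, which completes the statement.
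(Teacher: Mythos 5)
Your proof is correct and follows essentially the same route as the paper's: compute $ext^{\blacktriangledown}$ and $int^{\blacktriangledown}$ by unwinding $\delta^{\blacktriangledown}(g)=\downarrow\delta(g)$, then get $\mathbb{P}^{\blacktriangledown}_{ext}=\{\bigcap S\mid S\subseteq\mathbb{P}_{ext}\}$ from $ext^{\blacktriangledown}(\{d\})=ext(d)$ together with closure under intersection. The only (harmless) difference is that you obtain the pattern-structure property by citing Theorem~\ref{thm:upperboundedmeetsemiolattice} for the complete lattice $(\wp(\mathcal{D}),\subseteq)$, whereas the paper verifies directly that $\delta[A]^{\ell}$ is the maximum of $\delta^{\blacktriangledown}[A]^{\ell}$; both amount to the same computation of the meet.
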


\begin{mdframed}
\begin{proof}
Let us show that the pattern setup $\mathbb{P}^{\blacktriangledown}$ is a pattern structure. Let $A \subseteq \mathcal{G}$.  We need to show that $\delta^\triangledown[A]$ has a meet in $(\wp(\mathcal{D}), \subseteq)$. We have:
\begin{eqnarray*}
\delta^\triangledown[A]^\ell = \{S \subseteq \mathcal{D} \mid (\forall g \in A) \: S \subseteq \downarrow \delta(A)\} = \{S \subseteq \mathcal{D} \mid S \subseteq \delta[A]^\ell\}
\end{eqnarray*}
Since $\delta[A]^\ell \in \delta^\triangledown[A]^\ell$, we conclude $\delta[A]^\ell$ is the meet of $\delta^\triangledown[A]^\ell$, Hence: 
\begin{eqnarray*}
int^{\blacktriangledown}(A) = \delta[A]^\ell = cov(A)
\end{eqnarray*}

For the extent operator $ext^{\blacktriangledown}$, let $S \in \wp(\mathcal{D})$. We have
\begin{eqnarray*}
ext^{\blacktriangledown}(S)
= \{g \in \mathcal{G} \mid S \subseteq \downarrow \delta(g)\}
= \{g \in \mathcal{G} \mid (\forall d \in S) \, d \sqsubseteq \delta(g)\} 
= \bigcap ext[S]
\end{eqnarray*}

\smallskip
Let us show that $\mathbb{P}^{\blacktriangledown}_{ext} = \left\{\bigcap S \mid S \subseteq \mathbb{P}_{ext}\right\}$. Thanks to the definition of  $ext^{\blacktriangledown}$, property $\mathbb{P}^\triangledown_{ext} \subseteq \left\{\bigcap S \mid S \subseteq \mathbb{P}_{ext}\right\}$ holds. For the inverse inclusion, it is sufficient to show that $\mathbb{P}_{ext} \subseteq \mathbb{P}^\blacktriangledown_{ext}$ (since $(\mathbb{P}^\blacktriangledown_{ext}, \subseteq)$ is closed under intersection). Let $A \in \mathbb{P}_{ext}$, $\exists d \in \mathcal{D}$ s.t. $A = ext(d)$. 
We have
$ext^\blacktriangledown(\{d\})
= \{g \in \mathcal{G} \mid \{d\} \subseteq \downarrow \delta(g)\}
 = \{g \in \mathcal{G} \mid d \sqsubseteq \delta(g)\}
= ext(d) = A
$.
We conclude that $A \in \mathbb{P}^\blacktriangledown_{ext}$ and $\mathbb{P}^{\blacktriangledown}_{ext} = \left\{\bigcap S \mid S \subseteq \mathbb{P}_{ext}\right\}$. 
\end{proof}

\end{mdframed}

\begin{example}\label{example:conceptlatticeofantichaincompletion}
\figref{fig:completionofrexample} depicts the concept lattice $\underline{\mathfrak{B}}(\mathbb{P}^\triangledown)$ of the \emph{antichain completion} of the pattern \multistructure{} $\mathbb{P}$ considered in \figref{fig:exampleSequences} (\emph{i.e.,} the description space is augmented with the top element $\top$). For any concept $(A, B)$, descriptions $d \in B$ in \textbf{bold} are those which whose $ext(d) = A$. Please notice that while description $``c"$ has for extent $\{g_1,g_2,g_4\}$, description $``c"$ does belong to the concept related to the extent $\{g_2,g_4\}$. Another important remark, are the underlined concepts. They represent concepts that are related to the non definable sets in $\mathbb{P}$ but still definable in $\mathbb{P}^\triangledown$, i.e.   $\{g_1,g_2\}$ and $\{g_1,g_2,g_3,g_4\}$ in $\mathbb{P}^\triangledown_{ext} \backslash \mathbb{P}_{ext}$. For instance, consider the intent of $\{g_1,g_2\}$ in the completion, each pattern $d$ has extent $ext(d) \supsetneq \{g_1,g_2\}$. Extent $\{g_1,g_2,g_3,g_4\}$ is \emph{non-coverable} in $\mathbb{P}$ and thus $int^\triangledown(\{g_1,g_2,g_3,g_4\}) = max(cov(\{g_1,g_2,g_3,g_4\})) = max(\emptyset) = \emptyset$. 
% In a more general case, we
\end{example}

\begin{figure}[t]
\centering 
%\begin{minipage}{0.82\linewidth} 
\scalebox{1}{\begin{tikzpicture}[scale=.7]
\node (0) at (-0,-1.5) {$(\pmb{\emptyset}, \{\pmb{\top}\})$};
\node (1) at (-6,0) {$(\mathbf{\{g_1\}}, \{\textbf{``cab"}\})$};
\node (2) at (0,0) {$(\mathbf{\{g_2\}}, \{\textbf{``cbba"}\})$};
\node (4) at (6,0) {$(\mathbf{\{g_4\}}, \{\textbf{``bbc"})\}$};
\node (12) at (-4,1.5) {$\underline{(\{g_1,g_2\}, \{``a", ``b", ``c"\})}$};
\node (24) at (4,1.5) {$(\mathbf{\{g_2,g_4\}}, \{\textbf{``bb"}, ``c"\}$};
\node (123) at (-3,3) {$(\mathbf{\{g_1,g_2,g_3\}}, \{\textbf{``a"}\})$};
\node (124) at (3,3) {$(\mathbf{\{g_1,g_2,g_4\}}, \{\textbf{``b"},\textbf{``c"}\})$};
\node (1234) at (0,4.5) {$\underline{(\{g_1,g_2,g_3,g_4\}, \emptyset)}$};

\draw (0) -- (1) -- (12) -- (123) -- (1234) -- (124) -- (24) -- (2) -- (0);
\draw (0) -- (2) -- (12) -- (124);
\draw (0) -- (4) -- (24);
\end{tikzpicture}}
%\end{minipage}
%\begin{minipage}{0.17\linewidth}
\caption{Concept lattice $\underline{\mathfrak{B}}(\mathbb{P}^\triangledown)$\label{fig:completionofrexample}.}
%\end{minipage}
\end{figure}

Note that while in Example \ref{example:conceptlatticeofantichaincompletion}, the size difference between the set of definable sets in the base pattern setup $\mathbb{P}_{ext}$ and the set of definable sets in the antichain completion $\mathbb{P}^\triangledown_{ext}$ is not large (i.e. $|\mathbb{P}^\triangledown_{ext} \backslash \mathbb{P}_{ext}|$ = 2). In some cases, the size of $\mathbb{P}^\triangledown_{ext}$ can be exponentially larger than $\mathbb{P}_{ext}$. Consider, for instance, the following example:

\begin{example}\label{example:exponentialcompletion}
Let $n \in \mathbb{N}$ with $n \geq 3$. We denote by $[n]$ the subset $[n] = \{1, 2, ..., n\}$. Let $\mathbb{P} = (\mathcal{G}, (\mathcal{D}, \subseteq), \delta)$ be the pattern setup  with $\mathcal{G} = \{g_i\}_{i \in [n]}$,  
\begin{eqnarray*}
\mathcal{D} = \{\{i\} \mid i \in [n]\} \cup \{[n] \backslash \{i\}  \mid i \in [n]\}
\end{eqnarray*}
and the mapping $\delta : g_i \mapsto [n] \backslash \{i\}$ for all $i \in [n]$. One can verify that we have $\mathbb{P}_{ext} = \{\{g_i\} \mid i \in [n]\} \cup \{\mathcal{G} \backslash \{g_i\} \mid i \in [n]\}$. Indeed, we have:
\begin{itemize}[topsep=1pt,noitemsep]
    \item  $(\forall i \in [n])$ $ext([n]\backslash\{i\}) = \{g_j \mid [n]\backslash\{j\} \subseteq \delta(g_j)\} = \{g_j \mid [n]\backslash\{j\} \subseteq [n]\backslash\{i\}\} = \{g_i\}$.
     \item  $(\forall i \in [n])$ $ext(\{i\}) = \{g_j \mid \{i\} \subseteq \delta(g_j)\} = \{g_j \mid \{i\} \subseteq [n]\backslash\{j\}\} = \mathcal{G} \backslash \{g_i\}$.
\end{itemize}
Hence, we have $|\mathbb{P}_{ext}| = 2 n$. However, according to Theorem \ref{thm:antichaincompletion}, we have
\begin{eqnarray*}
\mathbb{P}^\triangledown_{ext} = \left\{\bigcap S \mid S \subseteq \mathbb{P}_{ext}\right\} = \wp(\mathcal{G})
\end{eqnarray*}
since $\mathbb{P}_{ext}$ contains all the coatoms of $\wp(\mathcal{G})$ (i.e. $(\forall g \in \mathcal{G})$ $\mathcal{G} \backslash \{g\} \in \mathbb{P}_{ext}$) and the powerset lattice is coatomistic. It follows that $|\mathbb{P}^\triangledown_{ext}| = 2^n$. In other words, $\mathbb{P}^\triangledown_{ext}$ is exponentially larger than $\mathbb{P}_{ext}$. One should notice that the new description space associated to $\mathbb{P}^\triangledown$ is (order-)isomorphic to $(\wp([n]), \subseteq)$. 
\end{example}

%\newpage
\section{Conclusion and Discussion}
\label{sec:conclusiondiscussion}
In this paper, we provided a better understanding of the pattern setup framework. We have shown that while pattern structures demand a strong condition on the partially ordered set of descriptions, pattern setups do not require any additional property on the description space, which makes them rather permissive. We have introduced a new framework, namely pattern \multistructure{}, lying between both structures. Informally, pattern \multistructure s demands that the set of maximal common description resume properly the set of common descriptions of any subset of objects. Analogously to pattern structures, pattern \multistructure s are tightly linked to \multilattice s. We have shown also that the usual \emph{antichain completion} used in FCA literature to transform pattern setups, like sequence of itemsets ones, to pattern structure is applicable if and only if the considered pattern setup is a pattern \multistructure{}.  

An important open problem we are thoroughly working on is the following: 
``Given an arbitrary pattern setup $\mathbb{P}$ with a finite set of objects, generate its definable sets exhaustively and irredundantly". If the pattern setup is a pattern structure, the literature abounds of algorithms solving such a problem (e.g. \citep{ganter84,Kuznetsov1993,DBLP:journals/isci/OutrataV12}). However, no algorithm exists to solve such a problem for an arbitrary pattern setup. Indeed, the usual solution in FCA is to transform the pattern setup to a pattern structure via a completion (e.g. antichain completion for \citep{DBLP:conf/ilp/KuznetsovS05,DBLP:journals/ijgs/BuzmakovEJKNR16,DBLP:conf/icfca/CodocedoBKBN17}). Such a solution could create a (exponentially) larger search space. Other algorithms in the literature tackles this problem by enumerating support-closed patterns (e.g. \cite{DBLP:conf/sdm/YanHA03}). These latter algorihms may generate twice the same definable sets, since two support-closed patterns could have the same extent. Solving this problem will be the subject of future research.
\bibliographystyle{tfcad}
\bibliography{biblio}   % name your BibTeX data base

%\newpage
%\appendix
%\input{sections/proofs}
%\input{sections/multilatticemisconceptions}
%\input{sections/antichaincompletions}
\end{document}